\newtheorem{theorem}{Theorem}[section]
\newtheorem{proposition}[theorem]{Proposition}
\newtheorem{lemma}[theorem]{Lemma}
\newtheorem{corollary}[theorem]{Corollary}
\newtheorem{example}[theorem]{Example}
\newcommand{\extd}{\mathrm{d}}
\newcommand{\id}{\mathrm{id}}
\newcommand{\tens}{\otimes}
\newcommand{\del}{\partial}
\newcommand{\Vol}{\mathrm{Vol}}
\renewcommand{\>}{\rangle}
\newcommand{\<}{\langle}
\newcommand{\eps}{\epsilon}
\newcommand{\C}{\mathbb{C}}
\newcommand{\N}{\mathbb{N}}
\newcommand{\R}{\mathbb{R}}
\newcommand{\Z}{\mathbb{Z}}
\newcommand{\CJ}{\mathcal{J}}
\newcommand{\CC}{\mathcal{C}}
\newcommand{\CL}{\mathcal{L}}
\newcommand{\cg}{\mathfrak{g}}
\renewcommand{\imath}{{\mathfrak{i}}}
\renewcommand{\top}{{\mathrm{top}}}
\begin{document}

\title{Quantum variational calculus on a lattice}

\keywords{noncommutative geometry, lattice field theory, stress-energy tensor, conserved charge}

\address{
School of Mathematical Sciences\\
Queen Mary University of London\\  Mile End Rd, London E1 4NS, UK}

\email{s.majid@qmul.ac.uk, f.castelasimao@qmul.ac.uk}

\thanks{SM was supported by Leverhulme Trust project grant RPG-2024-177}
\author{Shahn Majid and Francisco Sim\~ao}

\date{\today\ Ver 1.1}

\begin{abstract} We solve the long-standing problem of variational calculus on a noncommutative space or spacetime for a significant class of models with trivial jet bundle. Our approach entails a  quantum version of the Anderson variational double complex $\Omega(J^\infty)$ and includes Euler-Lagrange equations and a partial Noether's theorem. We show in detail how this works for a free field on a $\Z^m$ lattice regarded as a discrete noncommutative geometry, obtaining the Klein-Gordon equation for a scalar field, including with a general metric and gauge field background, as the Euler-Lagrange equations of motion for an action.  In the case of a flat metric we also obtain an exactly on-shell conserved stress-energy tensor and Noether charges for a scalar field on the lattice and modified energy-momentum relations. \end{abstract}

\maketitle

\section{Introduction}

It is widely believed that spacetime at the level of the Planck scale is not in fact a continuum due to quantum gravity corrections. What exactly it is is unclear but one plausible `quantum spacetime hypothesis' that has garnered much attention since the early works such as~\cite{Ma:pla,MaRue, Dop,Hoo,Sny} is that it is better modelled by quantum or noncommutative geometry where coordinates need not commute among themselves. It has also become clear that this more general vision of geometry also includes the case of a discrete spacetime, such as a lattice, which turns out to be an exact noncommutative geometry in which differential forms and functions do not commute (this is because a finite difference differential is a bilocal object with a source and target of each `step'~\cite{BegMa}). We recently revisited gauge theory on a lattice from this point of view, with strikingly different results that go beyond the usual lattice gauge theory~\cite{MaSim2}.

What has been missing for over 30 years, however,  is the single most important thing needed for a convincing derivation of physics on a noncommutative space or spacetime, including on a lattice as an exact geometry. Namely, a formulation of variational calculus whereby an action functional as used in a path integral formulation of the quantum field theory on such spaces is connected to equations of motion on such spaces and to conserved quantities. Without this, current attempts to formulate physics, while promising, lack a complete picture even at the level of classical field theory let alone quantum field theory. In the present work, we propose a solution to this long-standing problem for a limited class of noncommutative geometries but one that includes the lattice case.

Our approach is two-fold. First, we build on previous work~\cite{MaSim1,Flo} to construct a jet bundle with sections $\CJ_E^\infty$ over a (possibly noncommutative) coordinate algebra $A$, where $E$ is a vector bundle. We focus  on the case where the latter is trivial with real sections $A\tens\R$, so that the jet bundle has sections $\CJ^\infty$ over $A$ as in~\cite{MaSim1}. Next, we suppose that the jet bundle itself is trivial and that the fibre is classical, so there is an algebra which we denote $C(J^\infty):=A\tens C(\R^\infty)$ where $C(\R^\infty)$ is a suitable class of functions (we will use polynomials in an infinite number of variables). In the classical case $A=C^\infty(M)$ for a smooth manifold $M$ but we will also allow that $M$ could be discrete and in this case we just write $C(M)$ for a suitable class of functions, such as with compact support. In such cases one has an actual space $J^\infty=M\times\R^\infty$ where the $\R^\infty$ encodes the higher tangents as the jet bundle fibre. This will be our main case in the present work, with noncommutative $A$ considered elsewhere. We still need $A$ to be equipped with a space of differential forms  $\Omega^1(A)$ as common to all main approaches to noncommutative differential geometry including~\cite{Con} (where it is derived from a Dirac operator), but for the discrete case this exactly means a graph structure on $M$, where the 1-forms are labelled by the arrows.

Secondly, which is the main result at a technical level,  we need to construct a noncommutative exterior algebra which we denote $\Omega(J^\infty)$ on $C(J^\infty)$ in such a way as to form a double complex with vertical and horizontal exterior derivatives matching the classical case. This then allows us to follow the ideas of Anderson, Zuckerman~\cite{And, Zuc} and others in the classical theory of variational calculus to write down Euler-Lagrange equations of motion associated to a choice of  Lagrangian formulated in terms of $\Omega(J^\infty)$, and (here we have only partial results) a Noether's theorem associated to symmetries. Our constructions are necessarily quite mathematical but our final results are self-contained and explicit. As far as we know even our conserved energy
\begin{equation*}
    \label{Ephi} E[\phi]=- {1\over 2}(\del_+\phi)(\del_-\phi)+{m^2 \over 2}\phi^2
\end{equation*}
of a free scalar field of mass $m$ on a lattice line $\Z$ appears to be new (this is the simplest case of Corollary~\ref{corQ}). Here, $(\del_\pm \phi)(i)=\phi(i\pm1)-\phi(i)$ and the claim is that $E[\phi]$ is constant on $\Z$ if $\phi$ obeys the discrete wave equation $(\Delta_\Z+m^2)\phi=0$, where $\Delta_\Z\phi=\del_++\del_-$ is the usual discrete Laplacian. There are bounded oscillatory  solutions here for all real values of $E[\phi]$ and $m<2$ relative to the discretization scale but real solutions only for $E[\phi]>0$. We will look at this look this from the point of view of a discrete time harmonic oscillator as a classical particle $q\colon \Z \to \R$ with frequency $\omega$ playing the role of $m$, see Figure~\ref{fig1}, and from a scalar field theory, see Figures~\ref{fig:DispEnMomEuc}-\ref{fig:VelMin}. On a literature search, we noted~\cite{Car}, but this does not appear to have an exactly conserved stress tensor.

An outline of the paper is as follows. Since the paper involves a number of less familiar methods, including  the variational double complex, we will build up the theory in several layers starting with a preliminary Section~\ref{secM} explaining in detail how everything proceeds in the case of classical field theory on a classical manifold $M$. Section~\ref{secZ} then gives the discrete version where $M$ is replaced by a lattice line $\Z$, as the easiest  case of our theory. Section~\ref{secZm} then covers $\Z^m$ and $\Z^{1,m-1}$ (i.e. Euclidean and Minkowski lattice cases) and more generally for the `spacetime' any discrete group $G$ provided we can construct the jet bundle. This includes all Abelian groups but in principle also some nonAbelian groups such as the group $S_3$ of permutations of three elements~\cite{MaSim1}. Section~\ref{sec1+1} focusses on  (1+1)-dimensional lattice scalar fields in both the Euclidean and Minkowski cases, showing  modified energy-momentum relations. For example, in the Minkowski case we have 
 \[ \mathcal{E}^2+ \left(\frac{m^2}{2} - \sqrt{1-\mathcal{P}^2}\right)^2-1=0\]
for the solution that is close to the continuum case. Note that even in the continuum~\cite{Del},  classical variational calculus does not lead to the familiar form in quantum field theory as this requires quantisation, but one can get to the familiar form assuming an energy density per quantum. We used a similar strategy in the lattice case pending a treatment of quantum field theory on the lattice. Section~\ref{secO} looks at scalar fields with a general metric and/or $U(1)$ gauge field on the lattice as background, using methods of quantum Riemannian geometry~\cite{BegMa} and lattice gauge theory~\cite{MaSim2}. We end with some concluding remarks in Section~\ref{secrem} about further work.

\section{Algebraic classical variational calculus on $M$. }\label{secM}

In this preliminary section, we will recap how the abstract theory of calculus of variations appears in the case of a classical manifold. This theory is known cf~\cite{And,Zuc} but we need to recall it in an algebraic form that we can then `quantize'. This also identifies all the geometric ingredients that we need, some of which (notably Lie derivative and interior products) are less clear for a general noncommutative geometry, but clear enough in a variety of examples. Another feature is that we work at the polynomial level at the level of the jet bundle coordinates.

\subsection{Jet bundle and variational bicomplex}

We focus on $E=M \times \R \to M$ as the bundle for matter fields so that $\Gamma(E)=C^\infty(M)=F$ is the `space of matter fields' as a linear space, where a function $\phi\in F$ is viewed as a map $M \to E$ sending $x \mapsto (x, \phi(x))$.  The jet bundle here is trivial and this allows us to write the jet prolongation map $j_\infty\colon F\to \Gamma(J^\infty)$ explicitly as 
\[ J^\infty= M \times \R^\N=\{(x, u,u_{i},u_{ij},\cdots)\},\] \[j_\infty(\phi)(x)=(x, \phi(x), \del_{i} \phi(x),\del_i \del_j \phi(x),\cdots)=\{(x, \del_{I}\phi)\}\]
where the indices $i$ run over the dimension of $M$, $I= \{i_1,\cdots, i_n\}$ are multi-indices and $\del_I = \del_{i_1}\cdots \del_{i_n}$. We {\em also} regard $x^i, u_I$ tautologically as coordinates on $J^\infty$ so that $u_I(j_\infty(\phi)(x))=(\del_I \phi)(x)$. We define the evaluation map 
\[ e_\infty\colon M\times F\to J^\infty,\quad (x,\phi)\mapsto j_\infty(\phi)(x)\] 
which we assume in an appropriate context is surjective and smooth, so that we get a pull-back inclusion at the level of the exterior algebra $\Omega$,
\[ e_\infty^* \colon \Omega(J^\infty){ \hookrightarrow} \Omega(M)\underline\tens \Omega(F).\]
The key idea here is that the right-hand side, as a graded tensor product, is automatically a double complex with usual horizontal and vertical differentials $\extd_M,\extd_F$, respectively, which then induces a double complex structure on $\Omega(J^\infty)$ with $\extd_H,\extd_V$ corresponding to these.  

To see what this looks like, we assume that $\Omega^1(J^\infty)=C^\infty(J^\infty) \{ \extd x^i, \extd u_I \}$ and that $\extd_H,\extd_V$ are induced as
\[ e_\infty^*(\Phi)=\Phi(j_\infty(\cdot)(\cdot)),\qquad  e_\infty^*( \extd_H\Phi)=\extd_M e_\infty^*(\Phi),\qquad e_\infty^*( \extd_V\Phi)=\extd_F e_\infty^*(\Phi)\]
for $\Phi \in C^\infty(J^\infty)$, so in particular
\[ e_\infty^*( \extd_H\Phi)={\del_i}\big(\Phi (j_\infty(\cdot)(\cdot))\big)\extd x^i=\left(\left({\del_i \Phi}\right) (j_\infty(\cdot)(\cdot))+ \left({\del \Phi\over\del u_I}\right)(j_\infty(\cdot)(\cdot)) \del_{i I}(\cdot)(\cdot)\right)\extd x^i,\]
where we evaluated at $\phi\in F$ and use the chain rule. However, we can compute
\[ e_\infty^*(\extd x^i)=\extd e_\infty^*(x^i)=\extd x^i,\quad e_\infty^*(\extd u_I)=\extd \del_I(\cdot)(\cdot)=(\del_{iI}(\cdot))(\cdot) \extd x^i +\extd_F \del_{I}(\cdot)(\cdot).\]
Then
\[e_\infty^*\left((\del_i \Phi )\extd x^i\right)=e_\infty^*\left(\del_i \Phi\right)\extd x^i ={\del_i \Phi}(j_\infty(\cdot)(\cdot))\extd x^i,
\]
\[e_\infty^*\left({\del\Phi\over\del u_I} u_{iI} \extd x^i\right)={\del\Phi\over\del u_I}(j_\infty(\cdot)(\cdot))\del_{iI} (\cdot)(\cdot)\extd x^i.
\]
Comparing with the above, we see that 
\[ \extd_H \Phi=(D_i \Phi)\extd x^i,\qquad D_i\Phi:={\del_i \Phi}+ \sum_I{\del\Phi\over\del u_I}u_{iI}.\]
Similarly, 
\[ e_\infty^*(\extd_V\Phi)=\extd_F \Phi(j_\infty(\cdot)(\cdot))={\del\Phi\over\del u_I}(j_\infty(\cdot)(\cdot))\extd_F\del_{I}(\cdot)(\cdot)=e_\infty^*\left(\left({\del\Phi\over\del u_I}\right)(\extd u_I- u_{iI}\extd x^i)\right)\]
so that 
\[ \extd_V \Phi=\sum_I {\del \Phi\over\del u_I}(\extd u_I - u_{iI}\extd x^i).\]
Here, $\extd_V u_I = \extd u - u_{iI}\extd x^i$ are called {\em contact 1-forms}. As a check, we see that $\extd =\extd_H + \extd_V$ holds on $C^\infty(J^\infty)$. This induces a factorisation of the deRham complex into a double complex
\[
\Omega(J^\infty) = \bigoplus_{p,q\geq 0} \Omega^{p,q}(J^\infty)
\]
where $p,q$ are the horizontal and vertical degrees respectively and elements of degree $p,q$ are of the form
\[  \sum \phi_0 (\extd_H \phi_1)\cdots (\extd_H \phi_p)\wedge (\extd_V\psi_1)\wedge\cdots\wedge(\extd_V \psi_q)\]
for $\phi_i, \psi_i \in C^\infty(J^\infty)$.
 
Note that it does not matter too much what $\extd_F \del_I$ is as it cancels in the calculation, but we can let $\phi=\int \delta_y \phi^y \extd y$ be an expansion in a delta-function $\delta_y(x)$ basis and coordinatise  $F$ by the pointwise values $\{\phi^y\ |\ y\in \R$\}. Then if $\Psi\in C^\infty(F)$, $\extd_F\Psi=\int\extd y {\del \Psi \over\ \del \phi^y}\extd \phi^y$ and hence in particular,
\[ e_\infty^*(\extd u_I)=(\del_{iI}(\cdot))(\cdot)\extd x^i +\int\extd y \left({\del\over\del \phi^y}\del_{I}(\cdot)\right)(\cdot)\extd \phi^y=(\del_{iI}(\cdot))(\cdot)\extd x^i +\int\extd y (\del_{I}\delta_y)(\cdot)\extd \phi^y\]	
where we assume the $\delta_y$ functions have been smoothed so that we can differentiate them as functions of $x$. The second term is constant on $F$ and we have also suppressed that it is a function on $M$. We do not have to chose a $\delta$-function basis and more reasonable here would be a plane-wave basis i.e. to coordinatise $\phi\in F$ by its Fourier coefficients. These matters can be made more precise by usual methods in mathematical physics but this is not needed for our purposes.

\subsection{Euler-Lagrange equations}

In order to use the above setting to derive the Euler-Lagrange (EL) equations, let $L \Vol \in \Omega^{\top,0}(J^\infty)$, where $L$ is a first-order Lagrangian $L = L(u,u_i) \in C^\infty(J^\infty)$. The action is then defined as
\[
S[\phi] := \int_M e_\infty^*(L \Vol )(x,\phi)
\]
and its variation reads
\[
\extd_F S[\phi] = \int_M e_\infty^*(\extd_V L \wedge \Vol)(x,\phi).
\]
Computing the RHS gives $\extd_V L \wedge \Vol = EL - \extd_H \Theta$ for certain $EL\in \Omega^{\top,1}(J^\infty)$ the EL form and $\Theta \in \Omega^{\top-1,1}(J^\infty)$ the boundary term~\cite{Zuc}. Explicitly, we have
\[
    \extd_V L \wedge \Vol
    = \left(\frac{\del L}{\del u} \extd_V u + \frac{\del L}{\del u_i} \extd_V u_i \right)\wedge \Vol
\]
but $\extd_H \extd_V u = - \extd_V \extd_H u = -\extd_V u \wedge \extd x^i$. Setting $\Vol_i \coloneqq \iota_{\del_i}\Vol$, we write
\begin{align*}
    \extd_V L \wedge \Vol &= \frac{\del L}{\del u} \extd_V u \wedge \Vol - \frac{\del L}{\del u_1} \extd_H \extd_V u \wedge \Vol_i
    \\
    &=  \frac{\del L}{\del u} \extd_V u \wedge \Vol - \extd_H\left(\frac{\del L}{\del u_i} \extd_V u \wedge \Vol_i\right) + \extd_H\left(\frac{\del L}{\del u_i}\right) \extd_V u \wedge \Vol_i \\
    & = \left(\frac{\del L}{\del u} - D_i \left(\frac{\del L}{\del u_i}\right)\right) \extd_V u \wedge \Vol
    - \extd_H\left(\frac{\del L}{\del u_i} \extd_V u \wedge \Vol_i \right).
\end{align*}
and we find the EL and boundary term to be
\begin{equation}
    \label{eq:ClELBound}
    EL = \left(\frac{\del L}{\del u} - D_i \left(\frac{\del L}{\del u_i}\right)\right) \extd_V u \wedge \Vol,
    \qquad
    \Theta = \frac{\del L}{\del u_i} \extd_V u \wedge \Vol_i.
\end{equation}
This result can be generalised to Lagrangians including higher derivative terms $u_I$~\cite{And}, but 1st order Lagrangians that are functions of one derivative are sufficient for us. We say that a field $\phi \in F$ satisfies the equations of motion, or is `on-shell', if
\begin{equation}
    \label{eq:EOMform}
    e^*_\infty(EL)(\cdot,\phi) = 0,
\end{equation}
or equivalently
\[
    e^*_\infty\left(\frac{\del L}{\del u} - D_i \left(\frac{\del L}{\del u_i}\right)\right)(\cdot,\phi) = 0.
\]

\begin{example}
    \label{ex:ClMech}
    \rm We consider (non-relativistic) classical mechanics in 1 spatial dimension by taking the base $M = \R$ as the time dimension. The fibre $\R$ of the bundle $\R\times \R$ then represents the space dimension and a particle trajectory is a section of this as specified by a function $\phi=q\colon \R \to \R$. The action is
    \[
        S[q] = \int \left({m \over 2} \dot q^2 - V(q)\right) \extd t
    \]
    with $\dot q := \del_t q$ and $V(q)$ a potential term, resulting in the Lagrangian $L = {m^2 \over 2} u^2_t- V(u)$. The EL form is $$EL = \left(-{\del V(u)\over \del u} - m  u_{tt}\right) \, d_V u \wedge \extd t,$$ which recovers Newton's equation for a particle in a potential $$m \ddot q = -{\del V(q)\over \del q}.$$ The boundary form is $\Theta = m u_t \extd_V u$. 
\end{example}

\begin{example} \rm\label{ELR}
Consider free scalar field theory on the base $M = \R^n$ with the Euclidean metric or $M = \R^{1,n-1}$ with the Minkowski metric, both denoted by $g$, then the action is
\[
    S[\phi] = \frac{1}{2} \int (g^{ij}\del_i \phi \del_j \phi  - m^2 \phi^2)\extd^n x
\]
corresponding to $L = \frac{1}{2} (g^{ij}u_iu_j - m^2 u^2)$. The EL form is $EL = -(m^2 u + g^{ij}u_{ij}) \extd_V u \wedge \extd^n x$ which recovers the Klein-Gordon equation $(g^{ij}\del_i\del_j + m^2)\phi = 0$ for the metrics in question. The boundary form is $\Theta =  g^{ij} u_j \extd_V u \wedge \Vol_i $.
\end{example}

\subsection{Symmetries and Noether's Theorem}
\label{sec:SymNoeCl}

We start with some general background following~\cite{And} for a classical manifold as base $M$ and $E\to M$ a vector bundle. Symmetries at the infinitesimal level are given by a vector field on the total space of $E$, in local coordinates
\[
    X_E = X^i \del_i + X^a \frac{\del}{\del u^a}
\]
where the index $a$ runs over the fiber dimension in $E$. Such vector fields have a canonical prolongation to vector field on $J^\infty$~\cite{And} given by  
\[ X_\infty = X^i \del_i + X^a_I \frac{\del}{\del u^a_I}, \qquad 
    X_I^a = D_I(X^a - X^i u^a_i) + X^i u^a_{i I},
\]
where $D_I = D_{i_1}\cdots D_{i_k}$ for the multiindex $I = \{i_1,\cdots,i_k\}$. This can be split into horizontal and vertical components $X_\infty = X_H + X_V$ with respect to the contact structure, 
\[
    X_H = X^i D_i, \qquad
    X_V = D_I (X^a - X^i u^a_i) \frac{\del}{\del u^a_I},
\]
such that $[\iota_{X_H},\extd_V] = [\iota_{X_V},\extd_H] = 0$ and
\[
    \iota_{X_H} \extd x^i = X^i,
    \qquad
    \iota_{X_V} \extd x^i = 0,
    \qquad
    \iota_{X_H} \extd u^a_I = 0,
    \qquad
    \iota_{X_V} \extd_V u^a_I = D_I (X^a - X^i u^a_i).
\]
Following~\cite[p.165ff]{Del}, one says that $X_E$ is a {\em symmetry} if there is a form $\sigma_X \in \Omega^{\top-1,0}(J^\infty)$ such that
\[
    \CL_{X_\infty} (L\Vol) = \extd_H \sigma_X.
\]
Then Noether's theorem states that
\begin{theorem}cf.~\cite[p.165ff]{Del} Given a symmetry $(X_E,\sigma_X)$, there is an associated (on-shell) {\em conserved current} $j_X := \sigma_X - \iota_{X_H} (L\Vol) - \iota_{X_V} \Theta \in \Omega^{\top-1,0}(J^\infty)$. Then
\[
    \extd_H j_X = \iota_{X_V} EL.
\]
\end{theorem}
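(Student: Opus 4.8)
The plan is to reduce everything to Cartan's magic formula for the prolonged vector field, $\CL_{X_\infty} = \iota_{X_\infty}\extd + \extd\,\iota_{X_\infty}$, combined with the bigrading already exploited in the Euler--Lagrange computation above.

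First I would split $\extd = \extd_H + \extd_V$ and $\iota_{X_\infty} = \iota_{X_H} + \iota_{X_V}$. Expanding Cartan's formula produces four graded-commutator terms; the two off-diagonal ones are precisely $[\iota_{X_H},\extd_V]$ and $[\iota_{X_V},\extd_H]$, which vanish by the relations recorded just before the theorem. Hence
\[ \CL_{X_\infty} = \big(\iota_{X_H}\extd_H + \extd_H\iota_{X_H}\big) + \big(\iota_{X_V}\extd_V + \extd_V\iota_{X_V}\big), \]
a sum of a horizontal and a vertical Lie derivative, each of bidegree $(0,0)$, so $\CL_{X_\infty}$ preserves the horizontal/vertical bidegree.

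Next I would apply this to $L\Vol\in\Omega^{\top,0}(J^\infty)$. Since $\Vol$ is of top horizontal degree, $\extd_H(L\Vol)=0$, so the horizontal piece contributes only $\extd_H\iota_{X_H}(L\Vol)$. For the vertical piece, $\iota_{X_V}(L\Vol)=0$ as $L\Vol$ has vertical degree $0$, while $\extd_V(L\Vol)=\extd_V L\wedge\Vol = EL - \extd_H\Theta$ by \eqref{eq:ClELBound}; contracting with $\iota_{X_V}$ and moving it past $\extd_H$ using $[\iota_{X_V},\extd_H]=0$ gives $\iota_{X_V}EL + \extd_H\iota_{X_V}\Theta$. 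Altogether,
\[ \CL_{X_\infty}(L\Vol) = \iota_{X_V}EL + \extd_H\big(\iota_{X_H}(L\Vol) + \iota_{X_V}\Theta\big). \]

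Finally I would impose the symmetry hypothesis $\CL_{X_\infty}(L\Vol)=\extd_H\sigma_X$ and solve for $\iota_{X_V}EL$, obtaining
\[ \iota_{X_V}EL = \extd_H\big(\sigma_X - \iota_{X_H}(L\Vol) - \iota_{X_V}\Theta\big) = \extd_H j_X, \]
which is the assertion. The computation is essentially formal once the bigrading is in place; the only genuinely structural inputs are the identities $[\iota_{X_H},\extd_V] = [\iota_{X_V},\extd_H] = 0$ and the factorisation $\extd_V(L\Vol)=EL-\extd_H\Theta$. In the classical case treated here these are available, so the main care is just the degree bookkeeping and signs; the point where I would expect real difficulty is the eventual noncommutative generalisation, where defining $\iota_{X_H},\iota_{X_V}$ on $\Omega(J^\infty)$ and re-establishing these commutation relations is the delicate part.
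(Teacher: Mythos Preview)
Your proof is correct and follows essentially the same route as the paper's: both rely on Cartan's formula split according to the bigrading, the vanishing of the cross terms $[\iota_{X_H},\extd_V]=[\iota_{X_V},\extd_H]=0$, and the factorisation $\extd_V(L\Vol)=EL-\extd_H\Theta$. The only difference is organisational: you expand $\CL_{X_\infty}(L\Vol)$ first and then invoke the symmetry hypothesis, whereas the paper starts from $\extd_H j_X$ and works forward, but the content is identical.
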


\begin{proof}
This is not new but as a model for later, we recall that the proof is to compute
\[
    \extd_H j_X = \CL_{X_\infty} (L\Vol) - \CL_{X_H}(L\Vol) - \iota_{X_V} \extd_H \Theta = \CL_{X_V} (L\Vol) - \iota_{X_V} \extd_H \Theta 
    = \iota_{X_V} EL
\] 
where we have used $[\iota_{X_H},\extd_V] = [\iota_{X_V},\extd_H] = 0$, $ \CL_{X} = \CL_{X_H} +  \CL_{X_V}$ and $\CL_{X_V} (L\Vol) = \iota_{X_V}\extd_V (L\Vol) = \iota_{X_V} EL - \iota_{X_V} \extd_H \Theta$.
\end{proof}

Here the current $j_X$ is conserved `on-shell' in the sense that it is $\extd_H$-closed up to the $EL$ form, i.e. $e^*_\infty(\extd_H j_X)(\cdot,\phi) = 0$ for fields $\phi$ which satisfy equation~\eqref{eq:EOMform}. To consider charges that are conserved  in a time direction, consider $M = \R \times \Sigma$ for some codimension 1 submanifold $\Sigma$ without boundary, with $\R$ denoting the time direction. In this setting, the coordinate on $\R$ will be denoted with the index $0$ and the coordinates on $\Sigma$ with the index $i$, so we write $j_X = j^0_X \Vol_0 + \sum^{\dim \Sigma}_{i=1} j^i_X \Vol_i$. We continue to focus on the Euclidean and Minkowski signatures and identify $\Vol_\Sigma$ with the volume form on $\Sigma$.

\begin{corollary}
    \label{cor:ClConCh}
The quantity $Q[\phi] = \int_\Sigma e_\infty^*(j^0_X)(\cdot,\phi) \Vol_\Sigma$ is conserved in time in the sense that $\del_0 Q$ vanishes on-shell. $Q$ is called a {\em conserved charge}. The quantities $\rho[\phi]\coloneqq e_\infty^*(j^0_X)(\cdot,\phi)$ and $J^i [\phi]\coloneqq e_\infty^*(j^i_X)(\cdot,\phi)$ are called charge and current densities, and satisfy the continuity equation $\del_0 \rho + \del_i J^i = 0$ when $\phi$ is on-shell.
\end{corollary}
\begin{proof}
We have $\extd_H j_X = (D_0 j^0_X + \sum^{\dim \Sigma}_{i=1} D_i j^i_X) \Vol$. Thus on-shell, we can compute
\[
    \del_0 Q[\phi] = - \int_\Sigma e_\infty^* (D_i j^i_X)(\cdot,\phi) \Vol_\Sigma = - \int_\Sigma \del_i(e_\infty^* (j^i_X) (\cdot,\phi)) \Vol_\Sigma  = 0
\]
as $\Sigma$ does not have a boundary. The continuity equation can be derived by directly applying $e^*_\infty$ to $\extd_H j_X$.
\end{proof}

After these general remarks, we now return to the example of the trivial bundle $E = M\times \R \to M$ to see how Noether's theorem and conserved charges work for a simple choice of Lagrangian. 

\begin{example}\rm
    For (non-relativistic) classical mechanics as in Example~\ref{ex:ClMech}, we have $M = \R$ (meaning that $\Sigma$ is a point). Here, the system is translation invariant, which can be encoded as $\iota_{X_\infty} (\extd t) = 1$, $\iota_{X_\infty} (\extd u_t) = 0$ so that $\iota_{X_V} (\extd_V u) = - u_{t}$, $\iota_{X_V} (\extd_V u_t) = - u_{tt}$, etc. This results in
    \[
        \sigma_X = 0, \qquad
        \iota_{X_H} (L\, \extd t) = {m\over 2} u^2_t - V(u),\qquad 
        \iota_{X_V} \Theta = - m u^2_t,
    \]
    leading to the conserved current $j_X = {m\over 2}u^2_t + V(u)$. The associated conserved charge then corresponds to the energy $Q = E = {m\over 2}\dot q^2 + V(q)$.
\end{example}

\begin{example} \rm
\label{ex:ClStressEnergy}
Consider $E = M\times \R\to  M$, with $M = \R^n$ with the Euclidean or Minkowski metric and a first-order Lagrangian $L = L(u,u_i)$ which is translation invariant. This symmetry can be encoded with $\iota_{X_\infty} (\extd x^i) = \epsilon^i$, $\iota_{X_\infty} (\extd u_I) = 0$ so that $\iota_{X_V} (\extd_V u_I) = - \epsilon^i u_{iI}$. Then
\begin{align*}
\CL_{X_\infty}(L \Vol) &= \CL_{X_H}(L \Vol) + \CL_{X_V}(L \Vol) = \extd_H(\iota_{X_H}(L\Vol)) + \iota_{X_V} \extd_V (L\Vol) \\
&= \extd_H L \wedge \epsilon^i \Vol_i -\epsilon^i
\left(\frac{\del L}{\del u} u_i + \frac{\del L}{\del u_j} u_{ij}\right) \Vol = 0
\end{align*}
so that $\sigma_X = 0$. Since $\Theta = \frac{\del L}{\del u_i} \extd_V u \wedge \Vol_i$ we have
\[
    j_X = \epsilon^i \left( \frac{\del L }{\del u_j} u_i - \delta^j_i L \right)  \Vol_j
\]
where we recognise the components of the stress-energy tensor as
\[
    T^i{}_j = \left(\frac{\del L }{\del u_i} u_j -\delta^i_j L \right).
\]
Divergence free then means
\[
    D_i T^i{}_j = D_i\left(\frac{\del L }{\del u_i}\right) u_j + \frac{\del L }{\del u_i} u_{ij} -  \frac{\del L }{\del u} u_j - \frac{\del L }{\del u_i} u_{ij} = -\left(\frac{\del L }{\del u}-D_i\left(\frac{\del L }{\del u_i}\right)\right)u_j, 
\]
which we see vanishes when the EL equations are fulfilled. 
For the specific case of the scalar field theory Lagrangian in Example~\ref{ELR}, we find for $T_{ij} = g_{ik}T^k{}_j$
\[
    T = \left(u_i u_j - \frac{1}{2}g_{ij} (g^{mn} u_m u_n - m^2 u^2)\right) \extd x^i \tens_{C^\infty(M)} \extd x^j \]
which on the level of the scalar field $\phi$ translates to
\[
    T[\phi]= \left(\del_i \phi \del_j \phi - \frac{1}{2}g_{ij} (g^{mn} \del_m \phi \del_n \phi - m^2 \phi^2)\right) \extd x^i \tens_{C^\infty(M)} \extd x^j
\]
where $T[\phi] \coloneqq e_\infty^*(T)(\cdot,\phi)$. It is easy to see  that $\nabla\cdot T=0$ if we use the equations of motion, where $\nabla\extd x=0$ is the flat connection on $\R$.

In the case $M = \R \times \Sigma$, the conserved charges are the energy and momenta corresponding to $T_{00}[\phi]$ and $T_{0i}[\phi]$ defined as
\begin{align*}
    \label{eq:ClEnergyScalarField}
    E[\phi] &= \int_\Sigma T_{00}[\phi] \, \extd^{n-1} x \\&= \int_\Sigma \left(\frac{1}{2}(\del_0 \phi)^2 -g_{00} \frac{1}{2} \sum^{\dim \Sigma}_{i,j=1} g^{ij}(\del_i \phi)(\del_j \phi) + g_{00} \frac{m^2}{2} \phi^2\right) \extd^{n-1} x,\nonumber
\end{align*}
\begin{equation*}
    \label{eq:ClMomentumScalarField}
    P_i[\phi] = \int_\Sigma T_{0i}[\phi] \, \extd^{n-1} x = \int_\Sigma \del_0 \phi \del_i \phi \, \extd^{n-1} x .
\end{equation*}
In particular, the energy for the Euclidean and Minkowski metrics, with $g_{00} = 1$, $g_{ii} = \pm 1$, respectively, is
\[
    E[\phi] = \int_\Sigma {1\over 2}\left((\del_0 \phi)^2 - g_{ii} \sum^{\dim \Sigma}_{i=1}(\del_i \phi)^2 + m^2\phi^2\right) \extd^{n-1} x,
\]
normally used in the Minkowski case.

Focusing now on the (1+1)-case for simplicity and choosing a plane wave solution $\phi(t,x) = A \cos(\omega t + g_{11} \kappa x)$ solving the Klein-Gordon equation $(\Delta + m^2)\phi = 0$ with the dispersion relations $m^2 = \omega^2 + g_{11}\kappa^2$, one recovers that the expected energy and momentum densities (the integrands of $E[\phi]$ and $P_x[\phi]$)
\[
    \mathcal{E}_{dens}[\phi] = {A^2 \over 2}\left(\omega^2 \sin^2(\omega t + g_{11} \kappa x) - g_{11} \kappa^2 \sin^2(\omega t + g_{11} \kappa x)+ m^2\cos^2(\omega t + g_{11} \kappa x)\right),
\]
\[
    \mathcal{P}_{x,dens}[\phi] = A^2 g_{11} \omega \kappa \sin^2(\omega t + g_{11} \kappa\cdot x).
\]
As is common in real scalar field theory, it is not directly clear that integrating these densities over $\Sigma$ will result in a time independent result. Instead, we can look at their spatial average over $[x_0,x_0 + 2\pi/\kappa]$ to find~\cite[p.194f]{Del}
\begin{align*}
    \< \mathcal{E}_{dens}[\phi]\> = \frac{\kappa}{2\pi} \int^{x_0 + 2\pi/\kappa}_{x_0} \mathcal{E}_{dens} \, \extd x = {A^2 \over 2} \omega^2,
\end{align*}
\begin{align*}
    \< \mathcal{P}_{x,dens}[\phi]\> = \frac{\kappa}{2\pi} \int^{x_0 + 2\pi/\kappa}_{x_0} \mathcal{P}_{x,dens} \, \extd x= g_{11}{A^2 \over 2} \omega \kappa,
\end{align*}
which makes the time independence of the energy and momenta clear. Using the dispersion relation we find that these densities are related as
\[
    \< \mathcal{E}_{dens}[\phi]\>^2 + g_{11} \< \mathcal{P}_{x,dens}[\phi]\>^2 = {A^2\over 2} \< \mathcal{E}_{dens}[\phi]\> \, m^2,
\]
and choosing the normalization $A = \sqrt{2/\< \mathcal{E}_{dens}[\phi]\>}$ of the field $\phi$ we then recover the familiar $\< \mathcal{E}_{dens}[\phi]\>^2 + g_{11} \< \mathcal{P}_{x,dens}[\phi]\>^2 = m^2$, 
indicating that the field $\phi$ has the average density of `one particle of mass $m$ per unit volume', with $\< \mathcal{E}_{dens}[\phi]\> = \omega$ and $\< \mathcal{P}_{x,dens}[\phi]\> = \kappa$. Note that the usual normalization $\sqrt{1/(2 \< \mathcal{E}_{dens}[\phi]\>)}$ found when quantizing the field via the CCR is recovered if we write the field $\phi(t,x) = (A/2) (e^{\imath (\omega t + g_{11} \kappa x)} + e^{-\imath (\omega t + g_{11} \kappa x)})$ in terms of exponentials.

\end{example}

\begin{example}\label{ex:ClU1} \rm For a different symmetry, consider a complex scalar field theory with \[S[\phi,\phi^*] = \frac{1}{2}\int_M (g^{ij}\del_i \phi^* \del_j \phi -m^2 |\phi|^2)\Vol\] which is invariant under the global $U(1)$ symmetry $\phi \mapsto e^{\imath \varphi}\phi$, $\phi^* \mapsto e^{-\imath \varphi}\phi^*$ for $\varphi \in [0,2\pi)$. In this example, $E$ is a trivial complex line bundle but we can also think of the fibre as $\R^2$ with a circular symmetry.
    
The jet bundle formalism is similar to before, but now with coordinates $u_I,\bar u_I$ to account for both fields, and the Lagrangian is 
\(
    L = \frac{1}{2} (g^{ij}\bar u_i u_j - m^2 \bar u u).
\)
The $U(1)$ symmetry can locally be encoded as
    \[
        \iota_{X_H} \extd x^i = 0, 
        \qquad 
        \iota_{X_V}(\extd_V u_I) = \imath \varphi u_{I}
        \qquad
        \iota_{X_V}(\extd_V \bar u_I) = -\imath \varphi \bar u_I.
    \]
Due to $\CL_{X_\infty}( L\Vol)= 0$, we have
    \[
        j_{U(1)} = -\iota_{X_V} \Theta = - \left(\frac{\del L}{\del u_i} \iota_{X_V} (\extd_V u)+ \frac{\del L}{\del \bar u_i} \iota_{X_V} (\extd_V \bar u)\right) \Vol_i
        = -\frac{\imath \varphi}{2} g^{im} (\bar u_m u - u_m \bar u)\Vol_i
    \]
    as expected. In terms of the field $\phi$ and its conjugate the conserved $U(1)$-current reads 
    \[
        j_{U(1)}[\phi,\phi^*] = -\frac{\imath \varphi}{2} g^{im} ( (\del_m \phi^*) \phi -  \phi^* (\del_m \phi)) \Vol_i.
    \]
    and for $M = \R \times \Sigma$, we find the conserved charge corresponding to the $U(1)$-symmetry to be
    \[
        Q_{U(1)}[\phi,\phi^*] = - \frac{\imath \varphi}{2}  g^{0m} \int_\Sigma ( (\del_m \phi^*) \phi -  \phi^* (\del_m \phi)) \extd^{n-1} x.
    \]
    with charge density
    \[
        J^m_{U(1)}[\phi,\phi^*] = - \frac{\imath \varphi}{2}  g^{mk} ((\del_k \phi^*) \phi -  \phi^* (\del_k \phi)).
    \]
    As in Example~\ref{ex:ClStressEnergy}, this Lagrangian is translation invariant for $g$ the Euclidean or Minkowski metrics, and a similar analysis results in the expressions
    \[
        T = (\bar u_i u_j -g_{ij} L) \extd x^i \tens_{C^\infty(M)} \extd x^j,
    \]
    \[
        T[\phi,\phi^*] = (\del_i \phi^* \del_j \phi -g_{ij} L) \extd x^i \tens_{C^\infty(M)} \extd x^j.
    \]
 We again focus on the (1+1)-dimensional case, where a plane wave solution $\phi(t,x) = A e^{- \imath (\omega t + g_{11} \kappa x)}$ of the Klein-Gordon equation satisfies the dispersion relation $m^2 = \omega^2 + g_{11}\kappa^2$. Then the energy, momentum, $U(1)$-charge and $U(1)$-current densities are recovered as expected by~\cite[p.194f]{Del}
    \[
        \mathcal{E}_{dens}[\phi,\phi^*] = T_{00}[\phi,\phi^*] = |A|^2 \omega^2 ,
    \]
    \[
        \mathcal{P}_{x,dens}[\phi,\phi^*] = T_{01}[\phi,\phi^*] =g_{11} |A|^2 \omega \kappa,
    \]
    \[
        \rho_{U(1),dens}[\phi,\phi^*] = |A|^2 \, \varphi \omega,
    \]
    \[
        J_{U(1),dens}[\phi,\phi^*] = |A|^2 \, \varphi \kappa.
    \]
 Note that $\mathcal{E}_{dens}[\phi,\phi^*]$, $\mathcal{P}_{x,dens}[\phi,\phi^*]$ are double the values of the real scalar field theory in Example~\ref{ex:ClStressEnergy} due to the existence of two different modes $\phi$, $\phi^*$ in this theory. In this case we find the relation
 \[
    \mathcal{E}_{dens}[\phi,\phi^*]^2 + g_{11} \mathcal{P}_{x,dens}[\phi,\phi^*]^2 = |A|^2 \mathcal{E}_{dens}[\phi,\phi^*] \, m^2,
\]
directly for the energy and momentum densities, which with $A =  \sqrt{1/\mathcal{E}_{dens}[\phi,\phi^*]}$ becomes
\[
    \mathcal{E}_{dens}[\phi,\phi^*]^2 + g_{11} \mathcal{P}_{x,dens}[\phi,\phi^*] = m^2.
\]
\end{example}
 
\section{Variational calculus on the integer lattice $\Z$}\label{secZ}

We now follow the exact same ideology but replacing the base manifold $M$ with $\Z$ as a discrete noncommutative geometry. This is an application of a general picture whereby $\Omega^1$ on a discrete set is precisely a graph, and here we consider $\Z$ as a graph, namely a one-dimensional lattice. Using that $\Z$ is a group, it is convenient to write the calculus in terms of a  basis of invariant 1-forms $e^a=e^\pm$ over the algebra of real valued functions $C(\Z)$. We will denote the algebra of functions in the base by $C(\Z)$ instead of using $\C(\Z)$ since we do not want to necessarily fix the field to be $\C$. In most of the chapter, we will work in fact with real-valued functions but we will also consider complex valued fields in some places, cf. Section~\ref{sec:U1symm}. We do not address the issue of $*$-structures which would be needed for a full understanding of the noncommutative geometry, since we do not yet have a theory of $*$-structures on jet bundles, as noted in~\cite{MaSim1}.

The exterior derivative $\extd\colon C(\Z)\to \Omega^1$ is then
\[ \extd_\Z f= \sum_a (\del_a f)e^a,\quad \del_\pm(f)(i)=f(i\pm 1)-f(i),\]
for $f\in C(\Z)$, where we adopt a suitable class of function on the lattice (say with bounded support to allow integration by parts). It is important to note that $e^\pm$ do not commute with functions. Rather $e^a f=R_a (f)e^a$ where $R_\pm(f)(i)=f(i\pm 1)$. It follows that $\del_\pm$ are not usual derivations but obey
\[ \del_\pm(fg)= (\del_\pm f)R_\pm (g)+f \del_\pm(g)\]
for all $f,g\in C(Z)$. Higher forms have a basis given by that of the Grassmann algebra on the $e^{a}$ (they anticommute among themselves) and $\extd=[\theta,\ \}$ where $\theta=\sum_a e^a$ and $[\ ,\ \}$ denotes a graded commutator.  This is how lattice geometry appears as an example of noncommutative geometry. It has been used in numerous works, see~\cite{Dim,Ma:par,BliMa,BegMa}, as an example of a general analysis for differentials on quantum groups~\cite{Wor}. In our case,  there is also a flat torsion-free connection $\nabla \colon \Omega^1 \to \Omega^1 \tens_{C(\Z)} \Omega^1$ characterised by $\nabla e^\pm=0$ (it is the QLC for the constant metric in the lattice in the sense of~\cite{BegMa}.) Notice that $\Omega^1(\Z)$ is 2-dimensional over the algebra because $\del_\pm$ are linearly independent as operators. For the same reason the top `volume' form ${\rm Vol}:=e^+\wedge e^-$ is a 2-form. We fix the above calculus $\Omega(\Z)$ and our first task is to extend it to a calculus $\Omega(J^\infty)$.

\subsection{Construction of the double complex $\Omega(J^\infty)$}
\label{secOmegaJ}
We start by considering $\Z \times \R \to \Z$, which in analogy to Example~\ref{ex:ClMech} represents a classical particle moving in one spacial dimension, where time is discretised. This is treated as a toy-model and later generalised to scalar field theory on the lattices $\Z^{m}$ and $\Z^{1,m-1}$ in Section~\ref{secZm}. The matter fields are sections of $\Z\times \R \to \Z$ that send $i\mapsto (i,\phi(i))$ for real-valued functions $\phi\in C(\Z)=F$ the field space. The symmetric tensors $\Omega_S$  are of the form
\[ \Omega_S=C(\Z)\{ 1, u_a e^a, u_{a_1a_2} e^{a_1}\tens e^{a_2}, \cdots \}=C(\Z)\{ u_I e^I \}\]
where $u_{a_1a_2}$ is symmetric and in the general case $I$ stands for $(a_1,\cdots,a_n)$ with the $u_{a_1\cdots a_n}$ totally symmetric tensors in the $a_i$ indices. When $n=0$ we just write $u$. These are specified by their values on on indices taken in (say) lexicographical ordering, for example in degree 2 we can specify $u_{++}, u_{+-}, u_{--}$. To be fully explicit, we let $I$ denote such an ordered set of indices, so the $u_I$ form a basis of such tensors (this is equivalent to saying there is a basis $e^I$ of symmetrized tensor products which in degree 2 say is  $e^+\tens e^+, e^+\tens e^-+e^-\tens e^+, e^-\tens e^-$). We take the $\{u_I\}$ as coordinates on the fibre, so
\[ J^\infty=\Z\times \R^\infty=\Z\times \R\times \R^2\times \R^3\times ...\]
with coordinates $(i, u_I)$. The jet prolongation map $j_\infty\colon C(\Z) \to \Gamma(J^\infty)$ is 
\[ j_\infty(\phi)= \phi+ \del_a \phi e^a+ \cdots=\sum_{I} \del_{I} \phi \, e^I\]
regarded as a function on $M=\Z$, i.e. the section itself sends $i$ to $(i, (\del_{I}\phi)(i))=:e_\infty(\phi)(i)$ where $e_\infty: \Z\times F\to J^\infty$ as before and pull back on this set map to (assumed an inclusion)  $\Omega(J^\infty)\subset \Omega(\Z)\underline\tens \Omega(F)$. 

The first thing that goes wrong is that one can no longer take a tensor product calculus $\Omega(J^\infty)=\Omega(\Z)\underline\tens \Omega(\R^\infty)$ with respect to which $e_\infty$ is differentiable in the sense that the pullback $e_\infty^*$ commutes with the differentials (this worked before due to everything having the same default differentiable structure on each copy of $\R$.) However, since $C(J^\infty)=C(\Z)[u_I]$ where we adjoin  commuting generators $\{u_I\}$ (i.e. tensor with the polynomial algebra $C(\R^\infty)$ with these generators), is a subalgebra of $C(\Z\times F)$ via $e_\infty^*$, we can still define $\Omega(J^\infty)$ as generated by this subalgebra and its inherited differentials, i.e. we must obtain a calculus just with certain noncommutation rules to be determined. In degrees 0,1 we have
\[ e_\infty^*(\Phi)(i,\phi)=\Phi(i, (\del_{I} \phi)(i)),\quad e_\infty^*(e^\pm)=\sum_i e_\infty^*(\delta_i \extd \delta_{i\pm1})=e^\pm,\] 
for $\Phi\in C(J^\infty)$, by the same arguments as before that if $\Phi=\psi\tens 1$ for $\psi\in C(Z)$ then $e_\infty^*(\Phi)(i,\phi)=(\psi\tens 1)(i, \del_I\phi)=\psi(i)$ so $e_\infty^*(\delta_j\extd\delta_{j+1})=\delta_j\extd \delta_{j+1}$ now viewed in $\Omega(\Z\times F)$. So we can identify the $\Omega(\Z)$ factors on the two sides. Next, 
\begin{align*} e_\infty^*(u_I)(i, \phi)&=u_I(i, (\del_{I} \phi)(i))=(\del_{I}\phi)(i),\\  e_\infty^*(\extd u_I)(i,\phi)&=(\extd e_\infty^*(u_I))(i,\phi)=  (\del_a \del_{I}\phi)(i)e^a+(\extd_F \del_{I})(\phi)(i).\end{align*}
It is not necessary but to be explicit, we can chose a basis $\{\delta_j\}$ of $F$ so that $\phi=\sum_j \delta_j\phi^j$ gives coordinates $\{\phi^j\}$ on $F$. Then $\Psi\in C^\infty(F)$ means functions $\Psi(\phi^j)$ and  $\extd \Psi= \sum_j {\del \Psi\over\del \phi^j}\extd \phi^j$. Then 
\[ e_\infty^*(\extd u_I)=(\del_{aI}(\cdot))(\cdot) e^a+ \sum_j (\del_I\delta_j)\extd \phi^j\]
where the dots indicate first to insert $\phi\in F$ then an element in $\Z$. Instead of proceeding with the chain rule as in the classical $M$ case, we work directly with the finite differences:
\begin{align*}
    e_\infty^*(\extd_H\Phi)&=\extd_M\Phi(j_\infty(\cdot)(\cdot))=\sum_a\left(e^a\Phi(\cdot, \del_I(\cdot))(\cdot ))- \Phi(\cdot, \del_I(\cdot)(\cdot))e^a\right) \\&= e_\infty^*\left(\sum_a\left(e^a\Phi- \Phi e^a\right)\right)
\end{align*} 
where sum over $a$, so 
\begin{equation}
    \label{dHPhi} 
    \extd_H \Phi=[\theta, \Phi],
\end{equation}
which is generally $\ne \extd_\Z \Phi$ precisely because $\Phi$ has $\R^\infty$ dependence and this is not the tensor product calculus. We see that $\extd_H=\extd_\Z$ on $C(\Z)$. Doing the same calculation for $\extd_V \Phi$ gives
\[ e_\infty^*(\extd_V \Phi)=\extd_F \Phi(\cdot,j_\infty(\cdot)(\cdot))={\del\Phi\over\del u_I}(\cdot, \del_I(\cdot)(\cdot))\sum_j \del_I\delta_j \extd \phi^j\]
where we keep $i$ fixed and vary $\phi$ in the $\delta_j$ direction for the coefficient of $\extd \phi^j$. Combining this with the above we can write this as
\begin{equation}
    \label{dVPhi} 
    \extd_V\Phi=\sum_I {\del \Phi\over\del u_I}\extd_V u_I,\quad \extd_V u_I=\extd u_I- [\theta, u_I].
\end{equation}
We see that $\extd_V=0$ on $C(\Z)$.

It remains to find commutation relations for working in $\Omega(J^\infty)$.  We calculate
\[ e_\infty^*(e^a u_I)=e^a \del_I(\cdot)(\cdot)= \big((\del_{aI})(\cdot)(\cdot)+ \del_I(\cdot)(\cdot)\big)e^a=e_\infty^*(u_{a I}+ u_I)e^a\]
so 
\begin{equation}\label{eurelns} e^a u_I= (u_I  + u_{a I} ) e^a,\end{equation}
where $a I$ denotes the standard lexicographic form of the indices with an extra $a$. For example, if $I=+-$ then $+I=++-$ and $-I=+--$. From this we conclude (summing $a$) that
\begin{equation}\label{dHu} \extd_H u_I =[\theta,u_I]= \sum_a u_{a I} e^a.
\end{equation}

Similarly, if $\phi\in C(\Z)\subset C(J^\infty)$  (i.e. constant on $\R^\infty$) then 
\[ e_\infty^*(\phi\extd u_I)=\phi\del_{aI}(\cdot)(\cdot) e^a+ \phi\sum_j (\del_I\delta_j)\extd \phi^j,\]
\[ e_\infty^*(\extd u_I \phi)=\sum_a\del_{aI}(\cdot)(\cdot)e^a\phi+ \sum_j (\del_I\delta_j)\extd \phi^j\phi=\sum_a\del_{aI}(\cdot)(\cdot)R_a(\phi)e^a + \phi\sum_j \left(\del_I\delta_j\right)\extd \phi^j\]
so we conclude that
\begin{equation}\label{duphi} (\extd u_I)\phi =\phi \extd u_I+ \sum_a (\del_a\phi)u_{a I} e^a\end{equation}
which indeed necessarily holds on applying $\extd$ to $[u_I,\phi]=0$. Similarly expanding
\begin{align*} e_\infty^*((\extd u_I )u_J)&=\sum_a\del_{aI}(\cdot)(\cdot)e^a \del_J(\cdot)(\cdot)=\sum_a \del_{aI}(\cdot)(\cdot) R_a\del_J(\cdot)(\cdot)e^a\\
&= e_\infty^*(u_J\extd u_I)+\sum_a \del_{aI}(\cdot)(\cdot) \del_{aJ}(\cdot)(\cdot)e^a\end{align*}
and recognising the last term, we obtain
\begin{equation}\label{duurelns}  (\extd u_I)u_J= u_J \extd u_I + \sum_a u_{a I} u_{a J}e^a.\end{equation}
We see that the $u_I$ do not inherit their classical commutative calculus, again due to $e^a$ not being central in the tensor product algebra.

The higher exterior algebra then follows.  By applying $\extd$ to the degree 1 relations, we have, using $\extd e^a=0$, 
\begin{equation}\label{edu} \{e^a, \extd u_I\}+ \extd u_{a I} \wedge e^a=0,\quad \{\extd \phi, \extd u_I\}+ \sum_a (\del_a\phi)\extd u_{a I}\wedge e^a+ \sum_{a,b}(\del_b\del_a\phi)u_{a I} e^b \wedge e^a=0,\end{equation}
where the second relation is redundant, i.e. can be proved by application of the other relations. We used that the $e^a$ anticommute to obtain it in this form. Next, using~\eqref{dHu} and  assuming $\extd_H$ is a derivation as it should be as inherited from $\Omega(\Z\times F)$ and that $\extd_H e^a=\extd e^a=0$, we have
\[ \extd_H\extd_H u_I=\extd_H\left(\sum_a u_{a I}e^a\right)=\sum_{a,b}u_{b a}e^b \wedge e^a=0\]
by anticommutativity  of the $e^a$. Hence these assumptions seem reasonable. Next, to impose $\extd_V\extd_H+\extd_H\extd_V=0$, as should also be inherited, is equivalent given the above to imposing  $\extd \extd_H+\extd_H\extd=0$, which on $u_I$ using~\eqref{edu} and~\eqref{dHu} reduces to imposing
\begin{equation}
    \label{dHinner} 
    \{\theta,\extd u_I\}= \extd_H \extd u_I
\end{equation}
where $\{\ ,\ \}$ denotes anticommutator. This seems reasonable and is consistent with $\extd_H=[\theta,\ \}$ being inner. This then implies that $\extd_V^2=0$ on $u_I$.  The $\extd_V,\extd_H$  then extend by the graded-derivation rules to all $\Omega(\Z\times \R^\infty)$ as one can check. Thus, we arrive at the following $\Omega(J^\infty)$.

\begin{theorem}\label{thm:Omega2rel} $C(J^\infty)$  generated by functions on $\Z$ and $\{u_I\}$ as above extends to an exterior algebra $\Omega(J^\infty)$ generated by $\Omega(\Z)$, the $u_I$ and additional generators $\extd u_I$ with relations 
    \begin{align*}
       [ e^a, u_I ]= u_{a I} e^{a}, \qquad
       [\extd u_I,  \phi ] = \sum_a u_{a I}(\del_a\phi) e^{a},\quad [\extd u_I,  u_J ] =\sum_a u_{a I}u_{a J}e^a
    \\
 \{ e^a, \extd u_I\} +\extd u_{a I} \wedge e^{a}=0,\quad
   \{\extd u_I , \extd u_J\}    +\sum_a (u_{a I} \extd u_{a J} + (\extd u_{a I}) u_{a J}) \wedge e^{a}=0.
    \end{align*}
 The relations including $\extd u_I$ can also be characterised in terms of vertical differentials obeying
    \[ [\extd_V u_I, \Phi] = 0,\quad \{e^a,\extd_V u_I \}+\extd_V u_{a I} \wedge e^{a}=0,\quad  \{\extd_V u_I ,\extd_V u_J \} = 0,\]
for all $\Phi \in C^\infty(J^\infty)$ and  horizontal differential $\extd_H=[\theta,\ \}$ with $\theta=\sum_a e^a$. \end{theorem}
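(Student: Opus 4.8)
The plan is to realise $\Omega(J^\infty)$ concretely and then read off a presentation. First I would \emph{define} $\Omega(J^\infty)$ to be the graded subalgebra of $\Omega(\Z)\underline\tens\Omega(F)=\Omega(\Z\times F)$ generated over $\Omega(\Z)$ by the elements $e_\infty^*(u_I)$ together with their differentials $\extd e_\infty^*(u_I)=e_\infty^*(\extd u_I)$. Since $\extd e^a=0$ and $\extd$ is a graded derivation, this subalgebra is closed under $\extd$, hence is a differential graded algebra, and in particular any relations it satisfies are automatically consistent. The degree-$0$ and degree-$1$ relations stated in the theorem are then exactly $(\ref{eurelns})$, $(\ref{duphi})$ and $(\ref{duurelns})$, all verified by the explicit pullback computations in the text. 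The first degree-$2$ relation is $(\ref{edu})$, obtained by applying $\extd$ to $(\ref{eurelns})$ and using $\extd e^a=0$; the second degree-$2$ relation is obtained by applying $\extd$ to $(\ref{duurelns})$ (using $\extd\extd u_I=0$ and $\extd(\omega f)=(\extd\omega)f-\omega\wedge\extd f$), and the same computation exhibits the second displayed identity of $(\ref{edu})$ as $\extd$ of $(\ref{duphi})$, i.e. as redundant.

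Next I would show that the five listed relations actually \emph{present} $\Omega(J^\infty)$. Let $\widetilde\Omega$ be the abstract differential graded algebra on generators $\Omega(\Z)$, $\{u_I\}$, $\{\extd u_I\}$ modulo these relations; the pullback induces a surjection of differential graded algebras $\widetilde\Omega\twoheadrightarrow\Omega(J^\infty)$, and the task is injectivity. The relations of the first line move every factor of $\Omega(\Z)$ to the left (replacing $u_I$ by $u_I+u_{aI}$ when passing an $e^a$, with correction terms when passing a function), the relations of the second line move the $u$'s to the left of the $\extd u$'s and past the $e^a$'s with sign, and the $e^a$ anticommute; so every element of $\widetilde\Omega$ is a combination of normal-form monomials $\omega\, u_{I_1}\cdots u_{I_r}\,(\extd u_{J_1})\wedge\cdots\wedge(\extd u_{J_s})$ with $\omega\in\Omega(\Z)$ and the $u$- and $\extd u$-indices sorted. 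A Bergman diamond-lemma check that this rewriting is confluent --- the only genuine ambiguity being the two reductions of $e^a e^b u_I$, which agree precisely because the $u$-tensors are totally symmetric so $u_{abI}=u_{baI}$ --- shows these monomials span, while their images under the (assumed injective) $e_\infty^*$, written in the explicit $\delta_j$-coordinates on $F$, are visibly linearly independent in $\Omega(\Z\times F)$. Hence the surjection is an isomorphism. This completeness/normal-form step is the part I expect to be the most laborious.

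Finally I would set up the double complex and the vertical reformulation together. Define $\extd_H:=[\theta,\ \}$ on all of $\Omega(J^\infty)$; as an inner graded commutator it is a graded derivation, and $\extd_H^2=\tfrac12[\{\theta,\theta\},\ \}=0$ because $\theta^2=e^+\wedge e^-+e^-\wedge e^+=0$. Put $\extd_V:=\extd-\extd_H$. Then $\extd_V$ is a graded derivation with $\extd_V\phi=\extd_\Z\phi-[\theta,\phi]=0$ for $\phi\in C(\Z)$ (since $[\theta,\phi]=\sum_a(\del_a\phi)e^a=\extd_\Z\phi$), and $\extd_V u_I=\extd u_I-\sum_a u_{aI}e^a$ by $(\ref{dHu})$; the graded Jacobi identity together with $\extd\theta=0$ give $\extd_V^2=-(\extd\extd_H+\extd_H\extd)=0$ and $\extd_H\extd_V+\extd_V\extd_H=0$, so $\Omega(J^\infty)=\bigoplus_{p,q}\Omega^{p,q}(J^\infty)$ is a double complex with $\extd=\extd_H+\extd_V$. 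The equivalence of presentations is then a substitution: writing $\extd u_I=\extd_V u_I+\sum_a u_{aI}e^a$, the relation $[\extd_V u_I,\Phi]=0$ holds on the generators $\phi$ and $u_J$ by $(\ref{duphi})$, $(\ref{eurelns})$, $(\ref{duurelns})$ and hence on all $\Phi\in C^\infty(J^\infty)$; $\{e^a,\extd_V u_I\}+\extd_V u_{aI}\wedge e^a=0$ follows from the first degree-$2$ relation and $(\ref{eurelns})$, with $u_{abI}=u_{baI}$ again being what makes it close; and $\{\extd_V u_I,\extd_V u_J\}=0$ follows from the second degree-$2$ relation once the $\sum_a$-correction terms are expanded and cancelled. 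Substituting back recovers the five original relations, so the two descriptions coincide.
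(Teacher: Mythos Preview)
Your approach largely parallels the paper's: both verify the degree-$1$ relations via the explicit pullback computations preceding the theorem, obtain the degree-$2$ relations by applying $\extd$, and derive the $\extd_V$-form of the relations by direct substitution of $\extd u_I=\extd_V u_I+\sum_a u_{aI}e^a$. Your treatment of the double complex is arguably cleaner than the paper's: you take $\extd_H=[\theta,\ \}$ as a definition and deduce $\extd_H^2=0$ and $\extd\extd_H+\extd_H\extd=0$ from $\theta^2=0$ and $\extd\theta=0$ respectively, whereas the paper presents the inner form (\ref{dHinner}) as an assumption to be imposed and checked. The paper, for its part, invokes the maximal-prolongation construction to guarantee existence of the exterior algebra abstractly, which sidesteps any need to realise it concretely.

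There is, however, a genuine gap in your completeness step. You claim that the normal-form monomials are linearly independent because their images under the ``assumed injective'' $e_\infty^*$ are independent in $\Omega(\Z\times F)$. But $e_\infty^*$ is \emph{not} injective already at degree zero: since $\partial_+\partial_-=-\partial_+-\partial_-$ as operators on $C(\Z)$, one has $e_\infty^*(u_{+-}+u_++u_-)=0$, while $u_{+-}$, $u_+$, $u_-$ are algebraically independent coordinates on the fibre $\R^\infty$ in the paper's setup. So your surjection $\widetilde\Omega\twoheadrightarrow\Omega(J^\infty)$ (with the latter defined as the image in $\Omega(\Z\times F)$) cannot be an isomorphism, and the diamond-lemma independence argument collapses. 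This does not damage the theorem itself---the paper explicitly does \emph{not} claim the listed relations are complete (``We do not exclude the possibility of additional relations in higher degree''), and existence follows already from the fact that the relations hold in the image, or alternatively from maximal prolongation. You should drop the presentation claim, or else argue independence of the normal forms internally to $\widetilde\Omega$ without appealing to $e_\infty^*$.
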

\begin{proof} The degree 1 commutation relations were obtained as~\eqref{eurelns},~\eqref{duphi} and~\eqref{duurelns} but one can also check directly that they give a first-order calculus on $C(J^\infty)$. We then apply $\extd$ to the degree 1 relations as explained and necessarily get an exterior algebra (the canonical maximal prolongation~\cite{BegMa} modulo further relations in $\Omega(\Z)$).  We also explained  natural decomposition to $\extd_H+\extd_V$ such that  $\extd_H$ remains inner so that $\extd u_I=\sum_a u_{a I}e^a + \extd_V u_I$. In this case,
\begin{align*}[ \extd_Vu_I, \phi]&=[\extd u_I,\phi]-\sum_a u_{a I}[e^a,\phi]=[\extd u_I,\phi]-\sum_{a} u_{a I}(\del_a\phi)e^a=0,\\
 [\extd_Vu_I, u_J]&=[\extd u_I,u_J]-\sum_a u_{a I}[e^a,u_J]=0,\end{align*}
using the commutation relations of $e^a$ with $\phi\in C(\Z)$ and with $u_J$. Hence $\extd_V u_I$ are central as claimed. Similarly substituting $\extd u_I$ gives the $\{e^a,\extd_Vu_I\}$ relation. We then use these results on expanding out both sides of the $\{\extd u_I,\extd u_J\}$ relation in terms of $\extd_V u_I$ and $e^a$ etc., to eventually obtain the last relation.
 \end{proof}

This is the natural quadratic extension of the degree 1 relations and splitting of $\extd$ that keeps $\extd_H$ inner. We see that $e^a, \extd_V u_I$ together generate a closed algebra, which we denote $\Lambda_{J^\infty}$ (over the field) and that the relations allow everything to be ordered with functions to the left so that this algebra provides a natural basis over $C(J^\infty)$ in each degree. The $\extd_V u_I$ are, moreover, a central basis as well as form as well as anticommute among themselves. We do not exclude the possibility of additional relations in higher degree beyond ones implied by these degree 2 relations, but we do not appear to need them.

\subsection{Euler-Lagrange equations}
\label{sec:ELeq}
We proceed in a similar manner as the classical case, now with $L \Vol$, for $L = L(u,u_a)$ a first-order Lagrangian and $\Vol = e^+\wedge e^-$. For the following calculations it is useful to define the operators
\[
R_a \colon \Omega^k(J^\infty) \to \Omega^k(J^\infty), \qquad
D_a \colon \Omega^k(J^\infty) \to \Omega^k(J^\infty),
\]
via 
\[
e^a \omega = (-1)^{|\omega|}R_a(\omega) e^a,
\qquad 
\extd_H \omega = (-1)^{|\omega|} D_a (\omega) e^a,
\]
as specified by the commutation relations in Section~\ref{secOmegaJ}. These are related as $D_a = R_a - \id$ and have the following properties
\begin{equation}
    \label{eq:propRD}
    R_a(\omega\wedge \eta) = R_a(\omega) \wedge R_a(\eta), \qquad
    D_a(\omega\wedge \eta) = D_a(\omega)\wedge R_a \eta + \omega \wedge D_a(\eta).
\end{equation}
Note that in degree 0, $R_a(u_I)= u_I+ u_{a I}, D_a(u_I)=u_{a I}$. These operations commute with each other and satisfy the following identities
\[
R_a R_{a^{-1}} = \id, \qquad D_a R_{a^{-1}} = R_{a^{-1}}D_a = -D_{a^{-1}}, \qquad D_{a} D_{a^{-1}} = -D_a - D_{a^{-1}},
\]
so that for example $u_{aa^{-1}} =  D_{a^{-1}} D_{a}u = -u_a - u_{a^{-1}}$. Equations~\eqref{duphi}-\eqref{duurelns} can now be combined as 
\[ [\extd u,\Phi]=\sum_a u_{a I}(D_a\Phi)e^a\]
for all $\Phi\in C(J^\infty)$.
We also choose elements  $\Vol_a$ such that $\Vol = e^a \wedge \Vol_a$ (no sum) (in our example, we will let  $\Vol_+ = e^-$, $\Vol_- = -e^+$).

\begin{theorem}
\label{thm:ELandBoundaryTerm}
For $L = L(u,u_a)$, we define  (with sums over $a$),
\[
EL = \left(\frac{\del L}{\del u} + \sum_a D_{a^{-1}}\left(\frac{\del L}{\del u_a}\right)\right) \extd_V u \wedge \Vol, \qquad
\Theta = \sum_a R_{a^{-1}}\left(\frac{\del L}{\del u_a}\right)\extd_V u \wedge \Vol_a.
\]
Then $\extd_V(L \Vol) = EL - \extd_H \Theta$.
\end{theorem}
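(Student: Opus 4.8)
The plan is to mimic the classical computation from Section~\ref{secM} but carefully tracking the noncommutativity encoded in the operators $R_a, D_a$ and the ordering conventions $\Vol = e^a\wedge\Vol_a$. First I would expand $\extd_V(L\Vol)$ using $\extd_V = 0$ on $\Omega(\Z)$ and the fact that $\extd_V u_I$ are central and square-zero (Theorem~\ref{thm:Omega2rel}); since $L = L(u,u_a)$ depends only on $u$ and the first derivatives, the derivation rule (\ref{dVPhi}) gives
\[
\extd_V(L\Vol) = \left(\frac{\del L}{\del u}\,\extd_V u + \frac{\del L}{\del u_a}\,\extd_V u_a\right)\wedge \Vol.
\]
The term $\frac{\del L}{\del u}\,\extd_V u\wedge\Vol$ is already in the desired $EL$ form, so the whole content is to rewrite $\frac{\del L}{\del u_a}\,\extd_V u_a\wedge\Vol$ as a horizontal-exact piece plus a multiple of $\extd_V u\wedge\Vol$.

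The key identity to establish is the lattice analogue of $\extd_H\extd_V u = -\extd_V u\wedge e^a$, i.e.\ an expression for $\extd_V u_a$ in terms of $\extd_H(\extd_V u)$. From $\extd_H = [\theta,\ \}$ and (\ref{dHu}), applied in vertical degree $1$, I expect $\extd_H(\extd_V u) = \{\theta,\extd_V u\} = \sum_a (\extd_V u_a)\wedge e^a$ after using the anticommutation relation $\{e^a,\extd_V u\} + \extd_V u_a\wedge e^a = 0$ from Theorem~\ref{thm:Omega2rel} (note $(\extd_V u)_{aI}$ with $I$ empty is $\extd_V u_a$). So, projecting onto the $e^a$ component, $\extd_V u_a \wedge e^a$-type terms are controlled by $\extd_H(\extd_V u)$. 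Then I substitute, wedge with $\Vol_a$ (using $e^a\wedge\Vol_a = \Vol$), and apply the Leibniz rule for $\extd_H$ to split
\[
\frac{\del L}{\del u_a}\,(\extd_H\extd_V u)\wedge\Vol_a = \pm\,\extd_H\!\left(\frac{\del L}{\del u_a}\,\extd_V u\wedge\Vol_a\right) \mp \left(\extd_H\frac{\del L}{\del u_a}\right)\wedge \extd_V u\wedge\Vol_a,
\]
being careful about the signs coming from moving $\extd_H$ past the vertical $1$-form $\extd_V u$ and about the fact that $\extd_H\frac{\del L}{\del u_a} = \sum_b D_b(\frac{\del L}{\del u_a})e^b$ does not commute past $\extd_V u$ trivially — but $\extd_V u$ is central, so that part is fine; the subtlety is rather that $e^b\wedge\extd_V u\wedge\Vol_a$ collapses to a top form only when $b$ pairs correctly with the leftover index, which is where $D_{a^{-1}}$ rather than $D_a$ enters, exactly as in the identities $D_aR_{a^{-1}} = -D_{a^{-1}}$ listed before the theorem.

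Concretely, I would track the reindexing: a term $D_b(\partial L/\partial u_a)\,e^b\wedge\extd_V u\wedge\Vol_a$ is nonzero in top degree only for an appropriate relation between $a$ and $b$, and resumming over $a$ (with $b$ determined) produces the coefficient $\sum_a D_{a^{-1}}(\partial L/\partial u_a)$, giving the stated $EL$; the collected total-$\extd_H$ piece is $\extd_H\big(R_{a^{-1}}(\partial L/\partial u_a)\,\extd_V u\wedge\Vol_a\big) = \extd_H\Theta$, where the shift from $\partial L/\partial u_a$ to $R_{a^{-1}}(\partial L/\partial u_a)$ arises when one moves the coefficient function from the right of $e^a$ (as it naturally appears after $\extd_H\extd_V u = \sum_a \extd_V u_a\wedge e^a$ and then commuting $\partial L/\partial u_a$ leftward) to the left of the whole expression, using $e^a f = R_a(f)e^a$ one index at a time. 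The main obstacle I anticipate is precisely this bookkeeping of which of $R_a$, $R_{a^{-1}}$, $D_a$, $D_{a^{-1}}$ appears and with what sign: unlike the classical case where $\extd_H$ commutes freely with coefficient functions, here every pass of $e^a$ through a function in $C(J^\infty)$ costs an application of $R_a$, and the wedge with $\Vol_a$ forces a specific index to survive, so the natural-looking classical manipulation $\frac{\del L}{\del u_a}\extd_H(\cdots)\wedge\Vol_a = \extd_H(\cdots) \mp (\extd_H\frac{\del L}{\del u_a})\wedge(\cdots)$ must be done with the noncommutative Leibniz rule (\ref{eq:propRD}) and checked index-by-index. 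Once the signs and index shifts are pinned down on the two-element index set $\{+,-\}$ (where one can also just verify the formula by brute force as a sanity check), the identity $\extd_V(L\Vol) = EL - \extd_H\Theta$ follows, and I would close by remarking that nothing used more than the degree $\le 2$ relations of Theorem~\ref{thm:Omega2rel}.
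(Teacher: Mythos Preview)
Your overall strategy---integration by parts to peel $\extd_V u_a$ back to $\extd_V u$ plus an $\extd_H$-exact term---is exactly the paper's. But two of your detailed explanations are wrong, and if you followed them literally you would not reach the stated formula.

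First, the ``centrality'' of $\extd_V u$ in Theorem~\ref{thm:Omega2rel} is only with respect to functions $\Phi\in C(J^\infty)$; it does \emph{not} mean $e^b$ and $\extd_V u$ anticommute. The relation is $\{e^a,\extd_V u\}=-\extd_V u_a\wedge e^a$, so passing $e^b$ through $\extd_V u$ produces an extra $\extd_V u_b$ term. Second, the wedge $e^b\wedge\Vol_a$ survives precisely when $b=a$ (check it on $\Vol_+=e^-$, $\Vol_-=-e^+$), so your Leibniz split yields $D_a$, not $D_{a^{-1}}$. Combining both points, your manipulation gives
\[
R_a(f)\,\extd_V u_a\wedge\Vol \;=\; -\extd_H\!\big(f\,\extd_V u\wedge\Vol_a\big)\;-\;D_a(f)\,\extd_V u\wedge\Vol,
\]
which is an identity in $f$ but is \emph{not} directly the desired expression for $f=\partial L/\partial u_a$: the left side carries $R_a(f)$, not $f$.

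The paper avoids this tangle by never commuting $e^b$ past $\extd_V u$ at the wedge level. It works entirely with the operators $D_a,R_a$ of (\ref{eq:propRD}) applied to $\extd_V u$ itself (using $D_a(\extd_V u)=\extd_V u_a$), and the whole content is the one-line lattice integration-by-parts identity
\[
f\,D_a(\omega)\;=\;D_a\!\big(R_{a^{-1}}(f)\,\omega\big)\;+\;D_{a^{-1}}(f)\,\omega,
\]
valid for any $f\in C(J^\infty)$ and $\omega\in\Omega(J^\infty)$, which follows from $D_a(\alpha\beta)=D_a(\alpha)R_a(\beta)+\alpha D_a(\beta)$ together with $D_aR_{a^{-1}}=-D_{a^{-1}}$ and $R_{a^{-1}}-D_{a^{-1}}=\id$. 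Taking $f=\partial L/\partial u_a$, $\omega=\extd_V u$, wedging with $\Vol$, and finally converting $D_a(\,\cdot\,)\wedge\Vol=-\extd_H((\,\cdot\,)\wedge\Vol_a)$ gives the theorem in one stroke. So the $R_{a^{-1}}$ and $D_{a^{-1}}$ do not come from the wedge pairing with $\Vol_a$ as you suggest, but from the $R_a$ sitting inside the noncommutative Leibniz rule.
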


\begin{proof}
This is a matter of computation. We start with
\[
\extd_V(L \Vol)
    = \left(\frac{\del L}{\del u} \extd_V u + \sum_a \frac{\del L}{\del u_a} \extd_V u_a\right)\wedge \Vol
    = \left(\frac{\del L}{\del u} \extd_V u + \sum_a \frac{\del L}{\del u_a} D_a \extd_V u\right)\wedge \Vol
\]
where in the last equality we have used $\extd_H \extd_V u = \{\theta,\extd_V u\} = -\extd_V u_{a} \wedge e^a$ and thus $D_a \extd_V u = \extd_V u_{a}$. Using now the Leibniz rule for $D_a$ \eqref{eq:propRD}
\begin{align*}
    &= \left(\frac{\del L}{\del u} \extd_V u + \sum_a D_a\left(\frac{\del L}{\del u_a} \extd_V u\right) - \sum_a D_a\left(\frac{\del L}{\del u_a}\right) R_a \extd_V u\right)\wedge \Vol\\
    &= \left(\frac{\del L}{\del u} \extd_V u + \sum_a D_a\left(\frac{\del L}{\del u_a} \extd_V u\right) + \sum_a R_a\left( D_{a^{-1}}\left(\frac{\del L}{\del u_a}\right)\extd_V u\right)\right)\wedge \Vol
\end{align*}
where in the second line we have used $-D_a = R_aD_{a^{-1}}$ and the property \eqref{eq:propRD} for $R_a$. Writing now $R_a = D_a +\id$ results in
\begin{align*}
    &= \left(\frac{\del L}{\del u} \extd_V u + \sum_a D_a\left(\frac{\del L}{\del u_a} \extd_V u\right) + \sum_a D_a\left( D_{a^{-1}}\left(\frac{\del L}{\del u_a}\right)\extd_V u\right) + \sum_a D_{a^{-1}}\left(\frac{\del L}{\del u_a}\right)\extd_V u \right)\wedge \Vol.
\end{align*}
Since $D_{a^{-1}} = R_{a^{-1}} - \id$ the third term will cancellation of the second term in the expression
\begin{align*}
    &= \left(\frac{\del L}{\del u} \extd_V u +\sum_a D_a\left( R_{a^{-1}}\left(\frac{\del L}{\del u_a}\right)\extd_V u\right) + \sum_aD_{a^{-1}}\left(\frac{\del L}{\del u_a}\right)\extd_V u \right)\wedge \Vol.
\end{align*}
The last step is to use $\extd_H(f\Vol_a) = \sum_b D_b(f)e^b \wedge \Vol_a = D_a(f) \Vol$ due to the definition of $\Vol_a$. Applying this to the second term of the above expression results in
\begin{align*}
    \extd_V(L \Vol )= \left(\frac{\del L}{\del u} + \sum_a D_{a^{-1}}\left(\frac{\del L}{\del u_a}\right)\right) \extd_V u \wedge \Vol - \extd_H \left( \sum_a R_{a^{-1}}\left(\frac{\del L}{\del u_a}\right)\extd_V u \wedge \Vol_a \right). 
\end{align*}
\end{proof}

In the continuum limit, where we approximate $\Z$ to $\R$ with coordinate $t$,  we expect $D_+$ and $D_-$ to correspond to the positive and negative total derivatives in the $t$ direction $\pm D_1$, and hence we recover the classical EL form in equation \eqref{eq:ClELBound}. Similarly we expect the shift operation $R_{a}$ to correspond to the identity in the limit, recovering the boundary form in \eqref{eq:ClELBound}. Note that the relation $D_a=R_a-\id$ now needs to be scaled so that the finite difference becomes a usual derivative, which then also enters $R_a u$.

As the bundle $\Z \times \R \to \Z$ models a point particle moving in $\R$ evolving in discrete time, we want to reproduce Example~\ref{ex:ClMech} in this setting, starting with a free particle $q=\phi:\Z\to \R$ with $V(q) = 0$. Let $(e^a,e^b) = g^{ab} = \delta^{a,b^{-1}}$ be the inverse Euclidean metric on $\Z$, and define the integral on $\Z$ as $ \int_\Z f \Vol = \sum_i f_i$. Then in analogy to Example~\ref{ex:ClMech} we set the action to be
\begin{align*}
    S[q] &= - \frac{m}{4} \int_\Z (\extd q,\extd q) \Vol 
    = -\frac{m}{4} \int_\Z \sum_{a,b} (\del_a q) (R_a \del_b q) (e^a,e^b) \Vol\\
    &= \frac{m}{4} \int_\Z \left((\del_+ q)^2 + (\del_- q)^2\right) \Vol,
\end{align*}
where we have used $R_a\del_{a^{-1}} = - \del_a$. Note the prefactor $- \frac{m}{4}$ instead of ${m\over 2}$. The minus sign comes from the overall minus sign that appears in $(\extd q,\extd q)$, and the extra $1\over 2$ is introduced to ensure that the continuum limit is correct, since in the limit $\Z \to \R$ we expect $\del_\pm q$ to correspond to $\pm \del_t \phi$, and therefore $(\del_+ q)^2 + (\del_- q)^2$ to $2 (\del_t q)^2$. The Lagrangian is then $L = {m\over 4} \sum_a u^2_a$, with EL and boundary forms given by
\[
    EL = {m\over 2} \sum_a u_{aa^{-1}} \extd_V u \wedge \Vol,
    \qquad
    \Theta = - {m\over 2} \sum_a u_{a^{-1}} \extd_V u \wedge \Vol_a.
\]
In the continuum limit, we expect $\sum_a u_{aa^{-1}}$ and $u_{a^{-1}}$ to correspond to $-2 u_{tt}$ and $-u_t$ respectively, thus recovering the results of Example~\ref{ex:ClMech}. Using now $u_{a a^{-1}} = -u_a - u_{a^{-1}}$, we find the EL equation, or Newton's equation, in discrete time to be
\[
    {m\over 2} \sum_a \del_a \del_{a^{-1}} q= - m \sum_a \del_a q = 0.
\]
We can also add a potential term $V(q)$ to the action as
\[
S[q] =  \int_\Z \left(-\frac{m}{4}(\extd q,\extd q) - V(q)\right) \Vol
\]
so that the Lagrangian is now $L = \frac{m}{4} \sum_a u^2_a - V(u)$. The corresponding EL form and EL equations are now
\[
    EL = \left({m\over 2} \sum_a u_{aa^{-1}} - {\del V(u)\over \del u}\right) \extd_V u \wedge \Vol,
    \qquad 
    m \sum_a \del_a q = - {\del V(q)\over \del q}
\]
with the boundary term is left unchanged.

\subsection{Noether current and conserved energy for classical mechanics $\Z \times \R \to \Z$}
\label{sec:Noether}

Rather than a  general Noether's theorem we will look in the $\Z$ case at the obvious `time translation' symmetry. We continue with the application to (non-relativistic) classical mechanics with discrete time where we write $q=\phi: \Z\to \R$ for the field (the classical field theory interpretation will be covered in Section~\ref{secZm}).

From the general discussion in Section~\ref{sec:SymNoeCl} we take the key requirement of an interior product along a suitable vector field encoding the translation symmetry and we specify this directly as a map
$\iota_\eps \colon \Omega^{\bullet}(J^\infty)\to \Omega^{\bullet-1}(J^\infty)$ by defining it on $\Lambda_{J^\infty}$ and extending as a left module map to $\Omega(J^\infty)=C(J^\infty)\Lambda_{J^\infty}$. Following the treatment for $M$ in Example~\ref{ex:ClStressEnergy}, we let 
\[
    \iota_\eps(e^a) = \epsilon^a,
    \quad 
    \iota_\eps(\extd_V u_I) = -\sum_a \epsilon^a u_{a I}
\]
on the generators, for some parameters $\epsilon^a$. Moreover, we split this as   $\iota_\eps = \iota_H + \iota_V$ where 
\[\iota_H e^a = \eps^a,\qquad \iota_V\extd_Vu_I=-\sum_a\eps^a u_{a I}\]
and $\iota_H\extd_V u_I= \iota_V e^a = 0$. We then extend $\iota_V$ to the $\Lambda_{J^\infty}$ as a graded-derivation. We similarly extend $\iota_H$ to the Grassmann subalgebra $\Lambda_\Z$ generated by  $\{e^a\}$ as a graded derivation. We do not need $\iota_H$ beyond this but one can, for example, extend it as commuting with right-multiplication by $\extd_Vu_I$.

\begin{lemma}\label{lem:IntProd} $\iota_H,\iota_V$ as specified are well-defined, and graded-derivations on $\Lambda_\Z$, $\Lambda_{J^\infty}$ respectively. \end{lemma}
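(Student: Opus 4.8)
The plan is to check well-definedness by verifying that the proposed values of $\iota_H$ and $\iota_V$ are compatible with the defining relations of $\Lambda_\Z$ and $\Lambda_{J^\infty}$ from Theorem~\ref{thm:Omega2rel}, and then to note that the graded-derivation extension to a quadratic algebra is automatic once the relations are respected. Since $\Lambda_\Z$ is just the Grassmann algebra on $\{e^+, e^-\}$, the only relation to check there is $\{e^a, e^b\} = 0$; applying the graded-Leibniz rule for $\iota_H$ gives $\iota_H(e^a\wedge e^b) + (-1)\iota_H(e^b\wedge e^a) = \eps^a e^b - \eps^b e^a + (-1)(\eps^b e^a - \eps^a e^b)$, and one checks this is consistent (the would-be obstruction is a scalar multiple of the relation itself, hence zero in the quotient). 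So $\iota_H$ is a well-defined graded derivation on $\Lambda_\Z$.

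For $\iota_V$ on $\Lambda_{J^\infty}$, I would work with the presentation in terms of $e^a$ and $\extd_V u_I$ with the three relations $[\extd_V u_I, \Phi]=0$ (which for the purposes of $\Lambda_{J^\infty}$ over the field means $\extd_V u_I$ is central among the generators, together with $\{e^a, \extd_V u_I\} + \extd_V u_{aI}\wedge e^a = 0$ and $\{\extd_V u_I, \extd_V u_J\} = 0$. Applying $\iota_V$ (with $\iota_V e^a = 0$, $\iota_V \extd_V u_I = -\sum_a \eps^a u_{aI}$) as a graded derivation to each relation, I get: on $\{\extd_V u_I, \extd_V u_J\}=0$ the Leibniz rule produces $(-\sum_a\eps^a u_{aI})\extd_V u_J - \extd_V u_I(\sum_a \eps^a u_{aJ}) + (\text{sign}) \cdots$, and since the $\extd_V u_J$ are central and anticommute this collapses to a multiple of $\{\extd_V u_I, \extd_V u_J\}$ plus terms $u_{aI}\extd_V u_J$ that must themselves cancel by the anticommutativity — this is the step requiring the most care. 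On the mixed relation $\{e^a, \extd_V u_I\} + \extd_V u_{aI} e^a = 0$, Leibniz gives $\eps^a \extd_V u_I$ from the first bracket (via $\iota_V e^a$ being zero but $\iota_V$ hitting the second factor with the graded sign), $-\sum_b \eps^b u_{b,aI} e^a$ from differentiating $\extd_V u_{aI}$, and I must check these sum to zero using the index identity $u_{aI}$ appearing on both sides together with the degree-$0$ action; the coefficient bookkeeping here is the real content.

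The main obstacle I expect is purely combinatorial: ensuring that when $\iota_V$ is applied to the quadratic relations, the resulting expressions — which a priori live in $\Lambda_{J^\infty}$ — actually vanish identically rather than merely being proportional to the relations, because $\iota_V \extd_V u_I$ has degree $0$ (it is the function $-\sum_a \eps^a u_{aI}$), so differentiating a relation can produce genuinely new monomials of the form $u_{aI}\,\extd_V u_J\, e^b$ whose cancellation relies on the centrality of $\extd_V u_I$ and the anticommutativity $\{e^a,e^b\}=0$ in just the right pattern. Once this consistency is confirmed, the extension of $\iota_V$ to all of $\Lambda_{J^\infty}$ as a graded derivation, and of $\iota_H$ to $\Lambda_\Z$, is standard: a graded derivation on a quadratic algebra is uniquely and consistently determined by its values on generators provided it annihilates the ideal of relations, which is exactly what the above checks establish. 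Finally, extension of $\iota_H$, $\iota_V$ from $\Lambda_\Z$, $\Lambda_{J^\infty}$ to all of $\Omega(J^\infty) = C(J^\infty)\Lambda_{J^\infty}$ as left $C(J^\infty)$-module maps is immediate since every element has a unique normal form with functions on the left.
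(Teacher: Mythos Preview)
Your overall strategy (check that the proposed graded-derivation respects the quadratic relations of Theorem~\ref{thm:Omega2rel}) matches the paper's, and your treatment of $\iota_H$ on $\Lambda_\Z$ and of $\iota_V$ on $\{\extd_V u_I,\extd_V u_J\}=0$ is fine. However, your computation for the mixed relation is wrong, and in being wrong it hides the actual mechanism.

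You write that applying $\iota_V$ to $\{e^a,\extd_V u_I\}$ ``gives $\eps^a\,\extd_V u_I$ from the first bracket''. It does not: $\iota_V e^a=0$, so by graded Leibniz
\[
\iota_V\big(e^a\wedge\extd_V u_I+\extd_V u_I\wedge e^a\big)
= -e^a\Big(\sum_b\eps^b u_{bI}\Big)+\Big(\sum_b\eps^b u_{bI}\Big)e^a
=\sum_b\eps^b\,[e^a,u_{bI}],
\]
which is a commutator of $e^a$ with the \emph{function} $u_{bI}\in C(J^\infty)$. The point you have missed is that this commutator is \emph{nonzero}: the degree-one relation $[e^a,u_I]=u_{aI}e^a$ from Theorem~\ref{thm:Omega2rel} gives $\sum_b\eps^b\,[e^a,u_{bI}]=\sum_b\eps^b u_{abI}\,e^a$. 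Meanwhile $\iota_V(\extd_V u_{aI}\wedge e^a)=-\sum_b\eps^b u_{b,aI}\,e^a$, and symmetry of the multi-index ($u_{abI}=u_{b,aI}$) gives the cancellation. So the check works \emph{because} $e^a$ fails to commute with functions on the jet fibre, not in spite of it; your sketch (``coefficient bookkeeping'' between $\eps^a\extd_V u_I$ and $-\sum_b\eps^b u_{b,aI}e^a$) would never close, since those two expressions are not even of the same type.

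One further point the paper makes explicitly and you do not address: $\iota_H$ \emph{cannot} be extended as a graded derivation to all of $\Lambda_{J^\infty}$, because applying such an extension to the mixed relation would force $\iota_H(\{e^a,\extd_V u_I\})=0$ while $\iota_H(\extd_V u_{aI}\wedge e^a)=-\eps^a\,\extd_V u_{aI}$. This is why the lemma only claims $\iota_H$ is a graded derivation on $\Lambda_\Z$; your proposal treats this restriction as incidental rather than forced.
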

\begin{proof} We need to verify that on this subalgebra $\iota_\eps$ is well-defined for the quadratic relations between these. It is easy to see that $\iota_\eps(\{e^a,e^b\})=0$ and $\iota_\eps(\{\extd_Vu_I,\extd_Vu_J\})=0$ just because $\eps^a$ are numbers and $\extd_Vu_I$ are central. This applies to both $\iota_H,\iota_V$ parts. For the cross-relation, we check using the graded derivation property that 
\begin{align*}\iota_V(\{e^a,\extd_V u_I\})&=e^a \sum_b \eps^b u_{b I}- \sum_b \eps^b u_{b I}e^a =\sum_b \eps^b[e^a,u_{b I}]\\&=\sum_b \eps^b u_{ab I} e^a= \sum_b \eps^b u_{b a I} e^a=-\iota_V(\extd_V u_{a I}e^a),\end{align*}
so $\iota_V$ extends to a graded derivation as claimed. For $\iota_H$, we cannot impose that it is a full derivation as this would imply $\iota_H(\omega\extd_Vu_I)=\iota_H(\omega)\extd_V u_I$ and $\iota_H((\extd_Vu_I)\omega)=-(\extd_V u_I)\iota_H(\omega)$ and hence $\iota_H(\{e^a, \extd_V u_I\})=0$, whereas the value on the other side would be $\eps^a \extd_V u_{a I}$. But we can impose just the first of these (for example) to fully specify it, using the commutation rules to put all $\extd_V u_I$ factors to the right and take them out. 
\end{proof}

In particular, since $\iota_H$ is a graded derivation on $\Lambda_\Z$, we have
\begin{equation}\label{iVol} i_H({\rm Vol})=i_H(e^+\wedge e^-)=\eps^+ e^-- e^+\eps^-=\sum_a \eps^a {\rm Vol}_a.\end{equation}

Focusing now on a free classical particle with the Lagrangian $L = {m\over 4} \sum_a u^2_a$ we define the  `naive Noether current' as 
\[
    j_0 = \sigma -\iota_H(L\Vol) - \iota_V \Theta = - \sum_{a,b} \epsilon^b\left({m\over 2}u_{a^{-1}}u_b + \delta_b^a L \right)\Vol_a
\]
where we copied the classical case by setting $\sigma = 0$, used the left module map property of $\iota_\epsilon$ and~\eqref{iVol}. It turns out that this current is not conserved, but close enough that it can be corrected.

\begin{proposition}
\label{prop:ConservedCurrentClMech}
    The current
    \[
        j = - \sum_{a,b} \epsilon^b \left({m\over 2}\left(u_{a^{-1}}+\frac{1}{2}u_{a^{-1}b}\right)u_b + \delta^{a}_b L \right) \Vol_a
    \]
    is conserved when the EL equations hold for the Lagrangian $L(u,u_a) = \frac{m}{4} \sum_a u^2_a$. 
\end{proposition}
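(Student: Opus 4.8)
The plan is to verify directly that $\extd_H j = 0$ holds modulo the EL equations, by computing $\extd_H j$ using the Leibniz rule for $\extd_H$ together with the identity $\extd_H(f\Vol_a) = D_a(f)\Vol$ established in the proof of Theorem~\ref{thm:ELandBoundaryTerm}. Since $j$ is a sum of terms of the form $f\,\Vol_a$ with $f\in C(J^\infty)$, we get $\extd_H j = \sum_{a,b}\epsilon^b D_a\!\left(-{m\over 2}(u_{a^{-1}}+{1\over 2}u_{a^{-1}b})u_b - \delta^a_b L\right)\Vol$, and the task reduces to showing the scalar coefficient vanishes on-shell, i.e. is a multiple of the EL expression $\frac{\del L}{\del u} + \sum_a D_{a^{-1}}\!\left(\frac{\del L}{\del u_a}\right)$, which for $L = \frac{m}{4}\sum_a u_a^2$ equals $\frac{m}{2}\sum_a D_{a^{-1}}(u_a) = \frac{m}{2}\sum_a u_{aa^{-1}}$.

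First I would explain \emph{why} the naive current $j_0$ fails: computing $\extd_H j_0 = -\sum_{a,b}\epsilon^b D_a\!\left({m\over 2}u_{a^{-1}}u_b + \delta^a_b L\right)\Vol$ and expanding $D_a$ via the Leibniz rule $D_a(\omega\wedge\eta) = D_a(\omega)\wedge R_a\eta + \omega\wedge D_a(\eta)$ and $D_a(u_I) = u_{aI}$, the term $-\frac{m}{2}\sum_{a,b}\epsilon^b D_a(u_{a^{-1}}u_b)$ produces $-\frac{m}{2}\sum_{a,b}\epsilon^b(u_{aa^{-1}}R_a(u_b) + u_{a^{-1}}u_{ab})$, while $-\sum_b\epsilon^b D_b(L) = -\frac{m}{4}\sum_{a,b}\epsilon^b D_b(u_a^2) = -\frac{m}{4}\sum_{a,b}\epsilon^b(u_{ab}R_b(u_a)+u_a u_{ab})$. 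The piece proportional to $u_{aa^{-1}}$ is exactly the on-shell-vanishing EL term (up to the overall $-\eps^b$ and sign), but there is a leftover quadratic remainder that does not vanish; the correction $\frac{1}{2}u_{a^{-1}b}$ inside $j$ is designed precisely so that $D_a$ of the extra term $-\frac{m}{4}\sum_{a,b}\epsilon^b u_{a^{-1}b}u_b$ cancels that remainder. Carrying this out, one uses the algebraic identities from the excerpt — $R_a R_{a^{-1}} = \id$, $D_a R_{a^{-1}} = -D_{a^{-1}}$, $R_a\del_{a^{-1}} = -\del_a$ hence $R_a(u_{a^{-1}I}) = -u_{aI}+u_I$-type relations, and the symmetry $u_{aI}=u_{Ia}$ of the jet coordinates — to match the stray terms. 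I would organize the bookkeeping by grouping all terms by their index structure (terms with a double contraction $u_{aa^{-1}}$, versus genuinely quadratic terms $u_{\bullet}u_{\bullet\bullet}$ and $u_{\bullet\bullet}u_{\bullet\bullet}$) and check the quadratic terms cancel identically while the $u_{aa^{-1}}$ terms assemble into $\eps^b\cdot\frac{m}{2}\sum_a u_{aa^{-1}}\cdot u_b$, which is $-\iota_V(EL)$ restored to a constant — but since $EL$ here has no reference field to pair against in the $\Z$ case, it simply vanishes on-shell.

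The main obstacle I expect is the noncommutativity bookkeeping: because $D_a$ does not satisfy the ordinary Leibniz rule but the twisted one involving $R_a$, products like $D_a(u_{a^{-1}}u_b)$ generate shift operators $R_a$ acting on jet coordinates, and one must carefully track that $R_a(u_b) = u_b + u_{ab}$ rather than just $u_b$. Getting the coefficient $\frac{1}{2}$ and the particular index $u_{a^{-1}b}$ (as opposed to, say, $u_{ab^{-1}}$ or a symmetrized version) right is the delicate point, and it is essentially forced by requiring the non-EL remainder to telescope to zero using $D_a R_{a^{-1}} = -D_{a^{-1}}$ and relabelling $a\leftrightarrow a^{-1}$ in the sum over the two-element index set $\{+,-\}$. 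A secondary subtlety is confirming that "conserved" here means $\extd_H j = 0$ outright on-shell (so that, as in Corollary~\ref{cor:ClConCh} with $\Sigma$ a point, $Q = e_\infty^*(\cdot,\phi)(j)$ has vanishing discrete time-derivative), rather than merely $\extd_H j = \iota_V(EL)$; in the $\Z$ toy model with a single field these coincide because $\iota_V(EL) = -\sum_a\eps^a u_{a}\cdot(\text{EL scalar})\wedge\Vol$ vanishes whenever the EL scalar does. Once the index combinatorics is pinned down the rest is routine verification.
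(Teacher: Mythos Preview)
Your proposal is correct and follows essentially the same route as the paper: compute $\extd_H j_0$ via the twisted Leibniz rule, isolate the EL-proportional piece, and show that the leftover quadratic remainder is precisely $\extd_H$ of the correction $-\tfrac{m}{4}\sum_{a,b}\epsilon^b u_{a^{-1}b}u_b\,\Vol_a$. One small refinement worth noting is that the on-shell-vanishing part in fact contains both $u_{aa^{-1}}u_b$ and $D_b(u_{aa^{-1}})u_b$ terms rather than a single scalar multiple of the EL expression, but both vanish on-shell so this does not affect your argument.
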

\begin{proof}
    Start with the naive current $j_0$, we have
    \begin{align*}
        \extd_H j_0 = - \sum_{a,b} \epsilon^b \extd_H \left(\left( {m\over 2} u_{a^{-1}}u_b + \delta^{a}_b L \right) \Vol_a\right)
        = - \sum_{a,b} \epsilon^b \left( {m\over 2} D_a (u_{a^{-1}}u_b) + D_b L \right) \Vol.
    \end{align*}
    Using the Leibniz rule for $D_a$ and noting $D_b(u^2_a) = u^2_{ab} + 2u_a u_{ab}$, we find
    \begin{align*}
        \extd_H j_0 &= - {m\over 4}\sum_{a,b} \epsilon^b \left(2 u_{aa^{-1}} (u_b + u_{ab}) + 2 u_{a^{-1}}u_{ab} + u^2_{ab} + 2 u_a u_{ab} \right) \Vol.
    \end{align*}
    Using $u_{a^{-1}} = -R_{a^{-1}} u_a = -(u_a + u_{a a^{-1}})$ this simplifies to
    \begin{align*}
        \extd_H j_0 = - {m\over 4} \sum_{a,b} \epsilon^b \left( 2 u_{a a^{-1}} u_b + u^2_{a b} \right) \Vol
    \end{align*}
    where we recognise the first term as the EL part. For the other term, note now that
    \[
        -D_a(u_{a^{-1}b}u_b) = -u_{aa^{-1}b} (u_b + u_{ab}) - u_{a^{-1}b}u_{ab} = -u_{aa^{-1}b} u_b +  u^2_{ab} 
    \]
   where we used $u_{aa^{-1}b} = - u_{ab} - u_{a^{-1}b}$. We now see that
    \[
        \extd_H j_0 = - {m \over 4} \sum_{a,b} \epsilon^b \left(2u_{aa^{-1}} u_b + D_b (u_{aa^{-1}}) u_b  -D_a(u_{a^{-1}b}u_b) \right) \Vol.
    \]
 The first and second term vanish when the EL equations are fulfilled, and the last can be written as ${m \over 4}\sum_{a, b} \extd_H(u_{a^{-1} b} u_b) \wedge \Vol_a)$, so that the current $$j = j_0 - \sum_{a,b} \frac{m}{4} \epsilon^b(u_{a^{-1}b} u_b) \Vol_a$$ is conserved when the EL equations hold. 
\end{proof}

We see that an unexpected extra term needs to be added to obtain a conserved current, namely $-{m \over 4}\epsilon^b u_{a^{-1}b} u_b \Vol_a$. The noncommutative-geometric origin of this term is not clear but may relate to our assumption that $\sigma=0$.

We now aim to build the stress-energy tensor $T=\sum_{a,b} T_{a b} e^a \tens_A e^b \in \Omega^1 \tens_A \Omega^1$ on the base $A=C(\Z)$. The divergence of any (0,2)-tensor $T$  is defined geometrically as
\[
   ((\,,\,) \tens \id)\nabla (T)
   =\sum_{a,b} (\extd (T_{a b}) ,e^a) e^b = \sum_{a,b}\del_{a^{-1}} (T_{a b} ) e^b
\]
where $(\ ,\ )$ is the metric inner product. In our case $\nabla$ obeys $\nabla e^a=0$, the Leibniz rule and extends to tensor products in an obvious way (it is a bimodule connection in the sense used in~\cite{BegMa} with the generalised braiding given by flip on the basis). Hence divergence-free just amounts to $\sum_a \del_{a^{-1}} (T_{a b}) =0$ for all $b$.

With this in mind, working `upstairs' on $C(J^\infty)$, we  therefore look for $T_{ab}$ such that

\begin{equation}
    \label{eq:SETfromCurrent}
j = \sum_{a,b} \epsilon^b T^a{}_{b} \Vol_{a}
\end{equation}
and then recover $T_{ab}$ by lowering the indices using the metric
\[
    T_{ab} = \sum_c g_{ac}T^c{}_b = \sum_c \delta_{a,c^{-1}}T^c{}_b = T^{a^{-1}}{}_b.
\]
Then $\extd_H j = 0$ (on-shell) implies $$\sum_a D_{a} T^{a}{}_b = \sum_a D_{a} T_{a^{-1}b}= \sum_a D_{a^{-1}} T_{ab} = 0$$ (on-shell) for all $b$. In our case, we can then read off from Proposition~\ref{prop:ConservedCurrentClMech}
\[
    T_{a b} = - {m\over 2} \left(u_{a} u_b+\frac{1}{2}u_{a b} u_b\right) - \delta^{a^{-1}}_b L.
\]
Note that this stress-energy tensor is not symmetric as $T_{a b} - T_{b a} = -\frac{m}{4}u_{a b} (u_b - u_a)$.

In terms of our field $q:\Z\to \R$ as in Example~\ref{ex:ClMech}, we thus find
\begin{align*}
    &T_{++}[q] = - {m\over 2}\left((\del_{+}q)^2 + \frac{1}{2} (\del^2_{+}q) \del_{+}q\right),&
    &T_{+-}[q] = -{m\over 4} \del_+ q\left(\del_- q + \del_+ q\right), \\
    &T_{-+}[q] = -{m\over 4} \del_- q\left(\del_+ q + \del_- q\right),&
    &T_{--}[q] = - {m\over 2}\left((\del_{-}q)^2 + \frac{1}{2} (\del^2_{-}q) \del_{-}q\right),
\end{align*}
where we used $\del_+ \del_- = -\del_+ -\del_-$.

\begin{corollary}
    \label{cor:indeqDivT}
    The divergence free condition for the stress-energy tensor $\sum_a D_{a^{-1}} T_{a b} = 0$ for $b=+,-$ only gives one independent condition.
\end{corollary}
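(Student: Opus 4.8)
The plan is to show the two equations $\sum_a D_{a^{-1}} T_{a+} = 0$ and $\sum_a D_{a^{-1}} T_{a-} = 0$ coincide, up to a global shift by one of the two operators $R_{a^{-1}}$, once the EL equations hold. First I would write out both conditions explicitly. Since the index set is $\{+,-\}$, $\sum_a D_{a^{-1}} T_{ab} = D_+ T_{-b} + D_- T_{+b}$, so the two conditions are
\begin{align*}
b=+:\quad & D_+ T_{-+} + D_- T_{++} = 0,\\
b=-:\quad & D_+ T_{--} + D_- T_{+-} = 0.
\end{align*}
Substituting the formulae for $T_{ab}$ from the line preceding the corollary, each is an identity in the $u_I$ (with $L = \tfrac m4\sum_a u_a^2$) that holds on-shell, i.e. modulo the EL expression $\tfrac m2\sum_a u_{aa^{-1}} = \tfrac m2 D_{b}D_{b^{-1}}u$; indeed we already know from Proposition~\ref{prop:ConservedCurrentClMech} and \eqref{eq:SETfromCurrent} that $\sum_{a,b}\epsilon^b D_a T^a{}_b \Vol$ vanishes on-shell for \emph{arbitrary} $\epsilon^b$, which is exactly the pair of conditions above; so the content of the corollary is purely that the $b=+$ relation and the $b=-$ relation are not independent as algebraic identities.

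Next I would exhibit the dependence concretely. Using $D_a = R_a - \id$, $R_+R_- = \id$ on degree-zero elements, and $u_{+-} = u_{-+}$, I expect to find that applying $R_{+^{-1}} = R_-$ to the $b=+$ relation (or equivalently shifting indices $+\leftrightarrow -$ throughout, since the defining relations are symmetric under $a \mapsto a^{-1}$) turns it into the $b=-$ relation. Concretely, the map $a\mapsto a^{-1}$ swaps the two rows of the displayed system and, because $T_{ab}$ and $L$ and $D_a$ are all built covariantly from the data, this swap is an automorphism of the whole setup; hence the $b=-$ equation is the image of the $b=+$ equation under this symmetry composed with the identity $u_{+-}=u_{-+}$, so there is at most one independent condition. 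I would verify the claim is not vacuous by checking that the single remaining condition is genuinely nontrivial, namely that it reduces to (a shift of) the EL equation $m\sum_a\del_a q = 0$, which matches the remark after Proposition~\ref{prop:ConservedCurrentClMech}.

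The main obstacle I anticipate is bookkeeping rather than conceptual: one must be careful that $R_a$ and $D_a$ act on the \emph{forms} $\Vol_a$ as well as on the coefficient functions (via \eqref{eq:propRD}), and that $\Vol_+ = e^-$, $\Vol_- = -e^+$ carry the sign that makes $\extd_H(f\Vol_a) = D_a(f)\Vol$; a sign slip here would spuriously make the two conditions look independent. A clean way to avoid this is to phrase everything at the level of $\extd_H j$: since $\extd_H j = \sum_b \epsilon^b(\sum_a D_a T^a{}_b)\Vol$ and $\epsilon^+,\epsilon^-$ are independent parameters, the vanishing of $\extd_H j$ on-shell is \emph{equivalent} to both conditions, so I only need to show that the coefficient of $\epsilon^+$ and the coefficient of $\epsilon^-$ in the on-shell expression for $\extd_H j$ are related by the index swap $+\leftrightarrow -$ — and this is immediate from inspecting the derivation in Proposition~\ref{prop:ConservedCurrentClMech}, where every step is manifestly symmetric in $a$ under $a\mapsto a^{-1}$ and the only surviving non-EL term ${m\over 4}\sum_{a,b}\extd_H(u_{a^{-1}b}u_b)\wedge\Vol_a$ was already absorbed into the redefinition of $j$.
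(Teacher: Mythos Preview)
Your index-swap symmetry is a legitimate alternative route, but you conflate it with the shift $R_-$, and that is a genuine error. Writing $\tau$ for the automorphism $u_I\mapsto u_{\bar I}$ of $C(J^\infty)$ (swap every $\pm$ in $I$), one has $\tau D_a = D_{a^{-1}}\tau$ and $\tau(T_{ab})=T_{a^{-1}b^{-1}}$, so $\tau$ does carry the $b=+$ divergence to the $b=-$ one, with the EL expression $\sum_a u_{aa^{-1}}$ left invariant. But $R_-$ is a \emph{different} operation: $R_- u_I = u_I + u_{-I}$, so for instance $R_- u_+ = -u_-$ while $R_- u_{++} = u_+ + u_- \neq u_{--}$, and $R_-(D_-T_{++}+D_+T_{-+})$ is \emph{not} $D_+T_{--}+D_-T_{+-}$. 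Your parenthetical ``or equivalently'' is false; only the $\tau$-branch of your argument survives. Note also that being $\tau$-related is by itself weaker than the dependence claimed (e.g.\ the conditions $u_+=0$ and $u_-=0$ are $\tau$-related yet independent), so to close your argument you would still need to observe that the conserved quantity extracted from $b=+$ is itself $\tau$-invariant, which you do not do.

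The paper's proof takes a different and more direct route. From the computation in Proposition~\ref{prop:ConservedCurrentClMech} one reads off
\[
\sum_a D_{a^{-1}}T_{ab} = -\tfrac{m}{4}\sum_a\bigl(2u_{aa^{-1}}+D_b(u_{aa^{-1}})\bigr)u_b,
\]
and the identity $D_b(u_{aa^{-1}})\,u_b = R_b\bigl(D_{b^{-1}}(u_{aa^{-1}})\,u_{b^{-1}}\bigr)$ then expresses the $b$-divergence as $R_b$ of the $b^{-1}$-divergence plus the manifestly on-shell-vanishing term $-\tfrac{m}{2}\sum_a(\id+R_b)(u_{aa^{-1}})\,u_b$. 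This relates the two conditions by a lattice \emph{translation} rather than a reflection, which is the form actually used immediately afterwards to extract a single conserved energy.
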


\begin{proof}
To see this we want to relate the divergence free condition for $b$ and $b^{-1}$. Comparing the expression for $\extd_H j$ from Proposition~\ref{prop:ConservedCurrentClMech} with the decomposition in equation \eqref{eq:SETfromCurrent} we can relate the divergence of the stress-energy tensor to an expression which depends on the EL equations. From there we compute
\begin{align*}
    \sum_{a} D_{a^{-1}} T_{a b} 
    &= -{m\over 4} \sum_a \left( 2u_{a a^{-1}}+  D_b\left(u_{a a^{-1}} \right)\right)  u_b\\
    &= -{m\over 4} \sum_a  \left( 2u_{a a^{-1}} u_{b} + R_b\left(D_{b^{-1}}\left(u_{a a^{-1}} \right) u_{b^{-1}}\right)\right).
\end{align*}
In the case of $b^{-1}$ we have 
\[
\sum_{a} D_{a^{-1}} T_{a b^{-1}} 
    = -{m\over 4} \sum_a \left( 2u_{a a^{-1}}+  D_{b^{-1}}\left(u_{a a^{-1}} \right)\right)u_{b^{-1}}.
\]
We therefore see that the last term in the changed expression for $\sum_{a} D_{a^{-1}} T_{a b}$ can be written in terms of $\sum_{a} D_{a^{-1}} T_{a b^{-1}}$, leading to
\[
\sum_{a} D_{a^{-1}} T_{a b} 
    = -{m\over 4} \sum_a 2(R_b + \id)(u_{a a^{-1}}) u_{b} + \sum_{a} D_{a^{-1}} T_{a b^{-1}}.
\]
The first term is just the EL equations, and therefore we see that the divergence free condition for $b$ and $b^{-1}$ are the same on-shell.
\end{proof}

Looking at the content for the $b=+$ case, we have
\[
    0 = D_- T_{++} + D_+ T_{-+} = D_+(T_{-+} -R_- T_{++}),
\]
where we have used $D_a = - D_{a^{-1}} R_a$, and directly see that the term in brackets is constant along the discrete time axis $\Z$. Expanding it out gives
\begin{align*}
    T_{-+} -R_- T_{++} = -{m\over 4} \left(u^2_- + u_+ u_- - R_-(2 u^2_+ + u_{++}u_+) \right) = -{m\over 2} u_+ u_-
\end{align*}
where we have used $R_-(u^2_+) = u^2_-$ and $R_-(u_{++}u_+) = u_{+- }u_-$. We see that this expression has a quadratic term in the derivatives. In terms of the field $q$, we identify this as the (kinetic) energy
\[
    E[q] = -{m\over 2} (\del_+ q) (\del_- q).
\]
The minus sign is important in order to recover the kinetic energy in the classical limit from Example~\ref{ex:ClMech}. In the continuum limit we expect as discussed $\del_\pm q$ to correspond to $\pm\del_tq$, making that the sign of the above kinetic energy is positive as expected. 

Since this is a new formalism, we check the on-shell conservation claim explicitly. Using the modified Leibniz rule for finite differences,
\begin{align*}
\del_+((\del_+ q) (\del_-q)) &= (\del^2_+ q) R_+(\del_-q) + (\del_+ q) (\del_+ \del_-q)\\
&= R_+((\del_+ \del_-q)) \del_+ q + (\del_+ q) (\del_+ \del_-q)
\end{align*}
where we used $R_+\del_-=-\del_+$ for the second equality. We see that the EL term appears in both summands, making it manifestly zero if $q$ obeys the equations of motion. 

\begin{corollary}
The current from Proposition~\ref{prop:ConservedCurrentClMech} is conserved when a quadratic potential $V(q) = \mu  q^2$ is included in the Lagrangian $L(u,u_a) = {m\over 4} \sum_a u^2_a - V(u)$.
\end{corollary}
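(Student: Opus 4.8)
The plan is to reduce the claim to Proposition~\ref{prop:ConservedCurrentClMech} by linearity, using that a potential $V=V(u)$ leaves the boundary data untouched. Write $L=L_0-V$ with $L_0=\frac{m}{4}\sum_a u_a^2$ and $V=\mu u^2$. Since $\frac{\del L}{\del u_a}=\frac{\del L_0}{\del u_a}=\frac{m}{2}u_a$, Theorem~\ref{thm:ELandBoundaryTerm} gives the same boundary form $\Theta$, hence the same $\iota_V\Theta$, as for $L_0$. Taking $\sigma=0$ as before, the naive current $j_0=\sigma-\iota_H(L\Vol)-\iota_V\Theta$ changes only through $-\iota_H(L\Vol)=-L\sum_a\epsilon^a\Vol_a$, which acquires the extra piece $+V\sum_a\epsilon^a\Vol_a$; and since the correction term in Proposition~\ref{prop:ConservedCurrentClMech} involves only the $u$-coordinates, the current in the statement is $j=j^{\mathrm{free}}+\sum_a\epsilon^a V\Vol_a$, with $j^{\mathrm{free}}$ the conserved current of the free theory.

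I would then compute $\extd_H j=\extd_H j^{\mathrm{free}}+\sum_a\epsilon^a\,\extd_H(V\Vol_a)$. For the second term, $\extd_H(V\Vol_a)=D_a(V)\Vol$ exactly as in the proof of Theorem~\ref{thm:ELandBoundaryTerm}, and the Leibniz rule for $D_a$ in \eqref{eq:propRD} gives $D_a(V)=\mu(u_a^2+2u\,u_a)$. For the first term, the computation in the proof of Proposition~\ref{prop:ConservedCurrentClMech} yields, before any equation of motion is imposed, the off-shell identity $\extd_H j^{\mathrm{free}}=-\sum_b\epsilon^b\big(\mathcal{E}_0+\tfrac12 D_b\mathcal{E}_0\big)u_b\,\Vol$, where $\mathcal{E}_0:=\frac{m}{2}\sum_a u_{aa^{-1}}$ is the coefficient of the free Euler--Lagrange form.

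The decisive point is that $j^{\mathrm{free}}$ is no longer closed on the interacting shell: the Euler--Lagrange coefficient of $L$ is $\mathcal{E}=\mathcal{E}_0-\frac{\del V}{\del u}=\mathcal{E}_0-2\mu u$, so on-shell $\mathcal{E}_0=2\mu u$ rather than $0$. Substituting this and using $D_b u=u_b$ turns the first term into $\extd_H j^{\mathrm{free}}=-\mu\sum_b\epsilon^b(2u\,u_b+u_b^2)\Vol$ on-shell, which cancels $\sum_a\epsilon^a\extd_H(V\Vol_a)=\mu\sum_a\epsilon^a(u_a^2+2u\,u_a)\Vol$ exactly, so that $\extd_H j=0$ on-shell. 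I expect this cancellation to be the one real obstacle, and it is where the hypothesis $V=\mu u^2$ is essential: it amounts to the identity $D_a(V)=\frac{\del V}{\del u}\,u_a+\tfrac12 u_a\,D_a\!\big(\frac{\del V}{\del u}\big)$, valid because $D_a$ acts on the linear polynomial $\frac{\del V}{\del u}=2\mu u$ as multiplication by $u_a$; for a higher-degree potential the finite difference $D_a(V)=V(u+u_a)-V(u)$ produces terms unmatched by the shifted free Euler--Lagrange contribution, so a further potential-dependent correction of $j$ (in the spirit of the unexpected term already seen in the free case) would be required.
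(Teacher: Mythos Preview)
Your proof is correct and is essentially the same argument as the paper's: both split off the potential contribution, compute $\extd_H$ of each piece, and use the quadratic identity $D_a V=\frac{\del V}{\del u}\,u_a+\tfrac12 D_a\!\big(\frac{\del V}{\del u}\big)\,u_a$ to match the free Euler--Lagrange contribution. The only cosmetic difference is that you substitute the on-shell value $\mathcal{E}_0=2\mu u$ directly and check numerical cancellation, whereas the paper keeps everything off-shell and regroups $-(\mathcal{E}_0+\tfrac12 D_b\mathcal{E}_0)u_b+(\tfrac{\del V}{\del u}+\tfrac12 D_b\tfrac{\del V}{\del u})u_b$ into $-(\mathcal{E}+\tfrac12 D_b\mathcal{E})u_b$ with $\mathcal{E}$ the full EL coefficient; this has the minor advantage of making transparent, without any on-shell substitution, that $\extd_H j$ is a differential consequence of the EL form.
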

\begin{proof}
Following the calculations from Proposition~\ref{prop:ConservedCurrentClMech}, we now find
\[
    \extd_H j = - {m \over 4} \sum_{a,b} \epsilon^b \left(2u_{aa^{-1}} u_b + D_b (u_{aa^{-1}}) u_b \right) + \sum_{b} \epsilon^b (D_b V(u)) \Vol
\]
for $V(u) = \mu u^2$. Using the identity $D_b V(u) = {\del V(u)\over \del u} + {1\over 2}D_b{\del V(u)\over \del u} $ for the quadratic potential the above reduces to 
\[
    \extd_H j = - \sum_{a,b} \epsilon^b \left(\left({m \over 2}u_{aa^{-1}} - {\del V(u)\over \del u}\right) +\frac{1}{2} D_b \left({m \over 2} u_{aa^{-1}}  - {\del V(u)\over \del u}\right) u_b \right) \Vol
\]
which vanishes when the EL equations are satisfied.
\end{proof}

Repeating the analysis with the stress-energy tensor, the charge now corresponds to the total energy
\begin{equation}\label{Eq}
E[q] = -{m\over 2} (\del_+q)(\del_-q) + V(q)\end{equation}
conserved on-shell for the equations of motion
\begin{equation}\label{ELq} \Delta_\Z q =- {1\over m}{\del V(q)\over \del q}\end{equation} 
when the potential is quadratic. It is not clear how to extend these results to higher order potentials, but we are able to construct a discrete time harmonic oscillator and we now look at this in more detail. 

First of all, the left hand side of~\eqref{ELq} is the standard lattice double-differential and writing $V(q)={1\over 2}{m\omega^2} q^2$ for frequency $\omega$, the equation becomes $\Delta_\Z q=-\omega^2 q$. The solution to this on a lattice is well-known to be 
\[ q(i)= q_1 \cos(\omega' i)+ q_2 \sin(\omega' i);\quad \omega'=\arccos\left(1-{\omega^2\over 2}\right)= \omega+{\omega^3\over 24}+O(\omega^5)\]
provided $\omega<2$. A short calculation gives energy
\begin{equation}\label{Eq1q2} E[q]=\frac{m}{2} \sin^2(\omega') (q_1^2+q_2^2)= {m\over 8}\omega^2(4-\omega^2)(q_1^2 + q_2^2)\end{equation}
which we see is $\ge 0$, and zero only for the zero solution. The mix of sine and cosine modes depends on the parameters  and here is instructive to take them to be the values of $q(0)$ and $E[q]$. One can analyse this analytically or just go back to the equations of motion, solved by recursion with
\[ q(1)={q(0)\over 2}\left(2-\omega^2+\sqrt{8 {E[q]\over  m q(0)^2}-4\omega^2+\omega^4}\right)\]
From this we see that there are real solutions if and only if  
\begin{equation}
    \label{Ec} 
    E[q]\ge E_c:={m\over 8}\omega^2(4-\omega^2)q(0)^2 = \left(1-\frac{\omega^2}{4}\right)V(q(0)).
\end{equation}
In classical mechanics, the total energy cannot be less than the potential energy and here, we find a deformation of this fact (here at $i=0$) as recovered for small $\omega$. For example, we can take $q(0)=0$ then $E_c=0$ and we see only sine solutions increasing in amplitude with increasing $E>0$ according to $E={1\over 8}m \omega^2 (4-\omega^2)q_2^2$ from~\eqref{Eq1q2}. This modifies the relation $E={1\over 2}m \omega^2 q_2^2$ in classical mechanics, parallel to the familiar fact in lattice field theory that the dispersion relation gets modified when $\omega$ is not small. Also parallel to the field theory case, if we let $\omega^2$ be negative then there are (real) unbounded exponential solutions. As we increase $\omega$, the amplitude for a given energy decreases until a minimum at $\omega=\sqrt{2}$, while the period $2\pi/\omega'$ shrinks towards period 2, i.e alternating and blowing up in amplitude as  $\omega\to 2$ from below. The case $\omega>2$ is not physical (if the discretisation is a model of Planck scale effects then these are transplankian modes) and here solutions blow up exponentially while alternating in $i$ (the `maximum frequency'). These effects are illustrated in Figure~\ref{fig1}.  Note that   $E<E_c$ leads to complex oscillatory solutions but these are not part of our theory, and indeed the conserved  energy for the complex case (which would be relevant to the scalar field theory point of view) is different as it involves conjugation of one of the fields. 

\begin{figure}[h]
\[ \includegraphics[scale=0.65]{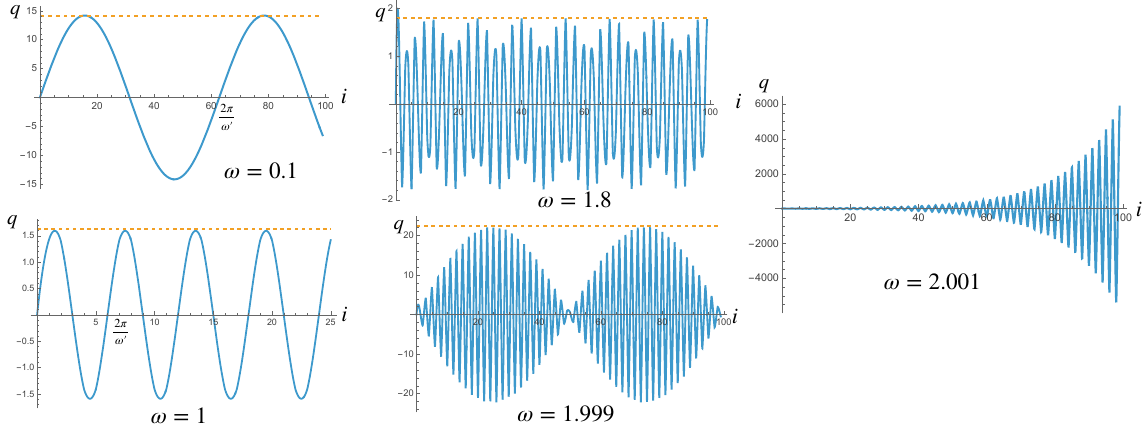}\]
\caption{\label{fig1} Discrete time harmonic oscillator for $m=1$ and energy $E=1$ with increasing $\omega$. For the sine modes with $\omega<2$, the dashed line is the amplitude $q_2$ such that $E={1\over 8}m \omega^2(4-\omega^2)q_2^2$, with minimum amplitude for $\omega=\sqrt{2}$. The period  $2\pi/\omega'$ is significantly smaller than $2\pi/\omega$ as  $\omega$ increases. As $\omega\to 2$ from below, the amplitude blows up and the period tends towards the minimum  alternating case. We also see a beat frequency emerging. The $\omega>2$ case has alternating exponential growth.  Plots have been smoothly interpolated for visualization.}
\end{figure}

Finally, the stress energy tensor, e.g. for the sine solutions $q(i)=q_2 \sin(i \omega')$ is 
\begin{align*}
    T_{++}[q] &= -\frac{m}{4}  q_{2}^2 (\sin (i \omega')-\sin ((i+1) \omega')) (\sin (i \omega')-\sin ((i+2) \omega')),\\
    T_{+-}[q] &= -\frac{m}{8}  q_{2}^2 (-2 \sin (\omega') \sin (2 i \omega')+\cos (2 i \omega')-\cos (2 (i+1) \omega')+\cos (2 \omega')-1),\\
    T_{-+}[q] &= -\frac{m}{8}  q_{2}^2 \left(2 \sin (\omega') \sin (2 i \omega')-2 \cos ^2((i-1) \omega')+\cos (2 i \omega')+\cos (2 \omega')\right),\\
    T_{--}[q] &= \frac{m}{4} q_{2}^2 (\sin ((i-2) \omega')-\sin (i \omega')) (\sin (i \omega')+\sin (\omega'-i \omega')),
\end{align*}
and is not itself constant in $i$ (but is divergence free by our construction). 

\section{Conserved stress-energy tensor on Abelian groups and $\Z^m$ lattices.}
\label{secZm}

We next observe that it is immediate to replace $\Z$ by any Abelian group $G$ as base in order to consider the fibre bundle $G \times \R \to G$, as a model for scalar field theory on $G$. Translation-invariant calculi are of the form $\Omega(G)=C(G)\Lambda_G$ where the invariant forms $\Lambda_G$ again provide a basis over the algebra and are generated by a set of invariant 1-forms $\{e^a\}$. Here $a\in \CC$ can be viewed as labelling group elements in a subset $\CC\subseteq G\setminus\{e\}$, where $e$ is group identity and for a connected calculus (which we assume) we need that $\CC$ generates the group, and for existence of a metric we also need that $\CC$ is closed under group inversion. The set $\CC$ generates a Cayley graph on $G$, where we take the set of vertices to be $G$ with arrows given by right multiplication as $x\to xa$. We define $R_a(f)(x)=f(xa)$ if we denote the group multiplicatively, $\del_a=R_a-\id$ and $\extd f= \sum_a(\del_a f)e^a$ much as before. We also have $\extd=[\theta,\  \}$ for $\theta=\sum_a e^a$. In the Abelian case $\Lambda_G$ is again the Grassmann algebra on the $\{e^a\}$ and $\extd e^a=0$ (the nonAbelian case is similar but a more complicated algebra). In the case of $\Z$, the  smallest choice is  $\CC=\{\pm 1\}$ which corresponds to $e^\pm$ in Section~\ref{secZ}.

Proceeding as before, we still have a torsion free flat connection defined by $\nabla e^a=0$ and since, for $G$ Abelian, the $\del_a$ commute with each other, the jet bundle construction in~\cite{MaSim1} still gives the space of sections $\CJ^\infty$ built on symmetric tensor powers of $e^a$ with jet prolongation map $j_\infty(f)=\sum_{I}\del_If e^I$ where now the multiindex $I=(a_1\cdots\cdots a_m)$ enumerates over symmetrized indices $a_i\in \CC$ and $\del_I$ is the iterated derivative (and we include $I=\emptyset$ as no derivative in the sum). The index $a I$ denotes $(a a_1\cdots a_m)$ (it is not necessary  but we can also fix everything explicitly by choosing an ordering on $\CC$ and taking representatives that are lexicographically ordered).

We then coordinatise the fibre of $J^\infty$ by $u_I$ corresponding to the coefficient of $e^I$ and have $C(J^\infty)=C(G)[u_I]$ as before. The evaluation map is $e_\infty: G\times F\to J^\infty$ where $F$ is the space of fields (for a rank 1 real bundle it is just another copy of $C(G)$ up to completions) and pulling back gives $e_\infty^*: \Omega(J^\infty)\to \Omega(G)\underline\tens \Omega(F)$. Assuming (for the purposes of deriving the formulae) that this is an inclusion gives the same results as Theorem~\ref{thm:Omega2rel}. The only difference is that $\Z$ is replaced by $G$ and now $a\in \CC$ as discussed rather than $\pm$ as before.

Next, for the Euler-Lagrange equations, we still have a unique (up to scale) top form ${\rm Vol}=e^1\wedge \cdots \wedge e^{|\CC|}$ and we use this to define $\Vol_a$ as a product of the 1-forms without $e^a$ such that $\Vol = e^a \wedge \Vol_a$ (no sum). Then the computation of $\extd_V (L \Vol)$ and hence of the EL form and boundary form goes as in Theorem~\ref{thm:ELandBoundaryTerm}, yielding the same results, now with $a \in \CC$. Similarly, Section~\ref{sec:Noether} can be reproduced up to the introduction of a Lagrangian and correspondingly the form of the stress-energy tensor $T_{a b}$, but the computation of the conserved charges needs extra care, and is also a little different when we make the interpretation 1+0 classical field theory. We will give the results in detail for $\Z^m,\Z^{1,m-1}$, but the method works in the same way for any Abelian group of the form $\Z\times G$ or $\Z_N \times G$, replacing $\CC_{m-1}$ as defined below by $\CC_G$.

\subsection{EL equations and translation symmetries for $X=\Z^m, \Z^{1,m-1}$}

We now focus on scalar field theory on $X=\Z^m$ with the Euclidean metric, meaning that $(e^a,e^b) = \delta^{a,b^{-1}}$. For the calculus, we take $\CC$ to be the set of all the positive and negative directions in $\Z^m$ $$\CC = \{\pm v^i = \pm(0,\dots,0,1,0,\dots,0)\in \Z^m\vert i = 1,\dots, m\}$$
where the $\pm1$ is in the $i$-th position. The volume form is then
\[
    \Vol_{\Z^m} = e^{1+} \wedge e^{1-} \wedge \cdots \wedge e^{m+} \wedge e^{m-}, 
\]
where $e^{i\pm}$ corresponds to the direction $\pm v^i$.

Consider now the following action for free scalar field theory on $\Z^m$
\[
    S[\phi] = \frac{1}{2}\int_{\Z^m} \left(-\frac{1}{2}(\extd \phi,\extd \phi) -m^2 \phi^2\right)\Vol_{\Z^m}
    = \frac{1}{2}\int_{\Z^m} \left(\frac{1}{2} \sum_{a\in \CC} (\del_a \phi)^2 -m^2 \phi^2\right)\Vol_{\Z^m}
\] 
where we again introduce the factor $-{1\over 2}$ in front of the kinetic term in order to recover the classical action from Example~\ref{ELR}. Recall that the minus cancels the one coming from $(\extd \phi,\extd \phi)$, and the ${1\over 2}$ is there since every classical direction corresponds to two directions in $\CC$ (namely $e^{i\pm}$). Given the Lagrangian $L = \frac{1}{4} \sum_{a\in \CC} u_a^2 - \frac{1}{2} m^2 u^2$, and we can use Theorem~\ref{thm:ELandBoundaryTerm} to compute the EL and boundary forms
\[
    EL = \left(- m^2 u + \frac{1}{2} \sum_{a\in \CC}u_{aa^{-1}}\right) d_V u \wedge \Vol, 
    \qquad
    \Theta = -\frac{1}{2} \sum_{a\in \CC}u_{a^{-1}}d_V u \wedge \Vol_a.
\]
We see directly that the continuum limit is expected to correspond to the one in Example~\ref{ELR}. Using $u_{aa^{-1}} = -u_a -u_{a^{-1}}$ the EL equations for the field $\phi$ are then
\[
    \left(\sum_{a\in \CC} \del_a + m^2\right)\phi = 0.
\]
Here 
\[\Delta= \sum_{a\in \CC}\del_a= -{1\over 2}(\ ,\ )\nabla\extd \]
is the standard graph or lattice Laplacian with the Euclidean (constant) metric on edges and $\nabla(\extd e_a)=0$, cf.~\cite{BegMa}. We thus recover the Klein-Gordon equation $(\Delta +m^2)\phi=0$ for this case. The $1/2$ relates to the doubling of derivatives.

Translation symmetry can be set up as in Section~\ref{sec:Noether}, by defining the interior product $\iota_\epsilon$ through $\iota_\eps(e^a) = \epsilon^a$, $\iota_\eps(\extd_V u_I) = -\sum_{a\in \CC} \epsilon^a u_{a I}$ and extending it as in Lemma~\ref{lem:IntProd}. The `naive Noether current' is then 
\[
    j_0 = \sigma -\iota_H(L\Vol) - \iota_V \Theta = -\sum_{a,b\in \CC} \epsilon^b \left(\frac{1}{2}u_{a^{-1}}u_b + \delta^{a}_b L \right) \Vol_a
\]
where we copied the classical case from Example~\ref{ex:ClStressEnergy} with $\sigma = 0$, used the left module map property and~\eqref{iVol}. Again this current is not conserved but close enough to be corrected.

\begin{proposition}
\label{prop:ConservedCurrentScalarField}
    The current
    \[
        j = - \sum_{a,b\in \CC} \epsilon^b \left(\frac{1}{2}\left(u_{a^{-1}}+\frac{1}{2}u_{b a^{-1}}\right)u_b + \delta^{a}_b L \right) \Vol_a
    \]
    is conserved when the EL equations hold for the Lagrangian $L = \frac{1}{4}\sum_{a\in \CC} u^2_a -\frac{1}{2} m^2 u^2$. 
\end{proposition}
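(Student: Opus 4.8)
The plan is to follow the template of Proposition~\ref{prop:ConservedCurrentClMech} almost verbatim, since the scalar-field Lagrangian $L = \frac{1}{4}\sum_{a\in\CC} u_a^2 - \frac{1}{2}m^2 u^2$ differs from the classical-mechanics one only by the extra mass term and by the fact that the sum runs over all of $\CC$ rather than $\{\pm\}$. First I would compute $\extd_H j_0$ directly. Writing $j_0 = -\sum_{a,b}\eps^b\big(\frac12 u_{a^{-1}}u_b + \delta^a_b L\big)\Vol_a$ and using $\extd_H(f\Vol_a) = D_a(f)\Vol$, I get
\[
\extd_H j_0 = -\sum_{a,b}\eps^b\Big(\tfrac12 D_a(u_{a^{-1}}u_b) + D_b L\Big)\Vol .
\]
The mass-term contribution to $D_b L$ is $-\tfrac12 m^2 D_b(u^2) = -\tfrac12 m^2(u^2_b + 2u u_b)$; the kinetic contribution to $D_b L$ is $\tfrac14\sum_a D_b(u_a^2) = \tfrac14\sum_a(u_{ab}^2 + 2u_a u_{ab})$. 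Expanding $D_a(u_{a^{-1}}u_b)$ by the Leibniz rule \eqref{eq:propRD}, exactly as in the classical-mechanics proof, and using $u_{a^{-1}} = -R_{a^{-1}}u_a = -(u_a + u_{aa^{-1}})$, the terms linear in $u_a u_{ab}$ and quadratic in $u_{ab}$ should combine so that $\extd_H j_0$ reduces to a multiple of the EL expression times $u_b$, plus a leftover term that is an exact $\extd_H$ of something. I expect the EL factor $\big(-m^2 u + \frac12\sum_a u_{aa^{-1}}\big)$ to emerge (its $u u_b$ part coming jointly from the mass term and the $D_a$ expansion), and the leftover to be $\tfrac14\sum_{a,b}\eps^b D_a(u_{a^{-1}b}u_b)\Vol = \tfrac14\sum_{a,b}\extd_H(\eps^b u_{a^{-1}b}u_b \Vol_a)$, using $u_{aa^{-1}b} = -u_{ab}-u_{a^{-1}b}$ as before.

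The second step is to absorb that leftover exact term by redefining $j := j_0 - \sum_{a,b}\tfrac14\eps^b u_{a^{-1}b}u_b \Vol_a$, which is precisely the current in the statement (the coefficient $\tfrac12\cdot\tfrac12 = \tfrac14$ of $u_{ba^{-1}}$ matches, noting $u_{a^{-1}b}=u_{ba^{-1}}$ by symmetry of the multi-index). Then $\extd_H j$ equals the pure EL term up to the non-exact remainder, which vanishes on-shell, giving conservation. I would also remark that, since the only new ingredient relative to Proposition~\ref{prop:ConservedCurrentClMech} is the mass term and it contributes $D_b(-\tfrac12 m^2 u^2) = -m^2 u u_b - \tfrac12 m^2 u_b^2 = \big({\del V\over\del u} + \tfrac12 D_b{\del V\over\del u}\big)u_b$ with $V = \tfrac12 m^2 u^2$, this is exactly the quadratic-potential identity already used in Corollary~4.21 of the $\Z$ case; so the mass term simply shifts the EL factor from $\tfrac12\sum_a u_{aa^{-1}}$ to $\tfrac12\sum_a u_{aa^{-1}} - m^2 u$ without breaking the structure of the argument.

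The step I expect to be the main obstacle is the bookkeeping of the sum over $\CC$ in the Leibniz expansion: in the $\Z$ case there were only the two terms $a=\pm$ and the simplification $R_-(u^2_+) = u^2_-$ was used explicitly, whereas here I must track $R_a$ and $D_a$ acting on $u_b, u_{ab}, u_{a^{-1}b}$ for all $a,b\in\CC$ simultaneously, and be careful that shift operators in different directions commute (true since $G$ is Abelian, so $R_aR_b = R_bR_a$) and that the identities $D_aR_{a^{-1}} = -D_{a^{-1}}$, $u_{aa^{-1}} = -u_a - u_{a^{-1}}$ are applied only to the matching index pair. Everything else — the final rewriting of the leftover as $\extd_H$ of a $\Vol_a$-valued expression and the on-shell vanishing — is formally identical to the $\Z$ proof and should go through without incident.
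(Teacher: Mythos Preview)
Your proposal is correct and follows essentially the same route as the paper's proof: compute $\extd_H j_0$ via the Leibniz rule for $D_a$, use $u_{a^{-1}} = -(u_a + u_{aa^{-1}})$ to extract the EL factor $\big(\tfrac12\sum_a u_{aa^{-1}} - m^2 u\big)u_b$, and rewrite the residual $\tfrac14 u_{ab}^2 - \tfrac12 m^2 u_b^2$ piece using $u_{aa^{-1}b} = -u_{ab}-u_{a^{-1}b}$ so that it becomes a second EL-type term $\tfrac12 D_b(\text{EL})\,u_b$ plus the exact correction $-\tfrac14 D_a(u_{a^{-1}b}u_b)$, which is then absorbed into $j$. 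The paper makes this middle rewriting slightly more explicit than your sketch does, but the ingredients, the order of operations, and the final current all match; your observation that the mass term enters exactly via the quadratic-potential identity from the $\Z$ Corollary is precisely how the paper handles it.
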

\begin{proof}
    Start with the naive current $j_0$, we have
    \begin{align*}
        \extd_H j_0 = -\sum_{a,b\in \CC} \epsilon^b \extd_H \left(\left( {1\over 2}u_{a^{-1}}u_b + \delta^{a}_b L \right) \Vol_a\right)
        = - \sum_{b\in \CC} \epsilon^b \left( {1\over 2} \sum_{a\in \CC}D_a (u_{a^{-1}}u_b) + D_b L \right) \Vol.
    \end{align*}
    Using the Leibniz rule for $D_a$ and noting $D_b(u^2) = u^2_b + 2u u_b$, similarly for $D_b(u^2_a)$, we find
    \begin{align*}
        \extd_H j_0 &= - {1\over 2} \sum_{b\in \CC} \epsilon^b \left(\sum_{a\in \CC}\left( u_{a a^{-1}}(u_b + u_{a b}) + u_{a^{-1}} u_{ab} + {1\over 2}(u^2_{ab} + 2u_a u_{ab})\right) - m^2 (u^2_b + 2u u_b) \right) \Vol.
    \end{align*}
    Using $u_{a^{-1}} = -R_{a^{-1}} u_a = -(u_a + u_{a a^{-1}})$ this simplifies to
    \begin{align*}
        \extd_H j_0 = -\sum_{b\in \CC} \epsilon^b \left( \left( {1\over 2}\sum_{a\in \CC}u_{a a^{-1}} - m^2 u\right) u_b +\frac{1}{4}\sum_{a\in \CC} u^2_{a b} - \frac{1}{2}m^2u^2_b\right) \Vol
    \end{align*}
    where we recognise the first term as the EL term. For the other terms, note that
    \begin{align*}
        - \frac{1}{2}m^2u^2_b &= \frac{1}{2} D_b \left(\frac{1}{2}\sum_{a\in \CC}u_{a a^{-1}} - m^2 u\right) u_b - \frac{1}{4} \sum_{a\in \CC} u_{a a^{-1} b} u_b\\
        &= \frac{1}{2} D_b \left(\frac{1}{2}\sum_{a\in \CC}u_{a a^{-1}} - m^2 u\right) u_b + \frac{1}{4} \sum_{a\in \CC}\left(u_{ab} u_b + u_{a^{-1} b} u_b\right)\\
        &= \frac{1}{2} D_b \left(\frac{1}{2}\sum_{a\in \CC}u_{a a^{-1}} - m^2 u\right) u_b - \frac{1}{4} D_a(u_{a^{-1}b}u_b) - \frac{1}{4} \sum_{a\in \CC} u^2_{ab} .
    \end{align*}
    where in the last line we used
    \[
        -D_a(u_{a^{-1}b}u_b) =  - u_{aa^{-1}b}(u_b + u_{a b}) - u_{a^{-1} b}u_{ab}
        = u_{ab}(u_b + u_{ab}) + u_{a^{-1}b}u_b 
    \]
   due to $u_{aa^{-1}b} = -u_{ab} - u_{a^{-1}b}$.
    \begin{align*}
        \extd_H j_0 &= -\sum_{b\in \CC} \epsilon^b \Big(\big( {1\over 2}\sum_{a\in \CC}u_{a a^{-1}} - m^2 u\big) u_b + \frac{1}{2} D_b\big( {1\over 2}\sum_{a\in \CC}u_{a a^{-1}} - m^2 u\big) u_b  \\
        &\qquad\qquad\quad -\frac{1}{4} \sum_{a\in \CC} D_{a}(u_{a^{-1} b} u_b)\Big) \Vol.
    \end{align*}
 The first and second terms vanish when the EL equations are fulfilled, and the last can be written as $-\sum_{a,b\in \CC} \frac{1}{2}\extd_H(u_{a^{-1} b} u_b) \Vol_a)$, so that the current $$j= j_0 + \sum_{a, b\in \CC} \frac{1}{4} (u_{a^{-1} b} u_b) \Vol_a$$ is conserved when the EL equations hold. 
\end{proof}

Constructing the stress-energy tensor works in the same way, by setting $j = \sum_{a,b\in \CC} \epsilon^b T^a{}_{b} \Vol_{a}$ and lowering indices with the metric such that $T_{ab} = T^{a^{-1}}{}_b$. Then $\extd_H j = 0$ (on-shell) implies $\sum_a D_{a^{-1}} T_{a b} = 0$ (on-shell) for all $b$. For free Euclidean scalar field theory we can read off
\[
    T_{a b} = - \left(\frac{1}{2}u_{a}u_b+\frac{1}{4}u_{ab}u_b + \delta^{a^{-1}}_b L \right)
\]
Note that again this stress-energy tensor is not symmetric as $T_{a b} - T_{b a} = \frac{1}{2}u_{a b} (u_a - u_b)$. Similar to Corollary~\ref{cor:indeqDivT} we find

\begin{corollary}
    \label{cor:indeqDivTScalarField}
    The divergence free condition for the stress-energy tensor $\sum_a D_{a^{-1}} T_{a b} = 0$ for all $b$ encodes $|\CC|/2$ independent equations.
\end{corollary}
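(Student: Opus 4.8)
The plan is to mimic the argument of Corollary~\ref{cor:indeqDivT} but carried out directly on $\Z^m$, where $\CC$ splits into $|\CC|/2$ inverse pairs $\{a,a^{-1}\}$. First I would revisit the computation of $\extd_H j$ from the proof of Proposition~\ref{prop:ConservedCurrentScalarField}: just before the final rewriting, one has, for each $b\in\CC$, an identity of the schematic form
\[
\sum_{a\in\CC} D_{a^{-1}} T_{a b}
= -\sum_{a\in\CC}\Bigl(\text{EL}(b)\,u_b + \tfrac12 D_b\bigl(\text{EL-type term in }a\bigr)\,u_b\Bigr) + (\text{total }\extd_H\text{ terms}),
\]
where $\text{EL}(b)$ abbreviates $\bigl(\tfrac12\sum_{a}u_{aa^{-1}}-m^2u\bigr)$, which is the EL expression and hence vanishes on-shell. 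The point is that the right-hand side of this identity, modulo the EL expression and a $D_b$ of the EL expression, is the \emph{same} function of $u_b$ for $b$ and for $b^{-1}$: concretely one uses $D_a = -D_{a^{-1}}R_a$ and $R_b(D_{b^{-1}}(\,\cdot\,)u_{b^{-1}}) $ to trade the $b$-version against the $b^{-1}$-version, exactly as in Corollary~\ref{cor:indeqDivT}, obtaining a relation
\[
\sum_{a\in\CC} D_{a^{-1}} T_{a b}
= (R_b+\id)\bigl(\text{EL expression}\bigr) + \sum_{a\in\CC} D_{a^{-1}} T_{a b^{-1}}.
\]
Since the first term is a multiple of $\text{EL}(b)$ it vanishes on-shell, so the divergence-free condition for $b$ and for $b^{-1}$ coincide when the equations of motion hold.

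Next I would argue the complementary statement, namely that the $|\CC|/2$ conditions obtained this way (one per inverse pair) are genuinely independent, i.e. no further collapsing occurs. For this one observes that the condition indexed by $b$ involves $u_b$ as an overall factor (from the $u_b$ in $T_{ab}$ and in $\extd_H j$), and the various $u_b$, $b$ ranging over a set of representatives of inverse pairs, are algebraically independent jet coordinates; so the $|\CC|/2$ expressions cannot satisfy any nontrivial linear relation over $C(J^\infty)$ that is not already a consequence of the EL equation. Alternatively, and perhaps more cleanly, one invokes the continuum-limit correspondence: under $\Z^m\to\R^m$ the pair $\{v^{i\pm}\}$ collapses to the single classical direction $i$, and the divergence-free condition for that pair goes over to $D_i T^i{}_j = 0$ of Example~\ref{ex:ClStressEnergy}, which are $m=|\CC|/2$ manifestly independent equations (the index $j$ running over the $m$ classical directions). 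Since independence is an open/generic condition it survives the limit in the reverse direction for the lattice expressions.

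Assembling, the corollary follows: the relation $\sum_a D_{a^{-1}}T_{ab}=0$ for $b$ and $b^{-1}$ are equivalent on-shell (giving at most $|\CC|/2$ conditions), and these $|\CC|/2$ are independent (giving at least $|\CC|/2$). The main obstacle I anticipate is purely bookkeeping: carefully extracting from the proof of Proposition~\ref{prop:ConservedCurrentScalarField} the precise intermediate identity for $\sum_a D_{a^{-1}}T_{ab}$ in terms of the EL expression and total-difference terms, and then correctly manipulating the shift operators $R_a, D_a$ (using $D_a=-D_{a^{-1}}R_a$ and the twisted Leibniz rule \eqref{eq:propRD}) to pair up the $b$ and $b^{-1}$ contributions without sign errors; the conceptual content, that pairing inverse directions halves the count and nothing further degenerates, is exactly as in the $\Z$ case of Corollary~\ref{cor:indeqDivT}.
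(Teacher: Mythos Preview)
Your approach for the core content---extracting from Proposition~\ref{prop:ConservedCurrentScalarField} an explicit expression for $\sum_{a}D_{a^{-1}}T_{ab}$ in terms of the EL expression and then using $D_b = -R_bD_{b^{-1}}$ to pair the $b$- and $b^{-1}$-conditions---is exactly what the paper does. The paper's proof writes
\[
\sum_{a\in \CC} D_{a^{-1}} T_{a b}
= \Bigl(\tfrac{1}{2}\sum_{a}u_{aa^{-1}} - m^2 u\Bigr)u_b + \tfrac{1}{2}D_b\Bigl(\tfrac{1}{2}\sum_{a}u_{aa^{-1}} - m^2 u\Bigr)u_b
\]
and then rewrites the second term via $D_b = -R_bD_{b^{-1}}$ to relate it to the $b^{-1}$ expression, just as in Corollary~\ref{cor:indeqDivT}. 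So for the pairing step you are aligned with the paper.

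Where you diverge is in attempting to prove that the remaining $|\CC|/2$ conditions are genuinely independent. The paper does \emph{not} do this: its proof stops at ``the two conditions depend on one another'', establishing only the upper bound. Your algebraic observation (each condition carries an overall factor $u_b$, and the $u_b$ for distinct inverse-pair representatives are independent jet coordinates) is a reasonable heuristic and closer in spirit to a proof than anything in the paper. Your continuum-limit argument, however, is phrased backwards: what you need is the contrapositive (a dependence on the lattice would persist in the limit, contradicting classical independence), not that independence ``survives the limit in the reverse direction''. As written that step is not a valid inference, though the repaired contrapositive version would be. In any case, this extra work exceeds what the paper actually proves.
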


\begin{proof}
To see this, we relate the divergence free conditions for $b$ and $b^{-1}$. Comparing the expression for $\extd_H j$ from Proposition~\ref{prop:ConservedCurrentScalarField} and $j = \sum_{a,b\in \CC} \epsilon^b T_{a b} \Vol_{a^{-1}}$ we can relate the divergence of the stress-energy tensor to an expression which depends on the EL equations. From there we compute
\begin{align*}
    \sum_{a\in \CC} D_{a^{-1}} T_{a b} 
    &= \left({1\over 2}\sum_{a\in \CC}  u_{a a^{-1}} - m^2 u^2\right) u_b + \frac{1}{2} D_b\left({1\over 2}\sum_{a\in \CC}u_{a a^{-1}} - m^2 u\right) u_b\\
    &= \left({1\over 2}\sum_{a\in \CC}  u_{a a^{-1}} - m^2 u^2\right) u_b - R_{b}\left(\frac{1}{2} D_{b^{-1}}\left({1\over 2}\sum_{a\in \CC}u_{a a^{-1}} - m^2 u\right) u_{b^{-1}}\right).
\end{align*}
Similar to the $\Z$ case in Corollary~\ref{cor:indeqDivT}, the last term in the brackets can be related to the expression for $b^{-1}$, meaning that the two conditions depend on one another.
\end{proof}

To consider conserved charges, we now split $\Z^m = \Z \times \Z^{m-1}$, with corresponding generating sets $\CC_1 = \{t,t^{-1}\}$ and $\CC_{m-1} = \CC \backslash \{\CC_1\}$, where $t,t^{-1}$ are the positive and negative Euclidean time directions $\pm v^1$. The volume form on $\Z^{m-1}$ is then 
\[
    \Vol_{\Z^{m-1}} = e^{2+} \wedge e^{2-} \wedge \cdots \wedge e^{m+} \wedge e^{m-}.
\]
With this split, we can define conserved charges in a similar way to Corollary~\ref{cor:ClConCh}. 
\begin{corollary}\label{corQ}
    With the above split, write $j = j^t \Vol_t + j^{t^{-1}} \Vol_{t^{-1}} + \sum_{b\in \CC_{m-1}} j^b \Vol_b$. The associated {\em conserved charge}
    \[
    Q[\phi] = \int_{\Z^{m-1}} e_\infty^*(j^t-R_{t^{-1}}j^{t^{-1}})(i,\phi) \Vol_{\Z^{m-1}}
    \]
    is conserved in the sense that $\del_t Q = 0$ on-shell. Furthermore, the charge and current densities
    \[
    \rho[\phi] \coloneqq
    e_\infty^*(j^t-R_{t^{-1}}j^{t^{-1}})(\cdot,\phi),
    \quad \quad 
    J^x[\phi] \coloneqq e_\infty^*(j^x-R_{x^{-1}}j^{x^{-1}})(\cdot,\phi)
    \] 
    for $x$ the positive spatial directions in $\Z^{m-1}$, satisfy the continuity equation
    \[
        \del_t \rho + \sum_{x}\del_x J^x = 0
    \]
    on-shell.
\end{corollary}

\begin{proof}
First note that the conservation of $j$ is equivalent to
\[
    D_t j^t + D_{t^{-1}} j^{t^{-1}} + \sum_{b\in \CC_{m-1}} D_{b} j^b
\]
vanishing on-shell. Using $\del_a e_\infty^*(\Phi) = e_\infty^*(D_a\Phi)$, $-D_{a}R_{a^{-1}} = D_{a^{-1}}$ and $\int_{\Z^{m-1}}\Vol_{\Z^{m-1}} \del_b = 0$ for $b\in \CC_{m-1}$, we have
\begin{align*}
    \del_t Q &= \int_{\Z^{m-1}} e_\infty^*(D_t j^t-D_t R_{t^{-1}}j^{t^{-1}})(i,\phi) \Vol_{\Z^{m-1}}\\
    &= - \sum_{b\in \CC_{m-1}} \int_{\Z^{m-1}} \del_b e_\infty^*(j_b)(i,\phi) \Vol_{\Z^{m-1}} = 0
\end{align*}
on-shell. The continuity equation can be derived by rewriting the conservation of $j$ as 
\[
    D_t (j^t - R_{t^{-1}} j^{t^{-1}}) + \sum_{x} D_{x}(j^x - R_{x^{-1}} j^{x^{-1}}).
\]
where $x$ are the spatial directions in $\Z^{m-1}$.
\end{proof}

For translation symmetries, we have $|\CC| / 2$ conserved quantities due to~\ref{cor:indeqDivTScalarField} as $j^a = \sum_{b\in \CC}\epsilon^b T_{a^{-1}b}$, with the energy corresponding to the $t$ direction and momenta corresponding to the $\Z^{m-1}$ spatial directions being
\begin{align*}
    E[\phi] &= \int_{\Z^{m-1}} (T_{t^{-1} t}[\phi] - R_{t^{-1}}T_{t t}[\phi]) \Vol_{\Z^{m-1}}\\
    &= \int_{\Z^{m-1}} {1\over 2}\left( - (\del_t \phi) (\del_{t^{-1}} \phi) - {1\over 2}\sum_{b\in \CC_{m-1}} (\del_b \phi)^2+ m^2 \phi^2 \right) \Vol_{\Z^{m-1}},\\
     P_{b}[\phi] &= \int_{\Z^{m-1}} (T_{t^{-1} b}[\phi] - R_{t^{-1}}T_{t b}[\phi]) \Vol_{\Z^{m-1}}\\
    &= \int_{\Z^{m-1}}{1\over 2}(\del_{t^{-1}}\phi) \left(\del_{b^{-1}} \phi - \del_{b}\phi\right) \Vol_{\Z^{m-1}}.
\end{align*}
where for the momentum $P_b$ we had to use integration by parts and $\int_{\Z^{m-1}}(\del_b f) \Vol_{Z^{m-1}} = 0$ to arrive to $P_b$.

Comparing the expressions to what was found in the continuum case  in Example~\ref{ex:ClStressEnergy}, we see that these are similar. In the classical limit, where we expect $\del_t \phi$, $\del_{t^{-1}} \phi$ to correspond to $\pm\del_0 \phi$, and the spatial derivatives $\del_b \phi$, $\del_{b^{-1}} \phi$ to $\pm\del_i \phi$, we see that these expressions recover exactly the classical forms for $E$ and $P_i$ for the Euclidean metric $g_{00} = g_{ii} = 1$. The extra $1/2$ is there in order to account for the doubling of the derivatives.

In the case of the Minkowski metric, let us define $(e^a,e^b) = \eta^{ab}= \eta^a\delta^{a,b^{-1}}$, with $\eta^t = \eta^{t^{-1}} = 1$ and $\eta^c = \eta^{c^{-1}}= - 1$ for $c\in \CC_{m-1}$. The action is then
\[
    S[\phi] = \frac{1}{2}\int_{\Z^m} \left(-\frac{1}{2}(\extd \phi,\extd \phi) -m^2 \phi^2\right)\Vol_{\Z^m}
    = \frac{1}{2}\int_{\Z^m} \left(\frac{1}{2} \sum_{a\in \CC} \eta^a(\del_a \phi)^2 -m^2 \phi^2\right)\Vol_{\Z^m}
\] 
with the Lagrangian $L = \frac{1}{4} \sum_{a\in \CC} \eta^a u_a^2 - \frac{1}{2} m^2 u^2$, which leads to the following EL and boundary forms
\[
    EL = \left(- m^2 u + \frac{1}{2} \sum_{a\in \CC}\eta^a u_{aa^{-1}}\right) d_V u \wedge \Vol, 
    \qquad
    \Theta = -\frac{1}{2} \sum_{a\in \CC}\eta^a u_{a^{-1}}d_V u \wedge \Vol_a.
\]
The procedure to find the conserved current is similar to before, which now results in
\begin{equation}
    \label{eq:CurrentMin}
    j = - \sum_{a,b\in \CC} \epsilon^b \left(\frac{\eta^a}{2}\left(u_{a^{-1}}+\frac{1}{2}u_{b a^{-1}}\right)u_b + \delta^{a}_b L \right) \Vol_a
\end{equation}
In this case, we want to construct a stress-energy tensor $T$ which satisfies the divergence free condition for the Minkowski metric, where 
\[
   ((\,,\,) \tens \id)\nabla (T)
   =\sum_{a,b} (\extd (T_{a b}) ,e^a) e^b = \sum_{a,b}\eta^{a}\del_{a^{-1}} (T_{a b} ) e^b
\]
since $\eta^a = \eta^{a^{-1}}$. Therefore, working `upstairs' on the jet bundle, we want to find $T$ such that
$\sum_{a\in \CC} \eta^{a} D_{a^{-1}} T_{ab}=0$ on-shell. We again set $$j = \sum_{a,b\in \CC} \epsilon^b T^a{}_{b} \Vol_{a}$$ with $\extd_H j=0$ on-shell corresponding to  $\sum_{a\in \CC} D_{a^{-1}} T^a{}_{b}=0$ on-shell. Now to lower the indices we use the metric $\eta_{ab} = \eta^{ab} = \eta^a \delta^{a,b^{-1}}$ and write
\[
    T_{ab} = \sum_{c\in \CC} \eta_{ac} T^c{}_b = \eta^a T^{a^{-1}}{}_b,
\]
and we see that $T_{ab}$ satisfies the divergence free condition on-shell for the Minkowski metric. From equation \eqref{eq:CurrentMin} we find
\[
    T_{ab}= -\left(\frac{1}{2} u_a u_b +\frac{1}{4} u_{ab} u_b + \frac{\delta^{a^{-1}}_b}{\eta^a} L \right).
\]
The definition of the energy and momenta follow as in the Euclidean case, leading to
\begin{align*}
    E[\phi] &= \int_{\Z^{m-1}} (T_{t^{-1} t}[\phi] - R_{t^{-1}}T_{t t}[\phi]) \Vol_{\Z^{m-1}}\\
    &=\int_{\Z^{m-1}} {1\over 2}\left(-(\del_t \phi) (\del_{t^{-1}} \phi) + {1\over 2}\sum_{b\in \CC_{m-1}} (\del_b \phi)^2+ {m^2}\phi^2 \right) \Vol_{\Z^{m-1}},\\
     P_{b}[\phi] &= \int_{\Z^{m-1}} (T_{t^{-1} b}[\phi] - R_{t^{-1}}T_{t b}[\phi]) \Vol_{\Z^{m-1}}\\
    &= \int_{\Z^{m-1}}{1\over 2}\del_{t^{-1}}\phi \left(\del_{b^{-1}} \phi - \del_{b}\phi\right) \Vol_{\Z^{m-1}}.
\end{align*}
With the same arguments as before, we see that in this case we indeed recover the continuum limit in Example~\ref{ex:ClStressEnergy}.

\subsection{Scalar field theory in the (1+1)-dimensional case}\label{sec1+1}

We focus now on (1+1)-dimensional case of scalar field theory, starting with the Euclidean version where the Klein-Gordon equation $(\Delta + m^2) \phi(t,x)=0$ is solved by the plane wave solutions
\[
    \phi(t,x) = A\cos(\omega t + \kappa x)
\]
satisfying the following dispersion relation for a lattice 
\begin{equation}
    \label{eq:DispRelEuc}
    m^2 = 2(2-\cos(\omega) - \cos(\kappa)).
\end{equation}
as in~\cite{MontMun,RoLat,Smi}. This is normally justified from its appearance in poles in the propagator after Fourier transform to $\omega,\kappa$ as the Fourier conjugate variables, whereas our considerations here will be in line with the continuum version from classical variational calculus in Example~\ref{ex:ClStressEnergy}, for which we have the required limit $m^2 \simeq \omega^2 + \kappa^2$ for $\omega, \kappa \ll 1$ much smaller than the lattice spacing. Note that in Section~\ref{sec:Noether}, we saw that a classical particle $q$ in a quadratic potential $V(q) = \frac{1}{2}m\omega^2 q^2$ only allowed for plane-wave solutions for $\omega < 2$, with a different type of numerical solutions appearing for $\omega > 2$. Here, we similarly find that the mass $m$ of the field for a plane wave (now playing the role of the frequency $\omega$ in the previous example) is also bounded as $m \leq 2\sqrt{2}$ as $\cos(\omega), \cos(\kappa)\in [-1,1]$.

The relation~\eqref{eq:DispRelEuc} can be inverted to give 
\begin{equation}
    \label{eq:FreqEuc}
    \omega(\kappa) = \arccos\left( 2 - \cos(\kappa) -{m^2 \over 2} \right)
\end{equation}
which is defined for all $\kappa \in [-\pi,\pi]$ that satisfy $\cos(\kappa) \in \left[1-{m^2 \over 2},3-{m^2 \over 2}\right] \cap \left[-1,1\right]$. For example for the values of $m=0,2,2\sqrt{2}$ we find $\cos(\kappa) = 1$, $\cos(\kappa) \in [-1,1]$ and $\cos(\kappa) = -1$ respectively. The dispersion relation is illustrated in Figure~\ref{fig:DispEnMomEuc} at essentially these values of $m$, where it is also compared with the continuum counter part shown dashed. We see that the results match for small masses, but are very different as the fields become more massive. In particular, for massless modes we have  $\kappa = 0$, for modes with mass $m=2$ we see that all  $\kappa$ are allowed, and as we approach $m=2\sqrt{2}$ we have that only the modes close to $\kappa = -\pi,\pi$ propagate. We also observe that the dispersion relation~\eqref{eq:DispRelEuc} has an interesting symmetry 
\[ m \mapsto m' = \sqrt{8-m^2},\quad \omega \mapsto \omega' = \pi - \omega,\quad \kappa \mapsto \kappa' = \pi - \kappa,\] 
with the fixed point at $m' = m = 2$. For example,  $m \to 0$, $\omega(\kappa)$ `collapses' to the point $\omega = \kappa = 0$ in Figure~\ref{fig:DispEnMomEuc}, while in the limit $m\to 2\sqrt{2}$, the dispersion relation mirrors this behaviour but towards the point $\omega = \pi$, $\kappa = \pi$. From $\omega(\kappa)$, we find that the phase and group velocities are
\[
    v_{ph} = \frac{\omega(\kappa)}{\kappa},
    \quad \quad
    v_{gr} = \frac{\del \omega(\kappa)}{\del \kappa} = -\frac{\sin (\kappa)}{\sqrt{1-\left(2-\cos (\kappa)-\frac{m^2}{2}\right)^2}},
\]
as illustrated in Figure~\ref{fig:VelEuc}.

\begin{figure}[h!]
    \centering
    \includegraphics[width=1.05\textwidth]{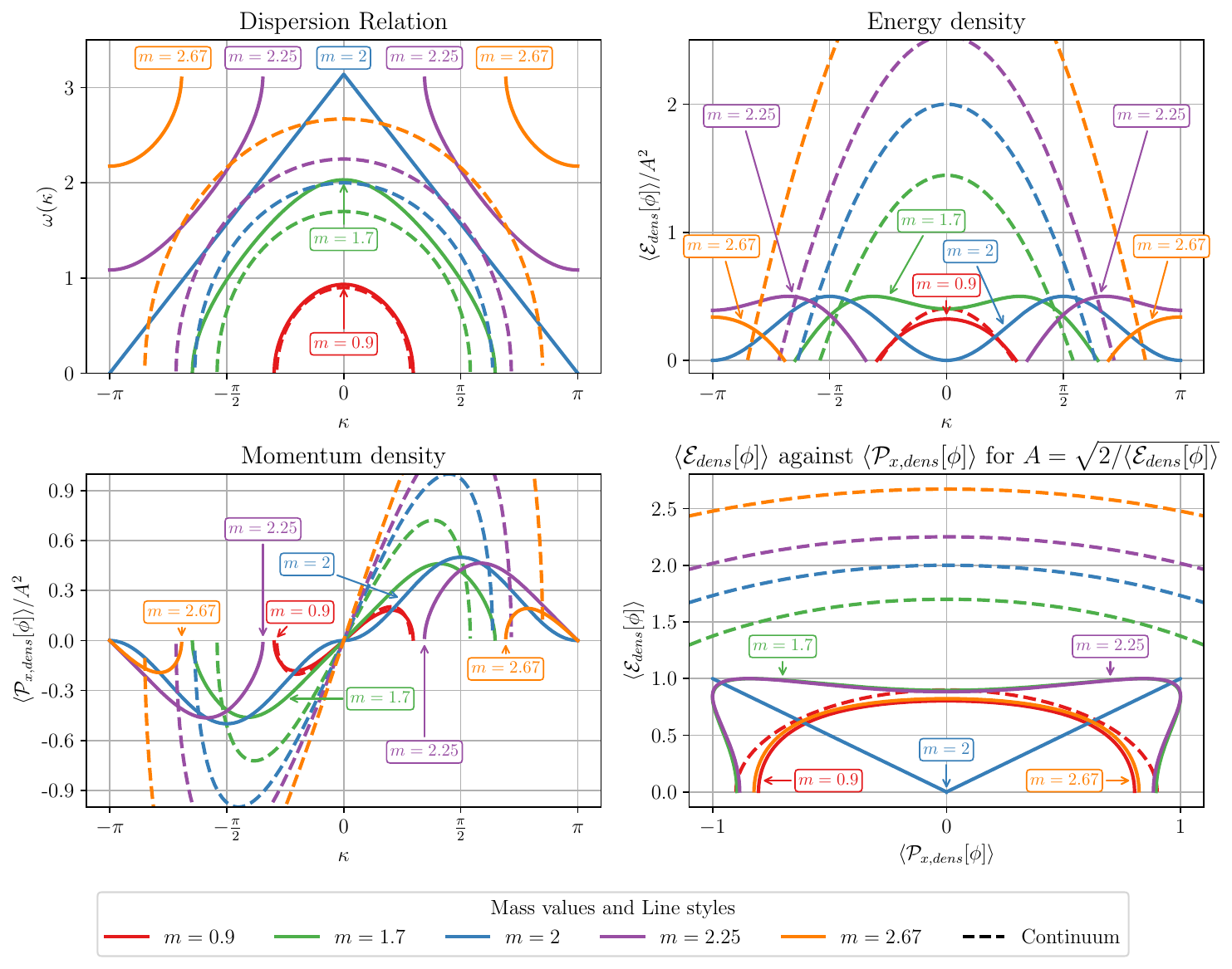}
    \caption{Plots of the dispersion relation, energy and momentum densities at masses $m= 0.9,1.7, 2, 2.25,2.67$ and comparison with the continuum counterparts in the Euclidean case, shown as dashed. The $m>2$ values are the $m'=\sqrt{8-m^2}$ counterparts of the $m<2$ values but with $-0.01$ offset to separate curves in the last plot. For small masses, the lattice and continuum plots almost match. At $m=2$,  waves with all $\kappa \in [-\pi,\pi]$ propagate in the lattice. As we approach $m= 2\sqrt{2}$, waves closer to $\kappa = 0$ stop propagating first, until eventually at $m=2\sqrt{2}$, only waves with $\kappa = -\pi,\pi$ propagate. The lower right plot shows the energy density against the momentum density for $A = \sqrt{2/\<\mathcal{E}_{dens}[\phi] \>}$ as in equation~\eqref{eq:EPrelEuc}.}
    \label{fig:DispEnMomEuc}
\end{figure}

Regarding the energy and momentum densities, we find that they are given by
\begin{align*}
    \mathcal{E}_{dens}[\phi] &= -{1\over 2} (\del_t \phi) (\del_{t^{-1}} \phi) - {1\over 4} (\del_x \phi)^2+ {1\over 2} m^2 \phi^2\\
    &= \frac{A^2}{8} \left(4 \cos ^2(\kappa x+t \omega)-\cos (2 \kappa (x-1)+2 t \omega)-\cos (2 (\kappa x+\kappa+t \omega))-2 \cos (2 \omega)\right),\\
    \mathcal{P}_{x,dens}[\phi] &= {1\over 2}(\del_{t^{-1}}\phi) \left(\del_{x^{-1}}\phi - \del_x \phi \right) = A^2 \sin (\kappa) \sin (\omega) \sin ^2\left(\kappa x+\frac{\kappa}{2}+t \omega-\frac{\omega}{2}\right).
\end{align*}
As we saw in Example~\ref{ex:ClStressEnergy}, these densities do not make it directly clear that the energy $E[\phi]$ and momenta $P_x[\phi]$ are time independent. In the continuum, we took the spatial average over a period $2\pi/\kappa$ and found that it is time independent for both quantities, signifying that the total energy and momentum will also be. In the discrete case, taking the average over a period $2\pi/\kappa$ is not natural as the period is not an integer in general. Instead, we average over the whole $\Z$ line to find
\[
    \<\mathcal{E}_{dens}[\phi] \> = \lim_{N\to \infty} \frac{1}{2N}\sum^{x=N}_{x=-N} \mathcal{E}_{dens}[\phi] = \frac{A^2}{2} \sin ^2(\omega),
\]
\[   
    \<\mathcal{P}_{x,dens}[\phi] \> = \frac{A^2}{2} \sin (\kappa) \sin (\omega).
\]
In both cases, the classical limit where $\omega$ and $\kappa$ are small recovers the expressions from Example~\ref{ex:ClStressEnergy} in the Euclidean case as a useful check on our reasoning. As expected, we found that both the energy and momentum densities are bounded as 
\begin{equation}
    \label{eq:RealScalarEPdens}
    \<\mathcal{E}_{dens}[\phi] \> \in \left[0,\frac{A^2}{2}\right],
    \quad \quad
    \<\mathcal{P}_{x,dens}[\phi] \> \in \left[-\frac{A^2}{2},\frac{A^2}{2}\right],
\end{equation}
due to lattice effects. These densities and their comparison with the classical expressions are also shown in Figure~\ref{fig:DispEnMomEuc}, where we have scaled these quantities by $1/A^2$. The symmetry between $m$ and $m' = \sqrt{8-m^2}$ is also present in the energy and momentum density, where the plots for $m= 0.9,1.7$ centred around $\kappa$ are mirrored by plots for $m'=2.25, 2.67$ now centered around $\kappa = \pi$. We did not take exactly $\sqrt{8-m^2}$ so that the curves would not be exactly on top of each other in the energy-momentum plot.

\begin{figure}[h!]
    \centering
    \includegraphics[width=1\textwidth]{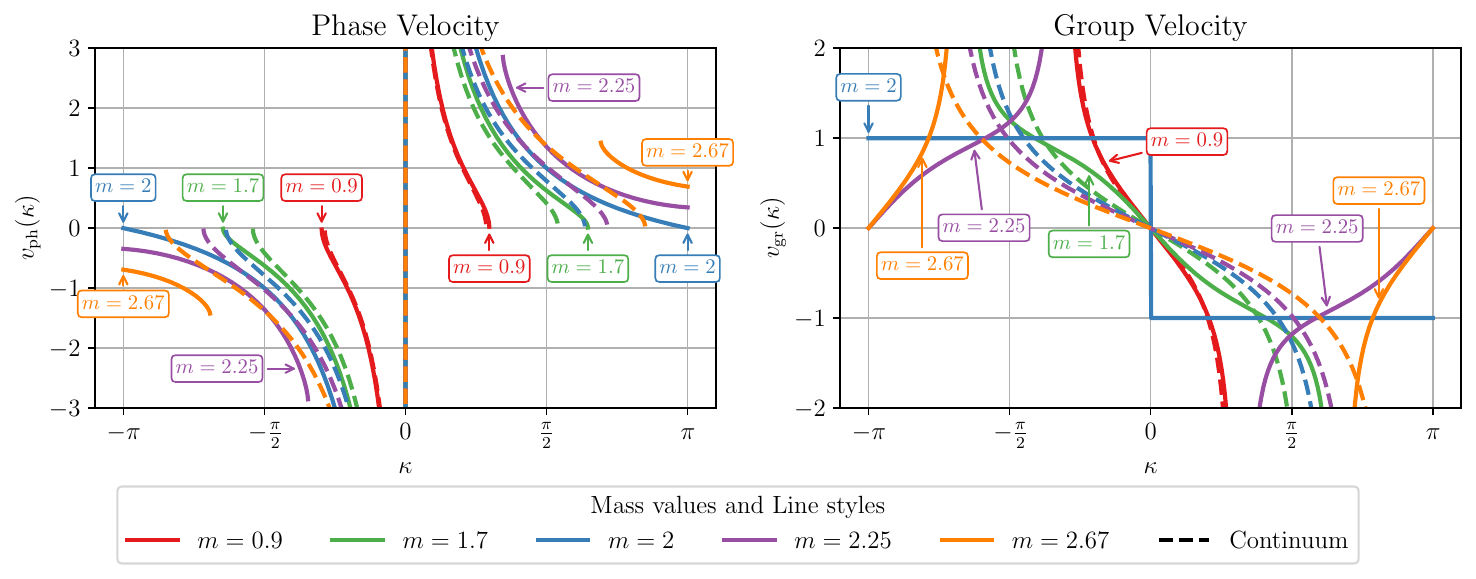}
    \caption{Plots for the phase and group velocity for different masses as in Figure~\ref{fig:DispEnMomEuc}.}
    \label{fig:VelEuc}
\end{figure}

As in Example~\ref{ex:ClStressEnergy}, we find a relation between the energy and momentum densities through the dispersion relation~\eqref{eq:DispRelEuc}. We start by writing $\cos(\omega) = s_\omega \sqrt{1-\sin^2(\omega)}$ and by $\cos(\kappa) = s_\kappa \sqrt{1-\sin^2(\kappa)}$ for $s_\omega, \, s_\kappa\in\{\pm 1\}$. Using the formulas for the energy and momentum density above to write $\sin^2(\omega)$ and $\sin^2(\kappa)$ in terms of $\<\mathcal{E}_{dens}[\phi] \>$, $\<\mathcal{P}_{x,dens}[\phi] \>$, and  reordering the terms results in the energy and momentum densities relation
\begin{align}
    \label{eq:EPrelationEuc}
    \<\mathcal{P}_{x,dens}[\phi]\>^2 &= \frac{A^2}{2} \<\mathcal{E}_{dens}[\phi]\> \left(1 - \left(2 - \frac{m^2}{2} - s_\omega \sqrt{1-\frac{2 \<\mathcal{E}_{dens}[\phi]\>}{A^2}}\right)^2\right)\\
    &\approx \left(- 1 +  \frac{m^2}{2}\right) \<\mathcal{E}_{dens}[\phi]\>^2 + \frac{A^2}{2} \<\mathcal{E}_{dens}[\phi]\> m^2 \left(1 - \frac{m^2}{4}\right),\nonumber
\end{align}
where the approximation was made for $\omega \ll 1$ (i.e. $\<\mathcal{E}_{dens}[\phi]\> \ll 1$) much smaller than the lattice spacing, and thus $s_\omega = 1$ since $\cos(\omega) > 0$ in this case. This clearly deforms the classical relation $\<\mathcal{E}_{dens}[\phi]\>^2 + \<\mathcal{P}_{x,dens}[\phi]\>^2 = (A^2/2)\<\mathcal{E}_{dens}[\phi]\>^2\,m^2$ from Example~\ref{ex:ClStressEnergy} in the Euclidean case. 

Following the strategy in Example~\ref{ex:ClStressEnergy}, we set $A = \sqrt{2/\<\mathcal{E}_{dens}[\phi] \>}$ to have a field $\phi$ which has a `density of one particle per unit volume'. In this case, the dispersion relation~\eqref{eq:DispRelEuc} reads
\[
    \frac{m^2}{2} = 2 - s_\omega \sqrt{1-\<\mathcal{E}_{dens}[\phi] \>^2}  - s_\kappa \sqrt{1-\<\mathcal{P}_{x,dens}[\phi] \>^2},
\]
with energy density solutions 
\begin{equation}
    \label{eq:EPrelEuc}
    \<\mathcal{E}_{dens}[\phi]\>_{s_\kappa = \pm 1} = \sqrt{1-\left(2-\frac{m^2}{2} - s_\kappa \sqrt{1-\<\mathcal{P}_{x,dens}[\phi] \>^2}\right)^2}.
\end{equation}
We focus on the positive energy solutions but there is a mirror image of negative ones for the other choice of sign of the outer square root. This relation results in the energy against momentum plot in Figure~\ref{fig:DispEnMomEuc}, where the $s_\kappa = +1$ (small $\kappa \ll 1$) solution is comparable to the classical case $\<\mathcal{E}_{x,dens}[\phi] \> = \sqrt{m^2 - \<\mathcal{P}_{x,dens}[\phi] \>^2}$ shown dashed. The solutions~\eqref{eq:EPrelEuc} again display the $m, \, m' = \sqrt{8 - m^2}$ symmetry in that $\<\mathcal{E}_{dens}[\phi]\>_\pm$ for $m$ correspond to $\<\mathcal{E}_{dens}[\phi]\>_\mp$ for $m'$.

\begin{figure}[h!]
    \centering
    \includegraphics[width=1.05\textwidth]{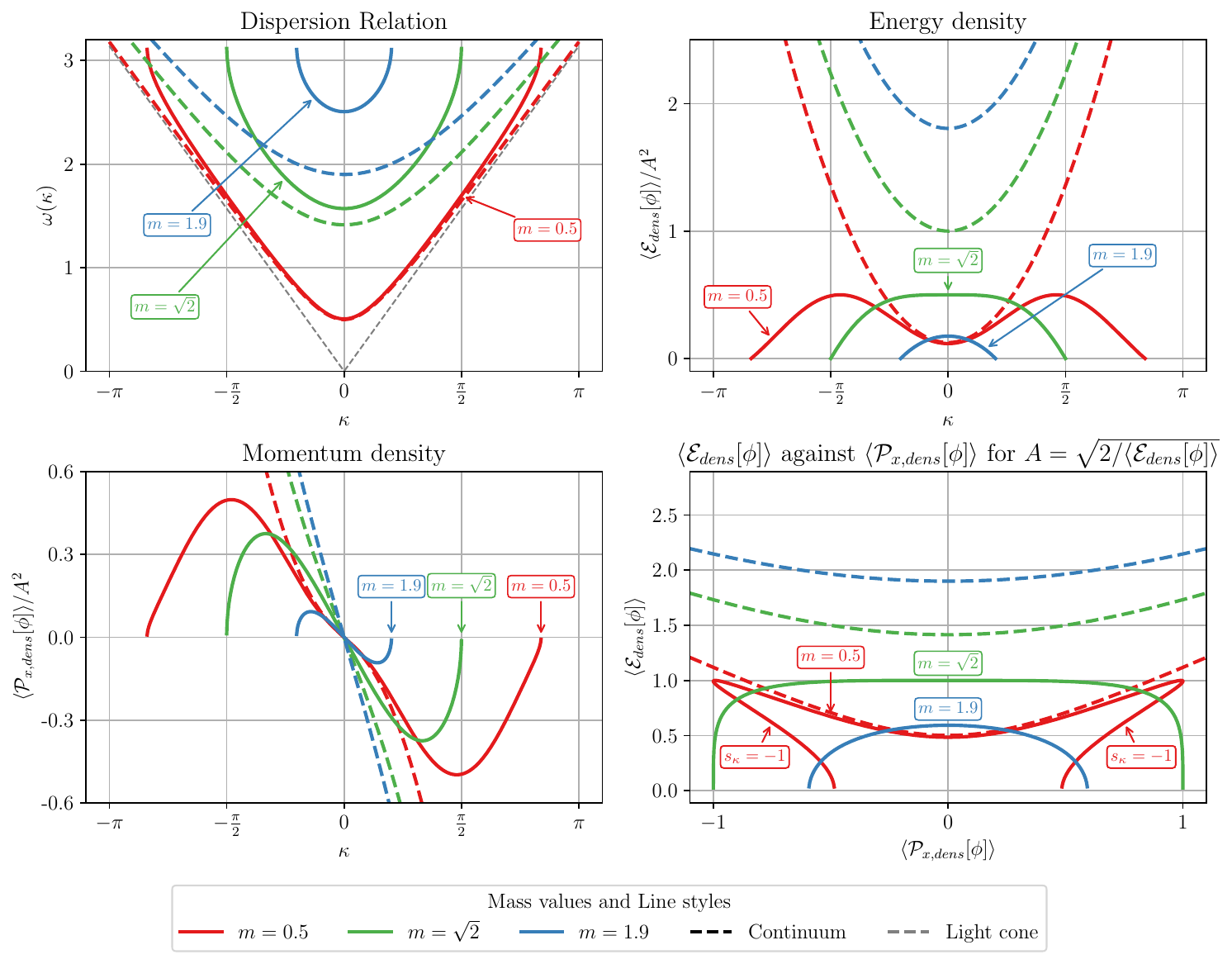}
    \caption{Dispersion relation, energy density and momentum densities in the Minkowski case for the masses $m= 0.5, \sqrt{2}, 1.9$. At lower masses, $\kappa=0$ is a local minimum of the energy density, while it is a local maximum for waves with higher masses. In the lower right, the energy density is plotted against the momentum density following equation~\eqref{eq:EPrelMin}.}
    \label{fig:DispEnMomMin}
\end{figure}

For the (1+1)-dimensional Minkowski case we now consider the plane wave solutions $\phi(t,x) = A \cos(\omega t - \kappa x)$ of the Klein-Gordon equation. In this case, the dispersion relation reads instead
\begin{equation}
    \label{eq:DispRelMin}
    m^2 = 2(\cos(\kappa)-\cos(\omega))
\end{equation}
as also found in~\cite{GrD}. This recovers the classical limit $m^2 = \omega^2-\kappa^2$ when $\omega,\kappa \ll 1$, i.e.   much smaller than the lattice spacing. We find that in this case the mass is bounded as $m < 2$ for physical states, and that the frequency behaves as
\[
    \omega(\kappa) = \arccos\left(\cos(\kappa)-{m^2\over 2}\right),
\]
with now the wavelength having to satisfy $\cos(\kappa) \in \left[-1+{m^2\over 2},1\right]$. Contrary to the Euclidean case, we find that massless modes can propagate with any wavelength $\kappa$, while the only allowed mode at $m=2$ is the constant one with $\kappa = 0$. The dispersion relation is illustrated in Figure~\ref{fig:DispEnMomMin} for $m = 0.5, \sqrt{2}, 1.9$, where it is also compared with the classical counter part. The phase velocity is again given by $v_{ph} = \omega(\kappa) / \kappa$ whereas the group velocity now reads
\[
    v_{gr} = \frac{\del \omega(\kappa)}{\del \kappa} = \frac{\sin (\kappa)}{\sqrt{1-\left(\cos (\kappa)-\frac{m^2}{2}\right)^2}},
\]
as illustrated in Figure~\ref{fig:VelMin} for the same values of $m^2$ as before, where they are also compared with the classical counter part.

The analysis of the energy and momentum densities goes as in the Euclidean case, where we now find
\begin{align*}
    \mathcal{E}_{dens}[\phi] &= \frac{A^2}{2} \left(\sin ^2(\omega)-\sin ^2(\kappa) \cos (2 t \omega-2 \kappa x)\right), \\
    \mathcal{P}_{x,dens}[\phi] &= -A^2\sin (\kappa) \sin (\omega) \sin ^2\left(\left(t-\frac{1}{2}\right) \omega-\frac{1}{2} \kappa (2 x+1)\right),
\end{align*}
and the averages 
\[
    \<\mathcal{E}_{dens}[\phi] \> = \frac{A^2}{2} \sin ^2(\omega),
\quad \quad 
    \<\mathcal{P}_{x,dens}[\phi] \> = - \frac{A^2}{2} \sin (\kappa) \sin (\omega),
\]
as before, here with a minus in the momentum density due to the metric signature. As expected, these are time independent and tend to the continuum values for small $\omega,\kappa$. They are also plotted in Figure~\ref{fig:DispEnMomMin} over the square amplitude $A^2$. We see that the profile of the energy density changes depending on whether $m< \sqrt{2}$, $m=\sqrt{2}$ or $m>\sqrt{2}$. In the first case, there are two local maxima and 3 local minima, one at $\kappa = 0$ and two at the edges of the domain. For $m>\sqrt{2}$, the point at $\kappa = 0$ is a local maximum. Comparing these with the plot of the momentum density, we find that these energy minima, either at $\kappa = 0$ or at the edges of the $\kappa$-domain, always correspond to plane waves with zero momentum density.

\begin{figure}
    \centering
    \includegraphics[width=1\textwidth]{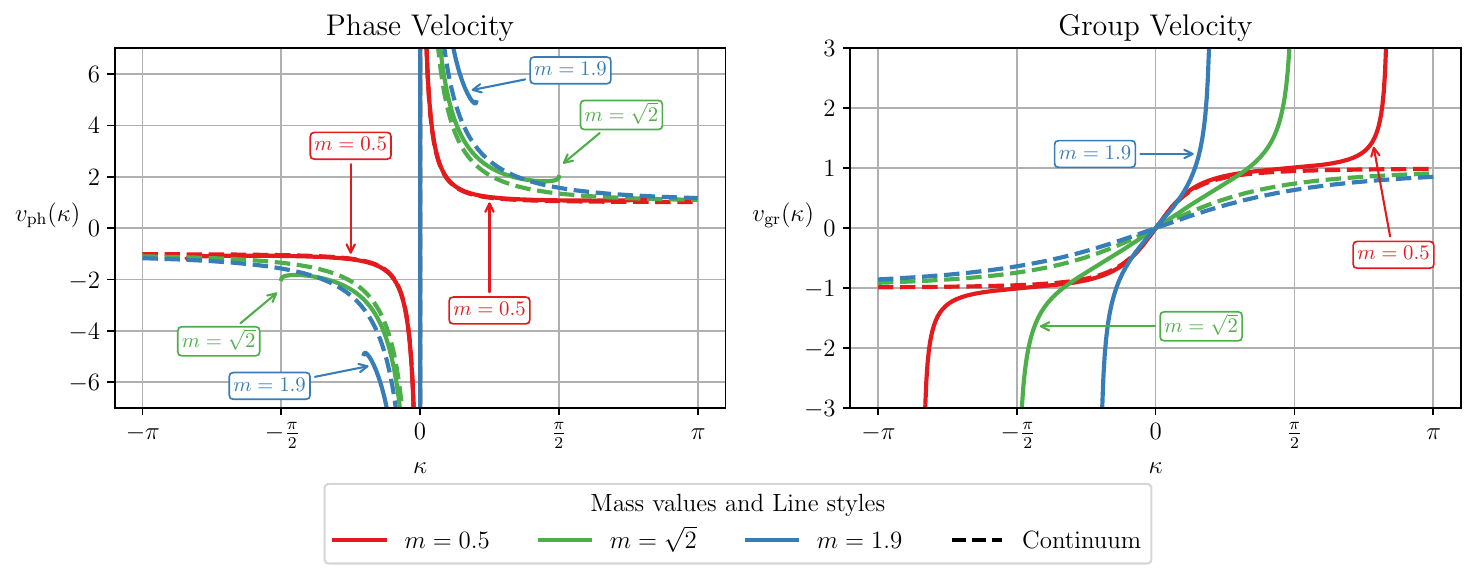}
    \caption{Phase and group velocities in the Minkowski case for $m= 0.5, \sqrt{2}, 1.9$ and comparison with their  classical counterparts shown dashed.}
    \label{fig:VelMin}
\end{figure}

As in the Euclidean case, we can relate the energy and momentum density using the dispersion relation~\eqref{eq:DispRelMin}. In this case we find
\begin{align}
    \label{eq:EPrelationMin}
    \<\mathcal{P}_{x,dens}\>^2 &= \frac{A^2}{2} \<\mathcal{E}_{dens}\> \left(1- \left(\frac{m^2}{2} + s_\omega \sqrt{1-\frac{2\<\mathcal{E}_{dens}\>}{A^2}}\right)^2\right)\\
    &\approx \left(1+\frac{m^2}{2}\right)\<\mathcal{E}_{dens}\>^2 -\left(1+{m^2 \over 4}\right) \frac{A^2}{2}\<\mathcal{E}_{dens}\> \, m^2, \nonumber
\end{align}
where the approximation is made for $\omega \ll 1$, which results in a deformation of the classical relation $\<\mathcal{E}_{dens}\>^2  - \<\mathcal{P}_{x,dens}\>^2 =  (A^2/2)\<\mathcal{E}_{dens}\> \,  m^2 $ from Example~\ref{ex:ClStressEnergy} for the Minkowski case. We can again set $A = \sqrt{2/\<\mathcal{E}_{dens}\>}$, and following similar steps to the Euclidean case, we find the energy density solutions 
\begin{equation}
    \label{eq:EPrelMin}
    \<\mathcal{E}_{dens}[\phi] \>_{s_\kappa = \pm 1} = \sqrt{1 - \left(\frac{m^2}{2} - s_\kappa \sqrt{1-\<\mathcal{P}_{x,dens}[\phi] \>^2} \right)^2}
\end{equation}
as illustrated in Figure~\ref{fig:DispEnMomMin} against their classical counterparts. We again focus on the positive energy solutions but there is a mirror image of negative energy ones. For small $\kappa$, the relevant solution that deforms the classical case is with $s_\kappa =1$ as in the introduction. In this case, we find that the configuration with vanishing momentum density $\<\mathcal{P}_{x,dens}[\phi] \> = 0$ corresponds to a local minima in the energy for $m<\sqrt{2}$ which is global for $m\le 1$, and a global maximum for $m \ge \sqrt{2}$. Note that the plots also show the $s_\kappa=-1$ solutions which are typically of lower energy and connect through zero to the negative energy solutions. These modes do not have a continuum counterpart and it is unclear as to their physical role. It is interesting that even the $s_\kappa=+1$ solutions have lowest energy at nonzero spatial momentum when $1< m<\sqrt{2}$, namely at $|\<\mathcal{P}_{x,dens}[\phi] \>| = 1$.

\section{Lagrangians on lattices with background metrics and gauge fields}\label{secO}

This section is dedicated to the computation of the EL equations and conserved charges for complex fields with a background non-Euclidean metric and/or a $U(1)$-gauge field to which the field couples. We still work on the base being an Abelian group $X$, but some results can be applied to any digraph using the quantum Riemannian geometry (QRG) formalism~\cite{BegMa} and gauge theory as in~\cite{MaSim2}. We already explained at the start of Section~\ref{secZm} how we work with the differential exterior algebra $(\Omega,\extd)$ on  $A=C(X)$ corresponding to the Cayley graph induced by a set $\CC$ of generators. We only use the group structure to provide convenient coordinates and the left invariant basis 1-forms $\{e^a\}$. 

Next, in QRG  a metric is an element $g \in \Omega^1 \tens_A\Omega^1$, with inverse metric a bimodule inner product $(\,,\,)\colon \Omega^1 \tens_A \Omega^1 \to A$. These constructions are inverse in the sense that
\[
    ((\omega,\,)\tens_A \id )g = \omega = (\id \tens_A (\, ,\omega))g,
\]
 for all $\omega\in\Omega^1$ with further reality conditions $\mathrm{flip}(*\tens *)g = g$ and $(\omega,\eta)^* = (\eta^*,\omega^*)$, where $*$ is required to extend in such a way as to commute with $\extd$ (in our case by $(e^a)^*=-e^{a^{-1}}$). Working with the left-invariant basis, we write
\[
    g = \sum_{a\in \CC} g_a e^a \tens_{C(X)} e^{a^{-1}}, \qquad 
    (e^a,e^b) = \lambda^a \delta^{a,b^{-1}},\quad \lambda^a = {1\over R_{a}g_{a^{-1}}}
\]
for non-zero real functions $g_a,\lambda^a \in C(X)$ related as shown. The lattice picture of these metric coefficients is that  $ g_{x\to xa}=g_a(x) $ is the `square length' assigned to an arrow $x\to xa$, see~\cite{MaSim2} where these notations were used for lattice gauge theory. 

Next, a bimodule connection on $\Omega^1$ in QRG is a map $\nabla\colon \Omega^1 \to \Omega^1 \tens_A \Omega^1$ together with a `generalised braiding' bimodule map $\sigma\colon \Omega^1 \tens_A \Omega^1\to \Omega^1 \tens_A \Omega^1$ obeying the left and right Leibniz rules~\cite{DVM}
\begin{equation}
    \label{eq:ConnInnerCalc}
    \nabla(\phi \omega) = \extd \phi \tens_A \omega + \phi \nabla\omega,\qquad
    \nabla(\omega\phi ) = (\nabla\omega)\phi + \sigma(\omega \tens_A\extd \phi).
\end{equation}
For an inner calculus with $\extd = [\theta, \,]$, (where $\theta = \sum_{a\in \CC} e^a$ in our case) all connections are of the form~\cite{Ma:par,BegMa}
\[
    \nabla \omega=\theta\tens_A \omega-\sigma(\omega\tens_A\theta)+\beta(\omega), 
\]
where $\sigma$ and a further bimodule map $\beta\colon\Omega^1\to \Omega^1\tens_A\Omega^1$ can be chosen freely. In our case,  being bimodule maps requires that $\sigma$ and $\beta$ have the form
\[
    \sigma(e^a\tens_A e^b) = \sum_{\substack{c\in \CC\colon \\ c^{-1}ab\in \CC}}  \sigma^{ab}{}_{c} \, e^c \tens_A e^{c^{-1}ab},\quad     
    \beta(e^a) = \sum_{\substack{b\in \CC\colon \\b^{-1}a\in \CC}}  \beta^{a}{}_{b} \, e^b \tens_A e^{b^{-1}a},
\]
for some functions $\sigma^{ab}{}_{c}, \beta^a{}_b \in C(X)$.  Given such a connection, a natural Laplacian $\Delta\colon A \to A$ (as used above for the flat metric) is given by $\Delta\coloneqq -{1\over 2}(\,,\,)\nabla \extd$ and obeys the product rule~\cite[Chap. 8]{BegMa}
\begin{equation}
    \label{eq:LaplaceProd}
    \Delta(\phi \psi) = (\Delta\phi) \psi + \phi (\Delta\psi) -\frac{1}{2} (\id + \sigma)(\extd \phi + \extd \psi)
\end{equation}
for $\phi,\psi \in A$. Here  $\Delta=\delta\extd$ where the codifferential or divergence of a 1-form is $\delta =-{1\over 2} (\,,\,) \nabla$. In both formulae, we have introduced a factor $-{1\over 2}$ compared to the canonical QRG normalisations in order to match physics conventions. An  integral $\int\colon A \to \C$ (in our case given by sum over $X$ with a measure $\mu$) is said to be divergence compatible if $\int \circ \delta = 0$. 

\subsection{Scalar field theory on $X$ with a generic metric} \label{secscalar}

We start by considering complex scalar field theory on an Abelian group $X$ endowed with a generic metric. Furthermore we choose an integration measure measure $\mu \in C(X)$ and will fix its properties as needed. In analogy to the Euclidean case, the action functional is defined as
\begin{equation*}
    S[\phi] = -{1\over 2}\int_X  \left({1\over 2}(\extd \phi,\extd \phi) + m^2 \phi^2\right)\mu \Vol 
    = {1\over 2}\int_X \left(\sum_{a\in\CC} {1\over 2}\lambda^a (\del_a \phi)^2 - m^2 \phi^2 \right)\mu\Vol
\end{equation*}
which gives the Lagrangian functional
\[
    L ={\mu\over 2} \left(\sum_{a\in\CC} {1\over 2} \lambda^a u^2_a - m^2 u^2\right).
\]
The Euler-Lagrange form and boundary form are then
\[
    EL = \left(\sum_{a\in\CC} \frac{1}{2} D_{a^{-1}} (\mu \lambda^a u_{a}) -\mu m^2 u \right) \extd_V  u \wedge \Vol,
    \qquad
    \Theta
    = \frac{1}{2} \sum_{a\in\CC} R_{a^{-1}}\left(\mu \lambda^a u_a\right)\extd_V u \wedge \Vol_a,
\]
with the equations of motion for $\phi$ therefore taking the form 
\begin{equation}
    \label{eq:eomfreescalarfield}
    {1\over{2\mu}} \sum_{a\in\CC}\del_{a^{-1}} (\mu \lambda^a \del_{a}\phi) - m^2 \phi = 0.
\end{equation}
Using the finite difference Leibniz rule and $\del_a \del_{a^{-1}} = -\del_a -\del_{a^{-1}}$, we write  this equivalently as 
\begin{equation}
    \label{eq:LaplaceMetric}
    {1\over 2}\sum_{a\in \CC}\left(\lambda^a + {R_{a}\left(\mu \lambda^{a^{-1}}\right)\over \mu}\right) \del_a \phi + m^2\phi=0.
\end{equation}
We show now that this indeed is the expected Klein-Gordon equation for the geometric Laplacian when the measure is divergence compatible.

\begin{proposition}
\label{prop:geoLap}
On an Abelian group $X$ with calculus given by $\CC\subseteq X\backslash\{e\}$ containing inverses: 

(1) the codifferential $\delta$ and geometric Laplacian given by a bimodule connection $\nabla$ do not depend on the $\beta$ part of the connection, i.e. only on the braiding $\sigma$. 

(2) In this case,   
integration with measure $\mu$ is divergence-compatible w.r.t. $\delta$  iff 
\[ R_a\left(\mu \lambda^{a^{-1}} \right)= \mu \sum_{b\in \CC}   \lambda^b \sigma^{aa^{-1}}{}_{b}\]
holds for all  $a$. 

(3) In this case, the Euler-Lagrange equation \eqref{eq:LaplaceMetric} is the Klein-Gordon equation $(\Delta+m^2)\phi=0$ for the  geometric Laplacian associated to the metric and connection. \end{proposition}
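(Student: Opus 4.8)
The plan is to treat the three parts in sequence, since each feeds into the next. For part~(1), I would unwind the definitions $\delta = -\tfrac12(\,,\,)\nabla$ and $\Delta = \delta\extd$ using the inner expression $\nabla\omega = \theta\tens_A\omega - \sigma(\omega\tens_A\theta) + \beta(\omega)$. The point is that $(\,,\,)$ applied to the $\beta(\omega)$ term must be shown to vanish: $\beta(e^a) = \sum_b \beta^a{}_b\, e^b\tens_A e^{b^{-1}a}$, and $(e^b, e^{b^{-1}a}) = \lambda^b\delta^{b, a^{-1}b}$, which forces $b^{-1}b = a^{-1}$, i.e. $a = e\notin\CC$, so the contraction is zero. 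Likewise $(\,,\,)$ applied to $\theta\tens_A\omega$ and to $\sigma(\omega\tens_A\theta)$ only depends on $\sigma$. Hence $\delta$, and therefore $\Delta$, depends only on $\sigma$.

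For part~(2), I would compute $\delta(\extd\phi)$ and $\delta(f e^a)$ explicitly in the basis. Write $\extd\phi = \sum_a(\del_a\phi)e^a$ and compute $\delta(f e^a) = -\tfrac12(\,,\,)\nabla(f e^a) = -\tfrac12(\,,\,)\big(\extd f\tens_A e^a + f\theta\tens_A e^a - f\sigma(e^a\tens_A\theta)\big)$. Using $(e^b,e^a) = \lambda^a\delta^{b,a^{-1}}$ and $\sigma(e^a\tens_A e^c) = \sum_{b}\sigma^{ac}{}_b e^b\tens_A e^{b^{-1}ac}$, the inner product picks out the term with $b^{-1}ac = b^{-1}$, i.e. $c = a^{-1}$, giving the $\sigma^{aa^{-1}}{}_b$ coefficients. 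The divergence-compatibility condition $\int\delta(fe^a)\mu\Vol = 0$ for all $f$ and $a$ then becomes a pointwise identity after shifting the summation index in the $\del f$ term (summation by parts using $\sum_X\mu\Vol\,\del_b = 0$ is not yet available — rather one moves $R$-operators across the sum $\sum_X$, which is translation-invariant). Collecting the coefficient of $f$ at each point yields exactly $R_a(\mu\lambda^{a^{-1}}) = \mu\sum_b\lambda^b\sigma^{aa^{-1}}{}_b$.

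For part~(3), I would compute $\Delta\phi = \delta\extd\phi = -\tfrac12(\,,\,)\nabla\big(\sum_a(\del_a\phi)e^a\big)$ directly in the basis, getting a first term $-\tfrac12\sum_a\lambda^a R_{a^{-1}}(\del_a\phi)$-type contribution from $\extd(\del_a\phi)\tens_A e^a$ together with a $\theta\tens_A e^a$ contribution, minus the $\sigma$ contribution $\tfrac12\sum_{a,b}(\del_a\phi)\lambda^b\sigma^{aa^{-1}}{}_b$ evaluated with the appropriate $R$-shift. I would then use $R_{a^{-1}}\del_a = -\del_{a^{-1}}$ and $\del_a\del_{a^{-1}} = -\del_a - \del_{a^{-1}}$ to bring this into the form $\tfrac12\sum_a(\lambda^a + \text{something})\del_a\phi$, and invoke the identity from part~(2) to rewrite the $\sigma$-term as $R_a(\mu\lambda^{a^{-1}})/\mu$, matching \eqref{eq:LaplaceMetric} term by term. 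Comparing with the equation of motion \eqref{eq:LaplaceMetric} then shows the Euler-Lagrange equation is precisely $(\Delta + m^2)\phi = 0$.

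The main obstacle I anticipate is bookkeeping in part~(2) and part~(3): keeping the $R_a$ and $R_{a^{-1}}$ shift operators in the right place when contracting $\sigma$ against the metric and when summing over $X$, since the inner product is a bimodule map and the order of function-multiplication matters in the noncommutative calculus. In particular one must be careful that the index $b$ in $\sigma^{aa^{-1}}{}_b$ ranges correctly (the constraint $c^{-1}ab\in\CC$ with $c = a$ forces the surviving terms) and that the measure $\mu$ sits on the correct side before summation by parts over the translation-invariant $\sum_X$. Once the shift operators are tracked correctly, parts~(1) and~(3) are essentially formal, and the content is entirely in the pointwise identity of part~(2).
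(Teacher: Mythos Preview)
Your proposal is correct and follows essentially the same route as the paper: contract $\beta(e^a)$ against the inverse metric to see it vanishes for part~(1), compute $\delta(e^a)$ via the inner form of $\nabla$ and impose $\int\circ\delta=0$ for part~(2), and expand $(\,,\,)\nabla\extd\phi$ in the basis using $\del_a\del_{a^{-1}}=-\del_a-\del_{a^{-1}}$ and the identity from~(2) for part~(3). If anything, your plan for~(2) is slightly more complete than the paper's writeup, since you test $\int\delta(fe^a)=0$ for all $f$ rather than only $f=1$, which is what actually forces the condition to hold pointwise.
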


\begin{proof}
(1) Explicitly, from the form of $\sigma$, $\beta$ and $\theta$,
 \[ \nabla e^a = \sum_{b\in \CC} \left(e^b \tens_{C(X)} e^a - \sum_{\substack{c\in \CC\colon \\ c^{-1}ab\in \CC}}  \sigma^{ab}{}_{c} \, e^c \tens_{C(X)} e^{c^{-1}ab} \right) + \sum_{\substack{b\in \CC\colon \\b^{-1}a\in \CC}}  \beta^{a}{}_{b} \, e^b \tens_{C(X)} e^{b^{-1}a}.\]
 Applying the inverse metric, we see that the last term vanishes due $(e^b,e^{b^{-1}a}) = 0$ as $b\neq e$. Hence $\beta$ does not contribute to $\delta=-{1\over 2}(\ ,\ )\nabla$ and therefore to $\Delta$.
 
 (2) From the inner part of $\nabla$, we have 
 \[
    \delta e^a = -{1\over 2}\left(\lambda^{a^{-1}} - \sum_{b\in \CC}  \lambda^b\sigma^{aa^{-1}}{}_{b}\right)
 \]
 so that with measure $\mu$, 
 \begin{align*} -2\int \delta(e^a)&= \int_X \left(\lambda^{a^{-1}} - \sum_{b\in \CC}  \lambda^b\sigma^{aa^{-1}}{}_{b}\right)\mu= \sum_{g\in X} \left(\mu(g) \lambda_{g\to ga^{-1}} - \sum_{b\in \CC}  \mu(g) \lambda_{g\to gb} \sigma^{ab}{}_{c}(g)\right)\\
&= \sum_{g\in X} \left(\mu(ga) \lambda_{ga\to g} - \sum_{b\in \CC}  \mu(g) \lambda_{g\to gb} \sigma^{aa^{-1}}{}_{b}(g)\right)
 \end{align*}
 for all $g$. Vanishing of this gives the condition stated.

 (3) Using the formula for $\delta$, we compute
\begin{align*} 
    (\ ,\ )\nabla\extd \phi 
    &= \sum_{a\in \CC}(\ ,\ )\nabla(\del_a \phi e^a)
    = \sum_a\left((\extd \del_a \phi, e^a) + \del_a \phi \delta e^a\right)\\
    &= \sum_{a\in \CC}\left( \lambda^a \del_a \del_{a^{-1}}\phi + \left(\lambda^{a^{-1}} - \sum_{b\in \CC} \lambda^b \sigma^{aa}{}_b\right)\del_a \phi\right)\\
    &= - \sum_{a\in \CC}\left( \lambda^a + {R_a\left(\mu \lambda^{a^{-1}}\right)\over \mu} \right)\del_a \phi
\end{align*}
which recovers -2 times the first term of equation~\eqref{eq:LaplaceMetric}.
\end{proof}

For a connected calculus, the condition in part (2) uniquely determines $\mu$, if it exists, up to an overall constant from the metric coefficients $\lambda_a$ and the braiding $\sigma$. This is a graph version of the way the Riemannian measure of integration is determined by the metric in Riemannian geometry.

To end this section, we note that while we have not explicitly developed the variational calculus formalism and hence the Euler-Lagrange equations for a general graph, the above easily extrapolates to this. We consider the bundle $X\times \R \to X$ for $X$ the set of vertices and fields $\phi\in C(X)$.  For a general directed graph, $\Omega^1_X$ is spanned by the arrows $e^{x\to y}$ and has a specific module structures as in~\cite{BegMa} where we multiply from the left by the value at $x$ and from the right by the value at $y$ (on in the case of a group, the  left-invariant basis is $e^a = \sum_{g\in X} e^{g\to ga}$). The differential of a function is $\extd \phi=\sum_{x\to y}( f(y)-f(x))e^{x\to y}$ (which likewise implies the form we used when $X$ is a group).  The inverse metric takes the form~\cite{BegMa}
\[  (e^{x\to y}, e^{y'\to x'})= \lambda_{x\to y} \delta_{x,x'}\delta_{y,y'}\]
and the bimodule connections are again given by bimodule maps $\sigma,\beta$ now decomposed as 
\[
\sigma(e^{x\to y} \tens_{\C(X)} e^{y\to z}) = \sum_{y':x\to y'\to z}\sigma^{xz,y}{}_{y'} e^{x\to y'} \tens_{\C(X)} e^{y'\to z},
\]
\[
\beta(e^{x\to y})=\sum_{w: x\to w\to y}\beta^{xy}{}_{w} e^{x\to w}\tens_{\C(X)} e^{w\to y}.
\]
Similarly, the Euler-Lagrange equations of motion \eqref{eq:LaplaceMetric}, when evaluated at $x$, can immediately be written in the more general form
\begin{equation}
    \label{eq:LaplaceGraph}
{1\over 2}\sum_{y:x\to y}  \left(\lambda_{x\to y}+ \frac{\mu_y \lambda_{y\to x}}{\mu_x }\right) (\phi_y - \phi_x)+ m^2\phi_x, 
\end{equation}
for all $x$, which makes sense for any graph. We also have:
\begin{lemma} Equation~\eqref{eq:LaplaceGraph} extremises the action
\[ S[\phi]=-{1\over 2}\int_X\left({1\over 2} (\extd\phi,\extd\phi)+ m^2 \phi^2\right) \]
regarded as a quadratic function of the $|X|$ variables $\{\phi_x\}$ and understanding $\int_X$ as a sum over the vertices with weight $\mu$. \end{lemma}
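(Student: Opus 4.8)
The plan is to compute the variation of the action $S[\phi] = -\tfrac12\int_X(\tfrac12(\extd\phi,\extd\phi) + m^2\phi^2)$ directly as a finite-dimensional calculus problem, treating $S$ as a quadratic polynomial in the $|X|$ pointwise values $\{\phi_x\}$, and showing that the stationarity condition $\partial S/\partial\phi_x = 0$ coincides (up to an overall nonzero constant) with \eqref{eq:LaplaceGraph}. Since everything is a finite sum over vertices and arrows, this is elementary: no jet bundle machinery is needed, only careful bookkeeping of which terms involve a given $\phi_x$.

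First I would expand the kinetic term using $\extd\phi = \sum_{x\to y}(\phi_y-\phi_x)e^{x\to y}$ and the inverse metric $(e^{x\to y},e^{y'\to x'}) = \lambda_{x\to y}\delta_{x,x'}\delta_{y,y'}$, together with the convention that $\int_X$ is $\sum_x \mu_x(\cdot)$. A subtlety to handle carefully is where the weight $\mu$ sits: pairing $e^{x\to y}$ with $e^{y\to x}$ produces a function supported at $x$, so $(\extd\phi,\extd\phi) = \sum_{x\to y}\lambda_{x\to y}(\phi_y-\phi_x)(\phi_x-\phi_y)$ evaluated appropriately, and after applying $\int_X$ one gets $-\sum_{x}\sum_{y:x\to y}\mu_x\lambda_{x\to y}(\phi_y-\phi_x)^2$. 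Hence
\[
S[\phi] = \tfrac14\sum_{x}\sum_{y:x\to y}\mu_x\lambda_{x\to y}(\phi_y-\phi_x)^2 - \tfrac{m^2}{2}\sum_x\mu_x\phi_x^2,
\]
which is manifestly a quadratic polynomial in the $\phi_x$.

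Next I would differentiate with respect to a fixed $\phi_z$. The term $(\phi_y-\phi_x)^2$ contributes iff $x=z$ or $y=z$, so $\partial S/\partial\phi_z$ collects a sum over arrows $z\to y$ (with weight $\mu_z\lambda_{z\to y}$) and a sum over arrows $x\to z$ (with weight $\mu_x\lambda_{x\to z}$); relabelling the latter via $x\to z$, both contributions run over neighbours $y$ of $z$, giving
\[
\tfrac{\partial S}{\partial\phi_z} = \tfrac12\sum_{y:z\to y}\bigl(\mu_z\lambda_{z\to y} + \mu_y\lambda_{y\to z}\bigr)(\phi_z-\phi_y) - m^2\mu_z\phi_z.
\]
Dividing by $-\mu_z$ (nonzero since the measure is nowhere-vanishing) and renaming $z\mapsto x$ reproduces exactly \eqref{eq:LaplaceGraph}, so stationarity of $S$ in the $\phi_x$ is equivalent to \eqref{eq:LaplaceGraph} holding at every $x$.

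The only mildly delicate point — and the closest thing to an obstacle — is getting the placement of the measure $\mu$ and the orientation of the arrows consistent between the bimodule-inner-product conventions of \cite{BegMa} and the pointwise-sum definition of $\int_X$; once the identity $\int_X(\extd\phi,\extd\phi) = -\sum_{x\to y}\mu_x\lambda_{x\to y}(\phi_y-\phi_x)^2$ is pinned down, the rest is a one-line differentiation of a quadratic form. I would also remark that this matches, as it must, the group-case computation: specializing to $X$ a group with left-invariant basis recovers \eqref{eq:eomfreescalarfield}, so the lemma is just the statement that the Euler–Lagrange equation derived via the jet-bundle double complex agrees with naive extremization of the action viewed as a function of finitely many variables.
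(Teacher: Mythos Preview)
Your proof is correct and follows essentially the same approach as the paper: expand $S[\phi]$ explicitly as $\tfrac14\sum_{x\to y}\mu_x\lambda_{x\to y}(\phi_y-\phi_x)^2 - \tfrac{m^2}{2}\sum_x\mu_x\phi_x^2$, differentiate with respect to $\phi_z$ collecting the $x=z$ and $y=z$ contributions, and divide by $\mu_z$. The only difference is presentational---you make the relabelling step (using bidirectedness to merge the two sums over neighbours) explicit, whereas the paper leaves both sums separate and simply says the result amounts to \eqref{eq:LaplaceGraph} ``on a change of variables''.
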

\begin{proof} Unpacking the definitions, we obtain
\[ S[\phi]={1\over 4}\sum_{x\to y}\mu(x)\lambda_{x\to y}(\phi(x)-\phi(y))^2-{1\over 2}\sum_{x\in X}\mu(x) m^2 \phi(x)^2. \]
Then 
\[ {\del S\over\del \phi(z)}= {1\over 2}\sum_{y:z\to y}\mu(z)\lambda_{z\to y}(\phi(z)-\phi(y))+ {1\over 2}\sum_{x:x\to z}\mu(x)\lambda_{x\to z}(\phi(z)-\phi(x))-m^2\mu(z)\phi(z)\]
where we obtain terms involving $\phi(z)$ when $x=z$ or when $y=z$. Factoring out $\mu(z)\ne 0$, we see that ${\del S\over\del\phi(z)}=0$ amounts to~\eqref{eq:LaplaceGraph} on a change of variables. \end{proof}

This implies that in the group case, if we just wanted the EL equations then this naive method gives the same answer without the full variational calculus and jet formalism, at least for a free scalar field action. Then, in a similar way to Proposition~\ref{prop:geoLap}, we have: 

\begin{proposition}
    \label{prop:geoLapGraph}
In a bidirected graph $X$:

(1) the codifferential and geometric Laplacian given by a bimodule connection $\nabla$ do not depend on the $\beta$ part of the connection, i.e. only on the braiding $\sigma$. 

(2) In this case,   
integration with measure $\mu$ is divergence-compatible wrt $\delta=(\ ,\ )\nabla$, i.e. $\int\circ\delta=0$, iff 
\[ \mu_y \lambda^{y\to x} = \mu_x\sum_{y':x\to y'} \lambda^{x\to y'} \sigma^{xx,y}{}_{y'}\]
holds for all  $y\to x$. 

(3) In this case,~\eqref{eq:LaplaceGraph} is the Klein-Gordon equation for the geometric Laplacian associated to the graph metric.
\end{proposition}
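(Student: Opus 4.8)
The plan is to run the proof of Proposition~\ref{prop:geoLap} essentially verbatim, under the dictionary: group elements $\to$ vertices, the left-invariant basis $\{e^a\}$ $\to$ the edge basis $\{e^{x\to y}\}$, sums over $\CC$ $\to$ sums over the edges at a fixed vertex, and the group shift $R_a$ $\to$ ``evaluate at the neighbouring vertex'' (so that where the group condition carries $R_a(\mu\lambda^{a^{-1}})$, the graph condition will simply carry $\mu_y\lambda^{y\to x}$ with $y$ the relevant neighbour). For part~(1), I would expand $\nabla e^{x\to y}=\theta\tens_A e^{x\to y}-\sigma(e^{x\to y}\tens_A\theta)+\beta(e^{x\to y})$ with $\theta=\sum_{x\to y}e^{x\to y}$, insert the stated forms of $\sigma$ and $\beta$, and apply the inverse metric $(\,,\,)$. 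Every summand of $\beta(e^{x\to y})$ has the shape $e^{x\to w}\tens_A e^{w\to y}$, and $(e^{x\to w},e^{w\to y})=\lambda^{x\to w}\delta_{x,y}$ vanishes because $x\neq y$ for an edge of a loop-free graph. Hence $\delta=(\,,\,)\nabla$, and therefore $\Delta=\delta\extd$, are unchanged if $\beta$ is modified, i.e. depend on $\nabla$ only through $\sigma$.

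For part~(2), I would read off from part~(1) the explicit codifferential: the term $\theta\tens_A e^{x\to y}$ survives only through the pairing of $e^{y\to x}$ with $e^{x\to y}$, contributing $\lambda^{y\to x}$ localised (under the left $C(X)$-module structure of $\Omega^1\tens_A\Omega^1$) at the vertex $y$, while the $\sigma$-term contributes $-\sum_{y':x\to y'}\lambda^{x\to y'}\sigma^{xx,y}{}_{y'}$ localised at $x$. The only subtlety is this bookkeeping of supports, together with the composability constraints inside $\sigma$, and the use of bidirectedness of $X$ to ensure the reverse edges $y\to x$ and $y'\to x$ appearing in these pairings actually exist. Integrating against $\mu$ then gives $\int\delta e^{x\to y}=\mu_y\lambda^{y\to x}-\mu_x\sum_{y':x\to y'}\lambda^{x\to y'}\sigma^{xx,y}{}_{y'}$ (up to the overall QRG normalisation), and demanding this vanish for every edge is precisely the stated divergence-compatibility condition.

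For part~(3), I would compute $\Delta\phi=\delta\extd\phi$ with $\extd\phi=\sum_{x\to y}(\phi_y-\phi_x)e^{x\to y}$, using the left Leibniz rule for $\nabla$ and the formula for $\delta e^{x\to y}$ from parts~(1)--(2), exactly as in the group computation. As there, the terms regroup at each vertex $x$ into a piece with coefficient $\lambda^{x\to y}$ and a piece coming from $\delta e^{x\to y}$; the divergence-compatibility identity of part~(2) is exactly what converts the latter into the coefficient $\mu_y\lambda^{y\to x}/\mu_x$, collapsing the whole expression to $\tfrac12\sum_{y:x\to y}\bigl(\lambda^{x\to y}+\mu_y\lambda^{y\to x}/\mu_x\bigr)(\phi_y-\phi_x)$. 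Adding the mass term identifies $(\Delta+m^2)\phi=0$ with \eqref{eq:LaplaceGraph}, which by the preceding Lemma extremizes the stated action, so it deserves the name Klein--Gordon equation for the graph metric.

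The main obstacle is not conceptual but organisational: one must track carefully which vertex each term of $\nabla e^{x\to y}$, and hence of $\delta e^{x\to y}$, sits at under the bimodule structure, and must handle the nested index constraints of $\sigma$ (the requirements $x\to y'\to z$, and so on) so that they line up after the metric pairings; the loop-free and bidirected hypotheses are used precisely at the points where these constraints would otherwise be vacuous or inconsistent. Once this is set up, every line of the group proof transcribes.
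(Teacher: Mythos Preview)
Your proposal is correct and is precisely the approach the paper intends: the paper omits the proof entirely, stating only that ``it is analogous to the group case when this is expressed in graph terms,'' and your dictionary-based transcription of Proposition~\ref{prop:geoLap} is exactly that analogy carried out. Your tracking of where each term of $\delta e^{x\to y}$ is supported (at $y$ for the $\theta\tens e^{x\to y}$ piece, at $x$ for the $\sigma$-piece) is the right replacement for the shift-of-summation-variable step in the group proof, and your observations about where bidirectedness and loop-freeness are actually used are accurate.
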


The proof is omitted as it is analogous to the group case when this is expressed in graph terms.  The relation in (2) from Proposition~\ref{prop:geoLapGraph} is remarkably the same as found in~\cite{BegMa1} in the context of quantum geodesic flows on graphs. 

\subsection{Conserved charge associated to global $U(1)$ symmetry}

Another example of symmetries that we have not considered so far with the base a lattice or a discrete Abelian group are vertical continuous symmetries as in Example~\ref{ex:ClU1} in the continuum case. Here we can copy everything from the continuum, as the symmetry only acts on the vertical 1-forms which form a Grassmann algebra. We consider a complex scalar field on a discrete Abelian group $X$ with inverse metric given by $\lambda^a$, modelled by the bundle $X\times \C\to X$, meaning that the jet bundle will now have coordinates $\bar u,u$ and the respective derivatives. We choose the action
\begin{align*}
    S[\phi,\phi^*] = -{1\over 2}\int_X  \left({1\over 2}(\extd \phi^*,\extd \phi) +  m^2 |\phi|^2\right)\mu\Vol 
    = {1\over 2}\int_X\left(\sum_{a\in\CC} {1\over 2}\lambda^a (\del_a \phi^*)\del_a \phi -m^2 |\phi|^2 \right)\mu \Vol
\end{align*}
with the Lagrangian $L = \left(\sum_a {1\over4}\lambda^a \bar u_a u_a - {m^2\over2} \bar u u\right)$. This system has a global $U(1)$-symmetry $\phi \mapsto e^{\imath \varphi} \phi$, $\phi^* \mapsto e^{-\imath \varphi} \phi^*$, which can be implemented by setting $\iota_V \extd_V u_I = \imath \varphi u_I$ and $\iota_V \extd_V \bar u_I = -\imath \varphi \bar u_I$, with $\iota_H = 0$. The EL and boundary form are
\[
    EL = \left(\left(\sum_{a\in\CC} \frac{1}{4} D_{a^{-1}} (\mu \lambda^a u_{a}) -{\mu m^2\over 2} u \right) \extd_V  \bar u+ \left(\sum_{a\in\CC} \frac{1}{4} D_{a^{-1}} (\mu \lambda^a \bar u_{a}) -{\mu m^2\over 2} \bar u \right) \extd_V   u\right)\wedge \Vol,
\]
\[
    \Theta
    = -\frac{1}{4} \sum_{a\in\CC} R_{a^{-1}}\left(\mu \lambda^a\right) \left(\bar u_{a^{-1}} \extd_V u + u_{a^{-1}} \extd_V \bar u \right) \wedge \Vol_a,
\]
The EL form leads to the equations of motion \eqref{eq:eomfreescalarfield} for $\phi$ and $\phi^*$. Just as in Example~\ref{ex:ClU1} we have $\sigma = 0$, and the conserved Noether current is
\[
    j_{U(1)} =-\iota_V \Theta = \frac{\imath  \varphi }{4} \sum_{a\in\CC} R_{a^{-1}}\left(\mu \lambda^a\right) \left( \bar u_{a^{-1}} u -  u_{a^{-1}} \bar u\right) \Vol_a,
\]
and in terms of the fields:
\[
    j_{U(1)}[\phi,\phi^*] = \frac{\imath  \varphi }{4} \sum_{a\in\CC} R_{a^{-1}}\left(\mu \lambda^a\right) \left( (\del_{a^{-1}}\phi^*) \phi -  (\del_{a^{-1}}\phi) \phi^*\right) \Vol_a.
\]
To be thorough we check that this quantity is conserved explicitly:
\begin{align*}
    &\extd_X j_{U(1)}[\phi,\phi^*] = -\frac{\imath  \varphi }{4} \sum_{a\in \CC} \del_a \left( R_{a^{-1}}\left(\mu \lambda^a \del_a \phi^*\right)  \phi -  R_{a^{-1}}\left(\mu \lambda^a \del_{a}\phi \right) \phi^*\right) \Vol\\
    &= -\frac{\imath  \varphi }{4} \sum_{a\in \CC} \left( - \del_{a^{-1}}\left(\mu \lambda^a \del_a \phi^*\right)  R_a \phi + R_{a^{-1}}\left(\mu \lambda^a \del_a \phi^*\right) \del_a \phi + \del_{a^{-1}}\left(\mu \lambda^a \del_{a}\phi \right) R_{a}\phi^* - R_{a^{-1}}\left(\mu \lambda^a \del_{a}\phi \right) \del_a\phi^*\right)\\
    &= -\frac{\imath  \varphi }{4} \sum_{a\in \CC} \left( - 2\mu m^2 \phi^*  R_a \phi + \left(\mu \lambda^a \del_a \phi^* + 2\mu m^2 \phi^* \right) \del_a \phi + 2 \mu m^2\phi R_{a}\phi^* - \left(\mu \lambda^a \del_{a}\phi + 2\mu m^2 \phi\right) \del_a\phi^*\right)\\
    &= 0
\end{align*}
where we have used $\del_a R_{a^{-1}} = - \del_{a^{-1}}$, that $R_{a^{-1}} = \del_a + \id$ and the EL equations for $\phi, \phi^*$.

If there is a preferred time direction then we write the lattice as $\Z \times \Z^{m-1}$. From Corollary~\ref{corQ}, there is a conserved charge $Q_{U(1)}[\phi,\phi^*]$ associated to $j_{U(1)}$ and given by 
\begin{align*}
    Q_{U(1)}[\phi,\phi^*] = {\imath \varphi \over 4}\int_{\Z^{m-1}} &\left[\left(R_{t^{-1}}(\mu \lambda^t) + \mu \lambda^{t^{-1}}\right) ((\del_{t^{-1}}\phi^*) \phi - (\del_{t^{-1}}\phi)\phi^*)\right.\\
    &\left.+\mu \lambda^{t^{-1}}((\del_{t^{-1}}\phi^*) \del_t \phi - (\del_{t^{-1}}\phi)\del_t\phi^*)\right]\Vol_{\Z^{m-1}}.
\end{align*}  
From this, we can read the $U(1)$-charge density $\rho$ and the current density as having the form
\begin{align*}
    J^x_{U(1)}[\phi,\phi^*] = {\imath \varphi \over 4} &\left[\left(R_{x^{-1}}(\mu \lambda^x) + \mu \lambda^{x^{-1}}\right) ((\del_{x^{-1}}\phi^*) \phi - (\del_{x^{-1}}\phi)\phi^*)\right.\\
    &\left.+\mu \lambda^{x^{-1}}((\del_{x^{-1}}\phi^*) \del_x \phi - (\del_{x^{-1}}\phi)\del_x\phi^*)\right].
\end{align*}  
In the continuum limit, we expect $\del_{t^{\pm 1}}\phi \mapsto \pm \del_0 \phi$, $\del_{x^{\pm 1}}\phi \mapsto \pm \del_i \phi$ and similar for $\phi^*$, that $R_{t^{-1}}\mapsto \id$ and $\lambda^{t^{\pm 1}} \mapsto \cg^{00}$, $\lambda^{x^{\pm 1}} \mapsto \cg^{ii}$. Therefore in both $Q_{U(1)}[\phi,\phi^*]$ and $J^x_{U(1)}[\phi,\phi^*]$, the second line vanishes and the first recovers the expression from Example~\ref{ex:ClU1} for $g$ the Euclidean or Minkowski metrics. 

For $\lambda$ the Euclidean or Minkowski metric with $\lambda^{t} = \lambda^{t^{-1}} = 1$, $\lambda^{x} = \lambda^{x^{-1}} = \pm 1$, we can redo the calculations for the conserved current and stress-energy tensor as in Proposition~\ref{prop:ConservedCurrentScalarField}, now taking the doubling of the fields into account to find
\[
    j = - \sum_{a,b\in \CC} \epsilon^b \left(\frac{\lambda^a}{4}\left(\bar u_{a^{-1}}+\frac{1}{2} \bar u_{b a^{-1}}\right)u_b+ \frac{\lambda^a}{4}\left(u_{a^{-1}}+\frac{1}{2}u_{b a^{-1}}\right) \bar u_b + \delta^{a}_b L \right) \Vol_a, 
\]
\[
    T_{ab}= -\left(\frac{1}{4} \bar u_a u_b + \frac{1}{4} u_a \bar u_b +\frac{1}{8} \bar u_{ab} u_b +\frac{1}{8} u_{ab} \bar u_b + \frac{\delta^{a^{-1}}_b}{\lambda^a} L \right).
\]
In the case of a plane wave solution $\phi(t,x) = A e^{-\imath (\omega t + \lambda^x \kappa x )}$ in (1+1)-dimensions these result in the energy and momentum density
\[
    \mathcal{E}_{dens}[\phi,\phi^*] = |A|^2 \sin ^2(\omega),
\quad \quad 
    \mathcal{P}_{x,dens}[\phi,\phi^*] = \lambda^x |A|^2 \sin (\kappa) \sin (\omega).
\]
with now $A\in \C$, which as in continuum field theory correspond to 2 times the spatial averages in equation~\eqref{eq:RealScalarEPdens} of the real scalar field, since we now have two fields $\phi,\phi^*$. Again, we find similar relations as in equations~\eqref{eq:EPrelationEuc},~\eqref{eq:EPrelEuc}, ~\eqref{eq:EPrelationMin} and ~\eqref{eq:EPrelMin}, but now for the energy and momentum densities directly instead of for their averages.

The $U(1)$-charge and current densities for these solutions read
\[
    \rho_{U(1)}[\phi,\phi^*] = |A|^2 \varphi \sin(\omega) \cos(\omega), 
    \quad \quad
    J^x_{U(1)}[\phi,\phi^*] = |A|^2 \varphi  \sin(\kappa) \cos(\kappa),
\]
which recover the classical values in Example~\ref{ex:ClU1} for $\omega,\kappa \ll 1$.

\subsection{Complex scalar field theory on $X$ with background gauge field}
\label{sec:U1symm}

We now extend the previous case to include a noncommutative $U(1)$ background gauge field $\alpha$. We take this in the simplest form where $\alpha$ is an anti-Hermitian 1-form $\alpha = \sum_{a\in \CC} \alpha_a e^a \in \Omega^1$. Here $\alpha^* = -\alpha$ means  $(\alpha_a)^* = R_a\alpha_{a^{-1}}$ for the coefficients in our left-invariant basis. The discussion in this section can be generalised to a $U(d)$-connection or a connection with values in $\C G^+$ for a gauge group $G$, i.e. to the setting discussed in~\cite{MaSim2}. 

The curvature of $\alpha$ is $F = \extd \alpha + \alpha \wedge \alpha \in \Omega^2$, where even in the Abelian case $\alpha\wedge \alpha \neq 0$ due to noncommutative effects. The gauge field induces a bimodule connection $\nabla_\alpha$ on $C(X)$, 
\[
    \nabla_\alpha\colon C(X) \to \Omega^1 \tens_{C(X)} C(X) \simeq \Omega^1, 
    \qquad \nabla_\alpha \phi = \extd \phi - \phi \alpha.
\]
Note that we looked at bimodule connections on $\Omega^1$, but there is an analogous definition for any $A$-bimodule $E$,  now with $\sigma_E: E\tens_A\Omega^1\to \Omega^1\tens_A E$. In our case, $A=E=C(X)$ and the braiding $\sigma_E$ is
\[ 
\sigma_\alpha\colon \Omega^1 \simeq C(X)\tens_{C(X)}  \Omega^1 \to \Omega^1 \tens_{C(X)} C(X) \simeq \Omega^1,
\qquad \sigma_\alpha(\omega) = \sum_{a\in \CC} (1+\alpha_a)\omega_a e^a,
\]
for $\omega = \sum_{a\in \CC} \omega_a e^a$. The braiding here was computed through the right Leibniz rule for $\phi, \psi$, namely 
\[
    \sigma_\alpha(\phi\extd \psi) = \nabla_\alpha (\phi\psi) - (\nabla_\alpha \phi)\psi = \sum_{a\in \CC} (1+\alpha_a)\phi(\del_a\psi) e^a.
\]

Next, gauge transformations are unitary elements $\gamma \in \C(X)$, $\gamma \gamma^* =\gamma^* \gamma = 1$, which act as
\[
\alpha^\gamma = \gamma \alpha \gamma^* + \gamma \extd \gamma^*,
\quad \quad 
F^\gamma = \gamma F\gamma^*,
\quad \quad
\phi^\gamma = \phi \gamma^*,
\quad \quad 
\nabla_{\alpha^\gamma} \phi^\gamma = (\nabla_\alpha \phi)\gamma^*
\] 
where $\nabla_{\alpha^\gamma}$ is the connection as constructed from $\alpha^\gamma$. The action functional in this case can be defined as 
\begin{align*}
S[\phi,\phi^*] &= \int_X \mu \left(-\frac{1}{4}((\nabla_A \phi)^*, (\nabla_A \phi)) - \frac{m^2}{2} |\phi|^2 \right)\\
&= \int_X \mu \left(-\frac{1}{4}\sum_{a\in \CC}\lambda^a(\nabla_\alpha \phi)_a^* R_{a}(\nabla_\alpha \phi)_{a^{-1}} - \frac{m^2}{2} |\phi|^2 \right)
\end{align*}
where we write $\nabla_\alpha \phi = (\nabla_\alpha \phi)_a e^a$ with $(\nabla_\alpha \phi)_a = \del_a \phi -\phi \alpha_a$. Similarly for $(\nabla_\alpha \phi)^*$, but here we have
\[
   (\nabla_\alpha \phi)^* = \extd \phi^* +\alpha \phi^* = \sum_{a\in \CC}(\del_a \phi^* + \alpha_a R_a \phi^*) e^a
\]
so that $(\nabla_\alpha \phi)^*_a = (\del_a \phi^* + \alpha_a R_a \phi^*)$.
This results in the Lagrangian
\[
    L(u,\bar u,u_a,\bar u_a) 
    = \mu \left(\sum_{a \in \CC}{\lambda^a\over 4} (\bar u_a + \alpha_a (\bar u + \bar u_a)) (u_{a} + R_a(\alpha_{a^{-1}}) (u + u_a))  - {m^2\over2} \bar u u \right)
\]
Computing the EL form and simplifying leads to 
\begin{align*}
    EL = -&\sum_{a\in\CC}\left({\mu m^2 \over 2}u + {\mu \over 4}\left(\lambda^a + {R_a(\mu \lambda^{a^{-1}})\over \mu} \right) \left(1+R_a \alpha_{a^{-1}}\right) u_a\right.\\
    + &\left.{\mu \over 4}\left(\lambda^{a^{-1}} R_{a^{-1}}\alpha_a + {R_a(\mu \lambda^{a^{-1}})\over \mu} \left(\alpha_a - \alpha_a R_a\alpha_{a^{-1}}\right) \right)u\right)\extd_V\bar u\wedge \Vol\\
    -&\sum_{a\in\CC}\left({\mu m^2 \over 2}\bar u + {\mu \over 4} \left(\lambda^a + {R_a(\mu \lambda^{a^{-1}})\over \mu} \right) \left(1+ \alpha_{a}\right) \bar u_a\right.\\
    + &\left.{\mu \over 4}\left(\lambda^{a^{-1}} \alpha_{a^{-1}} + {R_a(\mu \lambda^{a^{-1}})\over \mu} \left(R_a \alpha_{a^{-1}} -\alpha_{a} R_a \alpha_{a^{-1}}\right) \right)\bar u\right)\extd_V u\wedge \Vol
\end{align*}
which corresponds to the following EL equation for $\phi$
\begin{align}
    \label{eq:EOMphiconn}
    &{1 \over 2}\sum_{a\in\CC}\left(\left(\lambda^a + {R_a(\mu \lambda^{a^{-1}})\over \mu} \right) \left(1+R_a \alpha_{a^{-1}}\right) \del_a \phi\right.\\\nonumber
    &\quad +\left. \left(\lambda^{a^{-1}} R_{a^{-1}}\alpha_a + {R_a(\mu \lambda^{a^{-1}})\over \mu} \left(\alpha_a - \alpha_a R_a\alpha_{a^{-1}}\right) \right)\phi\right)+ m^2\phi=0
\end{align}
and similarly for $\phi^*$. Note tha this reduces to \eqref{eq:eomfreescalarfield} for $\alpha = 0$ as expected. The boundary form is
\begin{align*}
    \Theta = &-\sum_{a\in\CC}\left(R_{a^{-1}}\left({\mu \lambda^a \over 4}(1+\alpha_a)\right)(u_{a^{-1}}-\alpha_{a^{-1}}u) \extd_V \bar u \right.\\
    &+\left.R_{a^{-1}}\left({\mu \lambda^a \over 4}(1+R_a \alpha_{a^{-1}})\right)(u_{a^{-1}}-(R_{a^{-1}}\alpha_a) u) \extd_V u \right)\wedge \Vol_a
\end{align*}

We now obtain a geometric interpretation of along the same lines as Proposition~\ref{prop:geoLap}.  Given a bimodule connection $\nabla$ on $\Omega^1$ with braiding $\sigma$ on $\Omega^1$, the connection 1-form $\alpha$ induces a new connection on $\Omega^1$ (a gauged version of it), namely
\[
    \nabla_{\Omega^1,\alpha} \omega = \nabla \omega - \sigma(\omega \tens_{C(X)} \alpha),
    \quad \quad
    \sigma_{\Omega^1,\alpha}(\omega\tens_{C(X)} \eta) = \sigma(\omega \tens_{C(X)} \sigma_\alpha(\eta)).
\]
(This is the tensor product bimodule connection~\cite{BegMa} on $\Omega^1 \tens_{C(X)} E \simeq \Omega^1$ as constructed from $\nabla, \nabla_\alpha$.) The geometric Laplacian in this setting is defined as $\Delta_\alpha =-{1\over 2} (\, , \,) \nabla_{\Omega^1,\alpha} \nabla_\alpha$, which can be expanded to give
\[
\Delta_\alpha \phi = \Delta \phi +{1\over 2} (\,,\,)(\id + \sigma) (\extd \phi \tens_{C(X)} \alpha) - \phi \delta \alpha -{1\over 2} \phi (\,,\,)\sigma(\alpha \tens_{C(X)} \alpha).
\]
Computing this expression and assuming the same relation between the measure $\mu$, inverse metric $\lambda^a$ and braiding $\sigma$ as in Proposition~\ref{prop:geoLap}, we recover exactly the operator in \eqref{eq:EOMphiconn}, so that the latter becomes the covariantised Klein-Gordon equation $(\Delta_\alpha+m^2)\phi=0$. We have done this for $X$ an Abelian group but as with the end of Section~\ref{secscalar}, there is also a version on any graph.

\section{Concluding remarks}\label{secrem}

We have presented what we believe to be the first derivation of field equations and  exactly conserved Noether charges for classical mechanics and field theory on a lattice from an action. We viewed the lattice as an exact noncommutative geometry and used a theory of jet bundles from our previous work~\cite{MaSim1}, albeit an easy case as the base coordinate algebra remains commutative and only differential forms (given by arrows of the lattice) non-commute with functions. The resulting conserved quantities are not obvious even for a lattice line (but can be easily  checked in this case). Having a good foundation for classical field theory including conserved quantities is also an important step for the Hamiltonian form of quantum field theory as a complement to existing functional integral methods even on a lattice~\cite{Smi}. 

There are several further directions to be explored in further work. First of all, as suggested by Proposition~\ref{prop:geoLapGraph}, our methods should extend to any graph, not necessarily an Abelian group lattice (which is the case covered here). The explicit approach to jet bundles in~\cite{MaSim1} requires a suitable torsion free flat background connection $\nabla$, which can be solved even for some non-Abelian group Cayley graphs, but can also be avoided altogether when coming form more abstract methods as in the subsequent work~\cite{Flo}. It should also be straightforward to cover the hybrid case of $\R\times G$ where $\R$ is a continuous time and $G$ is a discrete group or graph. 

Another issue that came up already for the lattice is what exactly is the right notion of symmetry for Noether's theorem. The classical picture involves interior products by vector fields, which in noncommutative geometry is additional data (i.e. not determined by the exterior algebra alone) and as a result our conserved current involved an additional term needed for the conservation. The origin of this and the appropriate generalisation in terms of symmetries or `Killing vectors' of a curved quantum metric should be explored further. It would also be of interest to find the conserved stress tensor for gauge theory even on a lattice in the formalism of~\cite{MaSim2}, and eventually to find a natural Einstein tensor which is currently lacking in any generality. 

We also limited ourselves to trivial bundles. The jet formalism itself and the Anderson-Zuckerman approach~\cite{And,Zuc}  can include nontrivial bundles $E$ as (finitely generated projective) bimodules over the coordinate algebra $A$ and our approach on the lattice should extend to this with more care.  Finally, the current methodology (even for trivial bundles) works similarly for noncommutative $A$ provided there are nice properties for the exterior algebra, such as a global basis. The simplest case of $A$ noncommutative but with a central Grassmann algebra basis is straightforward and would apply to $A$ the fuzzy sphere, for example. Other examples where there is no problem to define the jet bundle~\cite{MaSim1} and where our methods should work would be to variational calculus and hence classical field theory on the Majid-Ruegg $\kappa$-Minkowski spacetime. 

Finally,  while we found the Anderson-Zuckerman approach particularly natural for noncommutative geometry, there are other variants and routes to physics in the case of a classical manifold, e.g.~\cite{For, MusHro} which may have aspects that extend to the noncommutative case. At the mathematical level, another approach to jet bundles is as Hopf algebroids, see~\cite{HanMa} where examples are obtained by cotwist.  Note that knowing the noncommutative analogue of sections of the jet bundle as in~\cite{MaSim1,Flo, HanMa} is not enough to know what is the noncommutative analogue of $C(J^\infty)$ and its differential calculus.  Also, jets are dual to differential operators, see~\cite{KraVer}, with the latter the natural coordinate-invariant notion of the Heisenberg algebra for a classical manifold. This could give another route to the physics possibly bypassing jets entirely. Here, a Hopf algebroid of differential operators on a possibly noncommutative algebra $A$ with suitable calculus $(\Omega,\extd)$ was found in~\cite{Gho}, and examples by twisting the classical algebra of differential operators in~\cite{Xu}.

\end{document}